\theoremstyle{definition} \newtheorem{defn}{Definition}[section]
\theoremstyle{definition} 
\theoremstyle{remark} 
\theoremstyle{plain} \newtheorem{thm}[defn]{Theorem}
		     \newtheorem*{theorem*}{Theorem}
\theoremstyle{plain} \newtheorem{cor}[defn]{Corollary}
\theoremstyle{plain} \newtheorem{prop}[defn]{Proposition}
\theoremstyle{plain} \newtheorem{prob}[defn]{Problem}
\theoremstyle{plain} \newtheorem{lem}[defn]{Lemma}
\theoremstyle{remark} \newtheorem{rem}[defn]{Remark}
\theoremstyle{plain} 
		      \newtheorem*{rmk*}{Remark}
\theoremstyle{plain} \newtheorem*{princ*}{Principle}
\definecolor{ffqqqq}{rgb}{1,0,0}
\definecolor{ududff}{rgb}{0.30196078431372547,0.30196078431372547,1}
\author{Mark Wilkinson\thanks{\emph{Correspondence Address} (\Letter): New Hall Block, Department of Physics and Mathematics, Clifton Campus, Nottingham Trent University, Nottingham, United Kingdom. \emph{E-mail Address} (@): \href{mark.wilkinson@ntu.ac.uk}{mark.wilkinson@ntu.ac.uk}.}}
\title{A Lie Algebra-Theoretic Approach to Characterisation of Collision Invariants of the Boltzmann Equation for General Convex Particles}
\date{}
\begin{document}
\maketitle
\begin{abstract}
\noindent By studying scattering Lie groups and their associated Lie algebras, we introduce a new method for the characterisation of collision invariants for physical scattering families associated to smooth, convex hard particles in the particular case that the collision invariant is of class $\mathscr{C}^{1}$. This work extends that of Saint-Raymond and Wilkinson (\emph{Communications on Pure and Applied Mathematics} (2018), 71(8), pp. 1494-1534), in which the authors characterise collision invariants only in the case of the so-called \emph{canonical} physical scattering family. Indeed, our method extends to the case of \emph{non-canonical} physical scattering, whose existence was reported in Wilkinson (\emph{Archive for Rational Mechanics and Analysis} (2020), 235(3), pp. 2055-2083). Moreover, our new method improves upon the work in Saint-Raymond and Wilkinson as we place no symmetry hypotheses on the underlying non-spherical particles which make up the gas under consideration. The techniques established in this paper also yield a new proof of the result of Boltzmann for collision invariants of class $\mathscr{C}^{1}$ in the classical case of hard spheres.
\end{abstract}
\section{Introduction}
The characterisation of collision invariants is a fundamental result in the theory of the Boltzmann equation. Knowledge of the structure of all collision invariants allows one to characterise all equilibria of the Boltzmann equation, and facilitates the analysis of both the linear and non-linear Boltzmann equations, amongst other aspects of their analysis. In the case of hard sphere particles, the problem has a long history: we refer the reader to the monograph of Cercignani, Illner and Pulvirenti (\cite{MR1307620}, Chapter 3.1) for details thereon. However, the study of the case of hard \emph{non-spherical} particles is much more recent and less well understood. We recall that a Lebesgue-measurable map $\phi:\mathbb{R}^{3}\rightarrow\mathbb{R}$ is said to be a \textbf{collision invariant} for physical hard sphere scattering if and only if it satisfies the functional equation
\begin{equation*}
\phi(v_{n}')+\phi(\overline{v}_{n}')=\phi(v)+\phi(\overline{v})    
\end{equation*}
for $\mathscr{L}^{3}$-almost every $v, \overline{v}\in\mathbb{R}^{3}$ and $\mathscr{H}$-a.e. $n\in\mathbb{S}^{2}$, where $v_{n}', \overline{v}_{n}'\in\mathbb{R}^{3}$ denote the $n$-dependent binary scattering velocity variables defined by
\begin{equation}\label{spherescat}
\left\{
\begin{array}{l}
v_{n}':=v-((v-\overline{v})\cdot n)n, \vspace{2mm}\\
\overline{v}_{n}':=\overline{v}+((v-\overline{v})\cdot n)n,
\end{array}
\right.
\end{equation}
while $\mathscr{L}^{3}$ and $\mathscr{H}$ denote the Lebesgue measure on $\mathbb{R}^{3}$ and the Hausdorff measure on $\mathbb{S}^{2}$, respectively. Without loss of generality, the unbarred velocity variables denote those of a sphere centred at the origin, whilst the barred velocity variables denote those of the other sphere at collision: see Figure \ref{fig:ballscoll} for an illustration. The unit vector $n\in\mathbb{S}^{2}$, which we call the \emph{collision parameter}, represents the direction of the vector connecting the centres of mass of the two congruent hard spheres at collision, specifically directed from the centre of the unbarred particle to that of the barred. It is well known that a collision invariant is necessarily of the shape
\begin{equation*}
\phi(v)=a+b\cdot v+c|v|^{2}    
\end{equation*}
for $\mathscr{L}^{3}$-a.e. $v\in\mathbb{R}^{3}$, for some constants $a, c\in\mathbb{R}$ and some constant $b\in\mathbb{R}^{3}$. On an intuitive level, a collision invariant is a linear combination of the \emph{mass}, the components of the \emph{linear momentum} and the \emph{kinetic energy} of a given hard sphere. The proof of this result has been achieved under various hypotheses on the regularity of the collision variant $\phi$ by Boltzmann \cite{boltzmann2012wissenschaftliche}, Gr\"{o}nwall \cite{MR1503514}, Cercignani \cite{MR1049048}, and Arkeryd \cite{MR0339666}, among others. A large variety of techniques spans the work of the aforementioned authors. However, techniques from group theory can be employed to help in unifying their efforts. Indeed, as was noted recently in Saint-Raymond and Wilkinson \cite{saint2018collision}, at the heart of the problem of characterisation of collision invariants lies the following problem of group theory:
\begin{prob}
Suppose $q\geq 2$. Given a strict subset $U\subset\mathrm{O}(q)$, find the topological closure of the matrix group generated by $U$.
\end{prob} 
For instance, in \cite{saint2018collision} for the case of hard spheres, $q=3$ and $U\subset\mathrm{O}(3)$ is the set of collision parameter-dependent reflection matrices given by
\begin{equation*}
U:=\{
I-2n\otimes n\in\mathrm{O}(3)\,:\,n\in\mathbb{S}^{2}
\},
\end{equation*}
whose closure in the sense above is the whole orthogonal group $O(3)$. Section \ref{hardsphereproof} below provides detail on this claim. In this article, following the work of \cite{saint2018collision}, we continue the study of collision invariants for the case of convex \emph{non-spherical} particles. We introduce a new method for the characterisation of collision invariants, one which is centred on finding the Lie algebra associated to so-called (minimal) \emph{scattering Lie groups}. In particular, in the new case of what we term \emph{non-canonical} scattering, we show that collision invariants $\varphi$ of class $\mathscr{C}^{1}$ are necessarily of the form
\begin{equation}\label{wordymaintheorem}
\begin{array}{c}
\varphi=\text{function of orientation}+\text{linear function of linear momentum} \vspace{2mm} \\
+\hspace{1mm} \text{linear function of kinetic energy}.
\end{array}
\end{equation}
We invite the reader to consult Theorem \ref{maintheorem} for a precise statement of our main result. Unlike the work in \cite{saint2018collision}, our results in this article do not impose any symmetry constraints on the underlying convex hard particles. Moreover, our results hold for all convex hard particles, not simply those which are strictly convex. In this article, we focus primarily on the case of convex hard particles in 2-dimensions to illustrate our method for the characterisation of collision invariants. We consider the case of \emph{all possible} physical scattering for 2- and 3-dimensional convex hard particles in future work.
\subsection{Non-spherical Particles and Non-uniqueness of Physical Scattering}
The case of physical hard sphere dynamics without rotation is significantly simpler than that of non-spherical particles, as there is only one way by which one can resolve a collision between two hard spheres so that momentum (both linear and angular) and kinetic energy are conserved. In particular, using the terminology of \cite{wilkinson2020non}, there is a \emph{unique physical scattering family} in the case of hard spheres without rotation. Let us review some elementary facts on the case of hard spheres, before we progress to the case of binary non-spherical particle systems. 
\subsubsection{Scattering of Spherical Particles}
\begin{figure}
    \centering
    \includegraphics[scale=0.45]{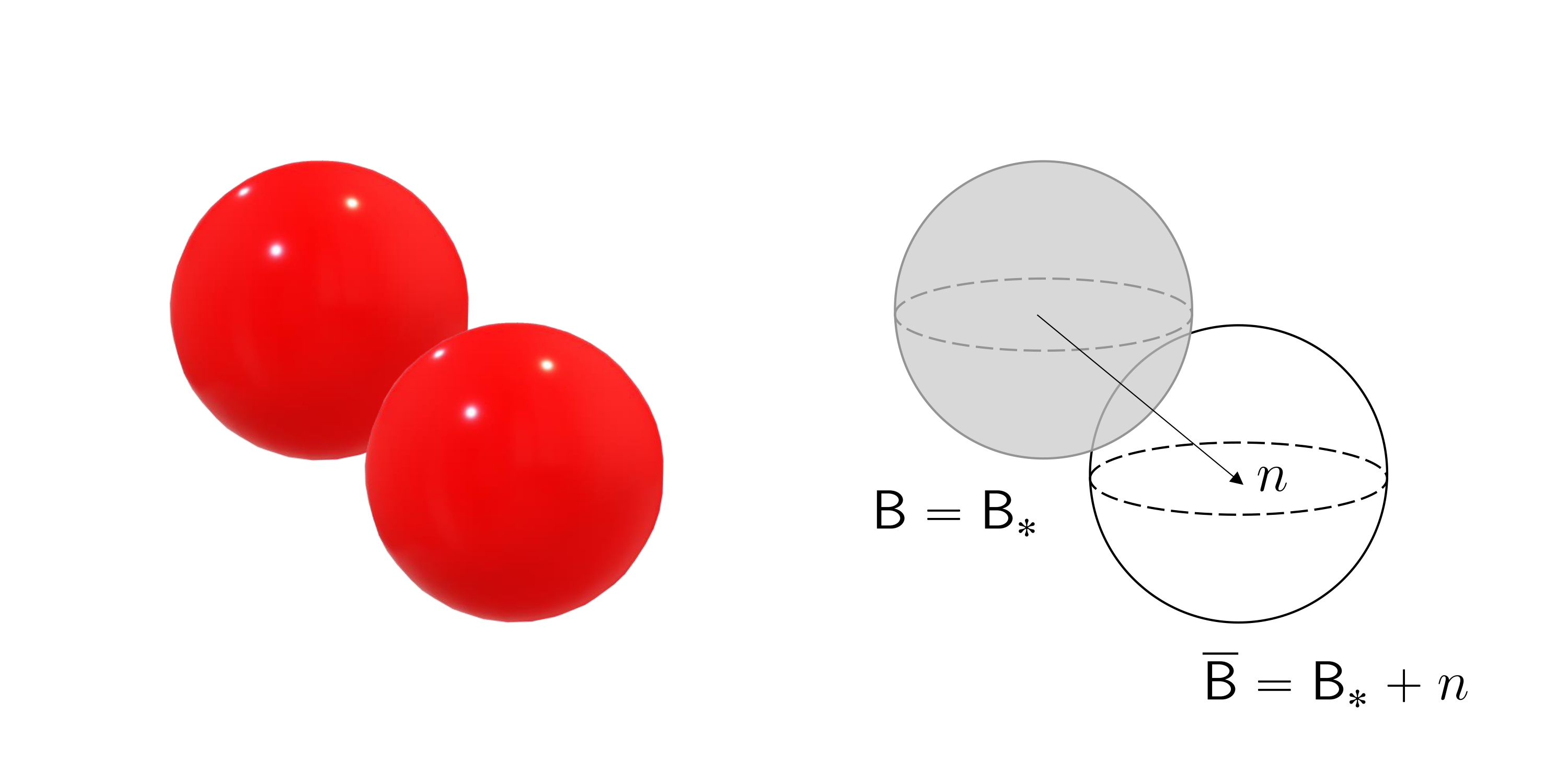}
    \caption{A collision configuration of two hard spheres $\mathsf{B}$ and $\overline{\mathsf{B}}$ in $\mathbb{R}^{3}$, each of which is congruent to a given reference set $\mathsf{B}_{\ast}:=\{y\in\mathbb{R}^{3}\,:\, |y|\leq \frac{1}{2}\}$. The collision parameter $n\in\mathbb{S}^{2}$ represents the direction from the centre of mass of the unbarred sphere to that of the barred.}
    \label{fig:ballscoll}
\end{figure}
We assume without loss of generality that the centre of mass of the sphere represented by unbarred variables lies at the origin in $\mathbb{R}^{3}$ at collision, and also that both spheres are of unit diameter and of unit mass. For a given fixed collision parameter $n\in\mathbb{S}^{2}$, the scattering velocity variables $v_{n}'$ and $\overline{v}_{n}'$ are assumed to obey the conservation of linear momentum, i.e.
\begin{equation}\label{spherecolm}
v_{n}'+\overline{v}_{n}'=v+\overline{v},
\end{equation}
as well as the the conservation of angular momentum (with respect to a point of measurement $a\in\mathbb{R}^{3}$) given by 
\begin{equation}\label{spherecoam}
-a\times v_{n}'+(n-a)\times\overline{v}_{n}'=-a\times v+(n-a)\times\overline{v},
\end{equation}
together with the conservation of kinetic energy
\begin{equation}\label{spherecoke}
|v_{n}'|^{2}+|\overline{v}_{n}'|^{2}=|v|^{2}+|\overline{v}|^{2}
\end{equation}
for all $v, \overline{v}\in\mathbb{R}^{3}$. Moreover, in order that the hard spheres do not interpenetrate following collision, the scattering velocity variables must also satisfy the inequality
\begin{equation}\label{spherenonpen}
(v_{n}'-\overline{v}_{n}')\cdot n\geq 0 
\end{equation}
whenever the velocity variables $v, \overline{v}\in\mathbb{R}^{3}$ satisfy the collision parameter-dependent condition $(v-\overline{v})\cdot n\leq 0$. For given $v, \overline{v}\in\mathbb{R}^{3}$ and $n\in\mathbb{S}^{2}$, the semi-algebraic system (in the sense of real algebraic geometry: see Bochnak, Coste and Roy \cite{bochnak2013real}) comprising \eqref{spherecolm}--\eqref{spherenonpen} admits a unique solution $v_{n}', \overline{v}_{n}'\in\mathbb{R}^{3}$ given by \eqref{spherescat} above. We also remark in passing that the $n$-dependent smooth diffeomorphism of $\mathbb{R}^{6}$ effected by the transformation
\begin{equation}\label{unitjaccysphere}
\left(
\begin{array}{c}
v_{1} \\
v_{2} \\
v_{3} \\
\overline{v}_{1} \\
\overline{v}_{2} \\
\overline{v}_{3}
\end{array}
\right)\mapsto \left(
\begin{array}{c}
v_{n, 1}' \\
v_{n, 2}' \\
v_{n, 3}' \\
\overline{v}_{n, 1}' \\
\overline{v}_{n, 2}' \\
\overline{v}_{n, 3}'
\end{array}
\right)
\end{equation}
admits a Jacobian determinant of $-1$. This important consequence of the conservation laws yields, in particular, the well-known H-theorem for the Boltzmann equation: see (\cite{MR1307620}, Chapter 3.4), for instance.

The case of non-spherical particles is in stark contrast to that of non-rotational hard spheres, as there are infinitely-many `smooth' ways by which to resolve a collision between two convex non-spherical hard particles. Notably, the analogous $-1$-determinant property for the change of variables \eqref{unitjaccysphere} in the case of non-spherical particles is \emph{not} a consequence of the algebraic conservation laws of classical physics. Indeed, in Wilkinson \cite{wilkinson2020non}, it was shown that even if one assumes the assignment \eqref{unitjaccysphere} of `pre-' to `post-collisional' velocity variables to be a \emph{linear} map on Euclidean space, the conservation of linear momentum, angular momentum and kinetic energy are insufficient to guarantee the uniqueness of physical scattering for non-spherical hard particle dynamics. 
\subsubsection{Scattering for Non-spherical Particles}\label{scattynonsphere}
\begin{figure}
    \centering
    \includegraphics[scale=0.39]{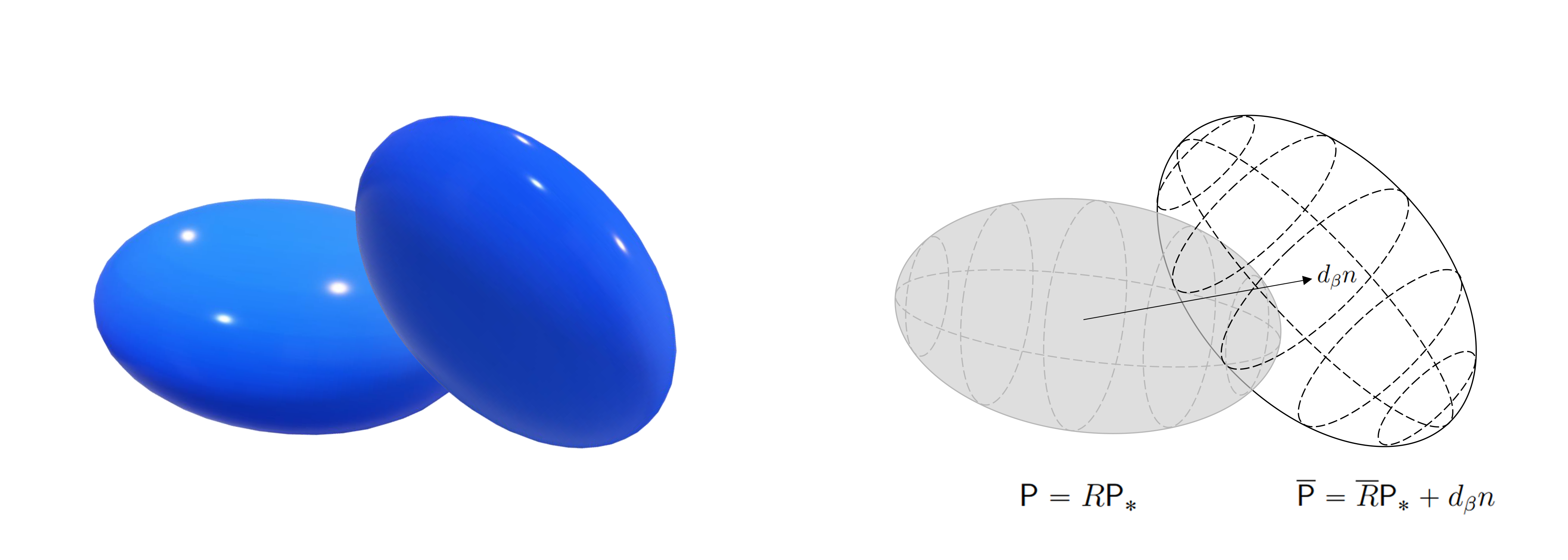}
    \caption{A collision configuration of two compact, convex subsets $\mathsf{P}$ and $\overline{\mathsf{P}}$ of $\mathbb{R}^{3}$, each of which is congruent to a given reference set $\mathsf{P}_{\ast}$. The matrices $R, \overline{R}\in\mathrm{SO}(3)$ represent the orientations of the two hard particles, $n\in\mathbb{S}^{2}$ represents the direction vector connecting the centre of mass of the unbarred particle to that of the barred, and $d_{\beta}>0$ denotes the distance of closest approach \eqref{distca3d}.}
    \label{fig:3dbluenonconcoll}
\end{figure}
It requires some effort to set up the framework within which collision invariants for non-spherical hard particle scattering can be defined. In order to articulate the above claims more precisely, we consider the collision between two hard particles $\mathsf{P}$ and $\overline{\mathsf{P}}$, each of which is congruent to a given compact, convex set $\mathsf{P}_{\ast}\subset\mathbb{R}^{3}$ termed the \emph{reference particle}, whose boundary surface $\partial\mathsf{P}_{\ast}$ is a differentiable manifold of class $\mathscr{C}^{1}$, i.e. $\mathsf{P}$ and $\overline{\mathsf{P}}$ are represented as
\begin{equation*}
\mathsf{P}=R\mathsf{P}_{\ast}  \quad \text{and}\quad \overline{\mathsf{P}}=\overline{R}\mathsf{P}_{\ast}+d_{\beta}n, 
\end{equation*}
where $\beta=(n, R, \overline{R})\in\mathbb{S}^{2}\times\mathrm{SO}(3)\times\mathrm{SO}(3)$ denotes the \emph{collision parameter} for non-spherical particles, and $d_{\beta}>0$ denotes the so-called \emph{distance of closest approach} given by
\begin{equation}\label{distca3d}
d_{\beta}:=\min\left\{
d>0\,:\,\mathrm{card}(R\mathsf{P}_{\ast}\cap (\overline{R}\mathsf{P}_{\ast}+dn))=0
\right\}.
\end{equation}
We refer the reader to Figure \ref{fig:3dbluenonconcoll} for an illustration of a collision configuration and its associated collision parameters. In what follows, we write $m>0$ and $J\in\mathbb{R}^{3\times 3}$ to denote the mass and the inertia matrix of the reference particle $\mathsf{P}_{\ast}$, respectively, namely
\begin{equation*}
m:=\int_{\mathsf{P}_{\ast}}\,dy\quad \text{and}\quad J:=\int_{\mathsf{P}_{\ast}}\left(|y|^{2}I-y\otimes y\right)\,dy.
\end{equation*}
Analogous to the case of hard spheres, we seek scattering linear velocities $v_{\beta}', \overline{v}_{\beta}'\in\mathbb{R}^{3}$ and angular velocities $\omega_{\beta}', \overline{\omega}_{\beta}'\in\mathbb{R}^{3}$ which obey the conservation of linear momentum
\begin{equation}\label{nonspherecolm}
mv_{\beta}'+m\overline{v}_{\beta}'=mv+m\overline{v},    
\end{equation}
the conservation of angular momentum (with respect to a point of measurement $a\in\mathbb{R}^{3}$) given by
\begin{equation}\label{nonspherecoam}
\begin{array}{c}
-ma\times v_{\beta}'+RJR^{T}\omega_{\beta}'+m(d_{\beta}n-a)\times\overline{v}_{\beta}'+\overline{R}J\overline{R}^{T}\overline{\omega}_{\beta}' \vspace{2mm} \\
=-ma\times v+RJR^{T}\omega+m(d_{\beta}n-a)\times\overline{v}+\overline{R}J\overline{R}^{T}\overline{\omega}, 
\end{array}
\end{equation}
as well as the conservation of kinetic energy
\begin{equation}\label{nonspherecoke}
\begin{array}{l}
m|v_{\beta}'|^{2}+RJR^{T}\omega_{\beta}'\cdot\omega_{\beta}'+m|\overline{v}_{\beta}'|^{2}+\overline{R}J\overline{R}^{T}\overline{\omega}_{\beta}'\cdot\overline{\omega}_{\beta}' \vspace{2mm} \\
=m|v|^{2}+RJR^{T}\omega\cdot\omega+m|\overline{v}|^{2}+\overline{R}J\overline{R}^{T}\overline{\omega}\cdot\overline{\omega} 
\end{array}
\end{equation}
for all given $v, \overline{v}\in\mathbb{R}^{3}$ and $\omega, \overline{\omega}\in\mathbb{R}^{3}$. Assuming that the underlying dynamics of $\mathsf{P}$ and $\overline{\mathsf{P}}$ is both left- and right-differentiable at all points of the real line, it can be shown that
\begin{equation}\label{leftandright}
\frac{d}{dt_{+}}\bigg|_{t=\tau}F(x(t), \overline{x}(t), R(t), \overline{R}(t))\geq 0 \quad \text{and}\quad \frac{d}{dt_{-}}\bigg|_{t=\tau}F(x(t), \overline{x}(t), R(t), \overline{R}(t))\leq 0,  
\end{equation}
where $F:\mathbb{R}^{3}\times\mathbb{R}^{3}\times\mathrm{SO}(3)\times\mathrm{SO}(3)\rightarrow\mathbb{R}$ is the auxiliary function defined by
\begin{equation*}
F(x, \overline{x}, R, \overline{R}):=|x-\overline{x}|^{2}-d_{(n(x, \overline{x}), R, \overline{R})}^{2},    
\end{equation*}
with $n(x, \overline{x})\in\mathbb{S}^{2}$ given by
\begin{equation*}
n(x, \overline{x}):=\frac{\overline{x}-x}{|\overline{x}-x|} 
\end{equation*}
for any $x, \overline{x}\in\mathbb{R}^{3}$ with $x\neq\overline{x}$, and $\tau\in\mathbb{R}$ is any time for which $F(x(\tau), \overline{x}(\tau), R(\tau), \overline{R}(\tau))=0$. The reader will note that the auxiliary function $F$ can be used to determine the admissible phase space for the dynamics of the two congruent hard particles $\mathsf{P}$ and $\overline{\mathsf{P}}$. For example, if we define
\begin{equation*}
\mathsf{P}(x, R):=R\mathsf{P}_{\ast}+x  
\end{equation*}
for $x\in\mathbb{R}^{3}$ and $R\in\mathrm{SO}(3)$, and the reference particle $\mathsf{P}_{\ast}\subset\mathbb{R}^{3}$ is the special case of a compact, \emph{strictly-convex} set, it holds that
\begin{equation*}
\mathrm{card}(\mathsf{P}(x, R)\cap\mathsf{P}(\overline{x}, \overline{R}))=0\quad \Longleftrightarrow\quad F(x, \overline{x}, R, \overline{R})>0,  
\end{equation*}
while
\begin{equation*}
\mathrm{card}(\mathsf{P}(x, R)\cap\mathsf{P}(\overline{x}, \overline{R}))=1\quad \Longleftrightarrow\quad F(x, \overline{x}, R, \overline{R})=0.
\end{equation*}
As a consequence of the above inequalities \eqref{leftandright}, it can be shown under the aforementioned regularity criterion on the underlying particle dynamics that the scattering velocity variables $v_{\beta}', \overline{v}_{\beta}', \omega_{\beta}', \overline{\omega}_{\beta}'\in\mathbb{R}^{3}$ necessarily satisfy the inequality
\begin{equation}\label{nonspherenonpen}
\begin{array}{c}
d_{\beta}n\cdot(\overline{v}_{\beta}'-v_{\beta}')-\nabla_{\mathbb{S}^{2}}d_{\beta}(I+n\otimes n)(\overline{v}_{\beta}'-v_{\beta}') \vspace{2mm}\\
-d_{\beta}D_{R}d_{\beta}R:\Omega_{\beta}'-d_{\beta}D_{\overline{R}}d_{\beta}\overline{R}:\overline{\Omega}_{\beta}'\geq 0    
\end{array}
\end{equation}
whenever $v, \overline{v}, \omega, \overline{\omega}\in\mathbb{R}^{3}$ satisfy the inequality 
\begin{equation}\label{nonsphereprecoll}
\begin{array}{c}
d_{\beta}n\cdot(\overline{v}-v)-\nabla_{\mathbb{S}^{2}}d_{\beta}(I+n\otimes n)(\overline{v}-v) \vspace{2mm} \\
-d_{\beta}D_{R}d_{\beta}R:\Omega-d_{\beta}D_{\overline{R}}d_{\beta}\overline{R}:\overline{\Omega}\leq 0,
\end{array}
\end{equation}
where $D_{R}d_{\beta}$ and $D_{\overline{R}}d_{\beta}$ denote the derivatives of $d_{\beta}$ with respect to its $R$- and $\overline{R}$-arguments, respectively. For given collision parameter $\beta$, we refer to any collection of velocities which obey \eqref{nonsphereprecoll} as \emph{pre-collisional} velocities, whilst any collection obeying \eqref{nonspherenonpen} as \emph{post-collisional} velocities. 

Unlike the case of hard spheres, for given $v, \overline{v}, \omega, \overline{\omega}\in\mathbb{R}^{3}$ and given $\beta\in\mathbb{S}^{2}\times\mathrm{SO}(3)\times\mathrm{SO}(3)$, the semi-algebraic system comprised of \eqref{nonspherecolm}--\eqref{nonspherenonpen} above does not admit a unique solution $v_{\beta}', \overline{v}_{\beta}', \omega_{\beta}', \overline{\omega}_{\beta}'\in\mathbb{R}^{3}$. In particular, the $-1$-Jacobian determinant property of the scattering map in the case of hard spheres, which is important for its role in the H-theorem, is not a consequence of the algebraic laws of classical mechanics \eqref{nonspherecolm}--\eqref{nonspherecoke} together with the semi-algebraic conditions \eqref{nonspherenonpen}--\eqref{nonsphereprecoll}. Therefore, motivated by the H-theorem in the case of hard spheres, if one assumes in addition that the $\beta$-dependent map on $\mathbb{R}^{12}$ defined pointwise by
\begin{equation}\label{bigmapnonsphere}
\left(
\begin{array}{c}
v_{1} \\
v_{2} \\
v_{3} \\
\overline{v}_{1} \\
\overline{v}_{2} \\
\overline{v}_{3} \\
\omega_{1} \\
\omega_{2} \\
\omega_{3} \\
\overline{\omega}_{1} \\
\overline{\omega}_{2} \\
\overline{\omega}_{3}
\end{array}
\right)
\mapsto\left(
\begin{array}{c}
v_{\beta, 1}' \\
v_{\beta, 2}' \\
v_{\beta, 3}' \\
\overline{v}_{\beta, 1}' \\
\overline{v}_{\beta, 2}' \\
\overline{v}_{\beta, 3}' \\
\omega_{\beta, 1}' \\
\omega_{\beta, 2}' \\
\omega_{\beta, 3}' \\
\overline{\omega}_{\beta, 1}' \\
\overline{\omega}_{\beta, 2}' \\
\overline{\omega}_{\beta, 3}'
\end{array}
\right)
\end{equation}
is a smooth diffeomorphism of $\mathbb{R}^{12}$ with Jacobian determinant $-1$, then it can be shown (see \cite{wilkinson2020non}) that there exist infinitely-many maps \eqref{bigmapnonsphere} which are both linear and which satisfy the semi-algebraic system \eqref{nonspherecolm}--\eqref{nonspherenonpen}. As such, in the case of non-spherical hard particles, the notion of collision invariant \emph{depends on the choice of scattering}, and is not a priori independent thereof. 

It will be helpful to establish some notation moving forward. In what follows, we shall employ the variable $V\in\mathbb{R}^{12}$ to denote the concatenated velocity variable 
\begin{equation*}
V=\left[
\begin{array}{c}
v \\
\overline{v} \\
\omega \\
\overline{\omega}
\end{array}
\right],    
\end{equation*}
where $v, \overline{v}, \omega, \overline{\omega}\in\mathbb{R}^{3}$, as well as $V_{\beta}'\in\mathbb{R}^{12}$ to denote the collision parameter-dependent scattering variable
\begin{equation*}
V_{\beta}'=\left[ 
\begin{array}{c}
v_{\beta}' \\
\overline{v}_{\beta}' \\
\omega_{\beta}' \\
\overline{\omega}_{\beta}'
\end{array}
\right].
\end{equation*}
We conclude this discussion with a definition which redresses the above objects in a framework more amenable to the techniques employed in this article.
\begin{defn}[Physical Scattering Map]
Suppose the collision parameter $\beta\in \mathbb{S}^{2}\times\mathrm{SO}(3)\times\mathrm{SO}(3)$ is given, and suppose further that the reference particle  $\mathsf{P}_{\ast}\subset\mathbb{R}^{3}$ is a compact, convex set with boundary manifold  $\partial\mathsf{P}_{\ast}\subset\mathbb{R}^{3}$ of class $\mathscr{C}^{1}$. We say that a map $\sigma_{\beta}:\mathbb{R}^{12}\rightarrow\mathbb{R}^{12}$ is a \textbf{physical scattering map} if and only if it is a classical solution of the Jacobian PDE given by
\begin{equation*}
\mathrm{det}\,D\sigma_{\beta}[V]=-1    
\end{equation*}
for all $V\in\mathbb{R}^{12}$, and in addition is subject to the algebraic constraints
\begin{equation*}
\sigma_{\beta}[V]\cdot\widehat{E}_{i}=V\cdot\widehat{E}_{i}
\end{equation*}
as well as
\begin{equation*}
\sigma_{\beta}[V]\cdot\alpha_{\beta}^{(i)}=V\cdot\alpha_{\beta}^{(i)}
\end{equation*}
for $i\in\{1, 2, 3\}$, and
\begin{equation*}
|M\sigma_{\beta}[V]|^{2}=|MV|^{2}    
\end{equation*}
for all $V\in\mathbb{R}^{12}$, where
\begin{equation*}
\widehat{E}_{1}:=\frac{1}{\sqrt{2}}\left(
\begin{array}{c}
1 \\
0 \\
0 \\
1 \\
0 \\
0 \\
0 \\
0 \\
0 \\
0 \\
0 \\
0 
\end{array}
\right), \quad \widehat{E}_{2}:=\frac{1}{\sqrt{2}}\left(
\begin{array}{c}
0 \\
1 \\
0 \\
0 \\
1 \\
0 \\
0 \\
0 \\
0 \\
0 \\
0 \\
0 
\end{array}
\right) \quad \text{and}\quad \widehat{E}_{3}:=\frac{1}{\sqrt{2}}\left(
\begin{array}{c}
0 \\
0 \\
1 \\
0 \\
0 \\
1 \\
0 \\
0 \\
0 \\
0 \\
0 \\
0 
\end{array}
\right),
\end{equation*}
together with
\begin{equation*}
\alpha_{\beta}^{(1)}:=\left(
\begin{array}{c}
0 \\
0 \\
0 \\
0 \\
md_{\beta}\varepsilon_{1j2}n_{j} \vspace{1mm} \\
md_{\beta}\varepsilon_{1j3}n_{j} \vspace{1mm} \\
R_{1j}J_{jk}R^{T}_{k1} \vspace{1mm} \\
R_{1j}J_{jk}R^{T}_{k2} \vspace{1mm} \\
R_{1j}J_{jk}R^{T}_{k3} \vspace{1mm} \\
\overline{R}_{1j}J_{jk}\overline{R}^{T}_{k1} \vspace{1mm} \\
\overline{R}_{1j}J_{jk}\overline{R}^{T}_{k2} \vspace{1mm} \\
\overline{R}_{1j}J_{jk}\overline{R}^{T}_{k3}
\end{array}
\right), \quad \alpha_{\beta}^{(2)}:=\left(
\begin{array}{c}
0 \\
0 \\
0 \\
md_{\beta}\varepsilon_{2j1}n_{j} \vspace{1mm} \\
0 \\
md_{\beta}\varepsilon_{2j3}n_{j} \vspace{1mm} \\
R_{2j}J_{jk}R^{T}_{k1} \vspace{1mm} \\
R_{2j}J_{jk}R^{T}_{k2} \vspace{1mm} \\
R_{2j}J_{jk}R^{T}_{k3} \vspace{1mm} \\
\overline{R}_{2j}J_{jk}\overline{R}^{T}_{k1} \vspace{1mm} \\
\overline{R}_{2j}J_{jk}\overline{R}^{T}_{k2} \vspace{1mm} \\
\overline{R}_{2j}J_{jk}\overline{R}^{T}_{k3}
\end{array}
\right) \quad \text{and}\quad \alpha_{\beta}^{(3)}:=\left(
\begin{array}{c}
0 \\
0 \\
0 \\
md_{\beta}\varepsilon_{3j1}n_{j} \vspace{1mm} \\
md_{\beta}\varepsilon_{3j2}n_{j} \vspace{1mm}\\
0 \\
R_{3j}J_{jk}R^{T}_{k1} \vspace{1mm} \\
R_{3j}J_{jk}R^{T}_{k2} \vspace{1mm} \\
R_{3j}J_{jk}R^{T}_{k3} \vspace{1mm} \\
\overline{R}_{3j}J_{jk}\overline{R}^{T}_{k1} \vspace{1mm} \\
\overline{R}_{3j}J_{jk}\overline{R}^{T}_{k2} \vspace{1mm} \\
\overline{R}_{3j}J_{jk}\overline{R}^{T}_{k3}
\end{array}
\right),        
\end{equation*}
where $\varepsilon_{ijk}$ denotes the Levi-Cevita symbol, and $M\in\mathbb{R}^{12\times 12}$ denotes the block mass-inertia matrix given by
\begin{equation*}
M:=\left(
\begin{array}{c | c | c | c}
\sqrt{m}I & 0 & 0 & 0 \\ \hline
0 & \sqrt{m}I & 0 & 0 \\ \hline
0 & 0 & \sqrt{J} & 0 \\ \hline
0 & 0 & 0 & \sqrt{J}
\end{array}
\right).
\end{equation*}
Moreover, the map $\sigma_{\beta}$ is also subject to the half-space mapping property that $\sigma_{\beta}[V]\cdot N_{\beta}\geq 0$ whenever $V\cdot N_{\beta}\leq 0$, where $N_{\beta}\in\mathbb{R}^{12}$ is the vector
\begin{equation*}
N_{\beta}:=\left[
\begin{array}{c}
-d_{\beta}n_{\beta}+\nabla_{\mathbb{S}^{2}}d_{\beta}-n_{\beta}\otimes n_{\beta}\nabla_{\mathbb{S}^{2}} d_{\beta} \vspace{2mm} \\
d_{\beta}n_{\beta}-\nabla_{\mathbb{S}^{2}}d_{\beta}+n_{\beta}\otimes n_{\beta}\nabla_{\mathbb{S}^{2}} d_{\beta}\vspace{2mm}  \\
d_{\beta}T(D_{R}d_{\beta}R^{T}) \vspace{2mm} \\
d_{\beta}T(D_{\overline{R}}\overline{R}^{T}) \\
\end{array}
\right],    
\end{equation*}
and $T:A\mapsto T(A)$ maps $\mathbb{R}^{3\times 3}$ to $\mathbb{R}^{3}$ pointwise as
\begin{equation*}
T(A):=\left(
\begin{array}{c}
A_{32}-A_{23} \vspace{1mm}\\
A_{13}-A_{31} \vspace{1mm} \\
A_{21}-A_{12}
\end{array}
\right)
\end{equation*}
for all $A\in\mathbb{R}^{3\times 3}$.
\end{defn}
\begin{rem}
As the above definition indicates, in all that follows, we suppress the dependence of a given scattering map $\sigma_{\beta}$ on the reference particle $\mathsf{P}_{\ast}$ that was used to construct it.
\end{rem}
With this definition in place, if $\sigma_{\beta}$ is a physical scattering map for all collision parameters $\beta\in\mathbb{S}^{2}\times\mathrm{SO}(3)\times\mathrm{SO}(3)$, we say that the collection
\begin{equation*}
\{\sigma_{\beta}\}_{\beta\in\mathbb{S}^{2}\times\mathrm{SO}(3)\times\mathrm{SO}(3)}
\end{equation*}
is a \textbf{physical scattering family}. In terms of this definition, under the assumption that the scattering map is linear, i.e. $\sigma_{\beta}$ is of the form
\begin{equation*}
\sigma_{\beta}[V]=S_{\beta}V
\end{equation*}
for some matrix $S_{\beta}\in\mathbb{R}^{12\times 12}$ and all $V\in\mathbb{R}^{12}$, the algebraic conservation laws take the form
\begin{equation*}
S_{\beta}^{T}\widehat{E}_{i}=\widehat{E}_{i}    
\end{equation*}
together with
\begin{equation*}
S_{\beta}^{T}\alpha_{\beta}^{(i)}=\alpha_{\beta}^{(i)}    
\end{equation*}
for $i\in\{1, 2, 3\}$, and 
\begin{equation*}
|MS_{\beta}M^{-1}V|^{2}=|V|^{2}
\end{equation*}
% Moreover, matters of orientation aside, under the assumption that $\mathsf{P}_{\ast}\subset\mathbb{R}^{3}$ is strictly convex, it can also be shown that the unit vector $\widehat{N}_{\beta}\in\mathbb{S}^{11}$ defined by $\widehat{N}_{\beta}:=|N_{\beta}|^{-1}N_{\beta}$ is nothing other than the unit normal\footnote{What do I really mean by this? Think carefully.} to the boundary of the smooth manifold with boundary $\mathcal{P}_{2}\subset\mathbb{R}^{3}\times\mathbb{R}^{3}\times\mathrm{SO}(3)\times\mathrm{SO}(3)$ defined by
% \begin{equation*}
% \mathcal{P}_{2}:=\left\{
% [x, \overline{x}, R, \overline{R}]\in\mathbb{R}^{3}\times\mathbb{R}^{3}\times\mathrm{SO}(3)\times\mathrm{SO}(3)\,:\,\mathrm{card}((R\mathsf{P}_{\ast}+x)\cap (\overline{R}\mathsf{P}_{\ast}+\overline{x}))\leq 1
% \right\}.
% \end{equation*}
for all $V\in\mathbb{R}^{12}$. With the definitions of physical scattering maps and families in place, we can phrase the non-uniqueness result of \cite{wilkinson2020non} as the statement that if the reference particle $\mathsf{P}_{\ast}\subset\mathbb{R}^{3}$ is a compact, convex set with boundary manifold of class $\mathscr{C}^{1}$, for any fixed collision parameter $\beta$ there exist infinitely-many matrices $R_{\beta}\in\mathbb{R}^{12\times 12}$ such that the matrix $S_{\beta}:=M^{-1}R_{\beta}M$ effects a physical scattering map on $\mathbb{R}^{12}$. We are now in a position to define what we mean by a collision invariant in the case of non-spherical particle dynamics in this article.
\begin{rem}
At present, it is unknown if, for given collision parameter $\beta$, there exist any nonlinear physical scattering maps on Euclidean space. Assuming that solutions of the Jacobian equation are conservative vector fields, i.e. $\sigma_{\beta}=\nabla s_{\beta}$ for some $s_{\beta}\in C^{2}(\mathbb{R}^{12}, \mathbb{R})$, this may be redressed as the search for non-quadratic solutions of the associated Monge-Amp\`{e}re equation
\begin{equation*}
\mathrm{det}\,D^{2}s_{\beta}[V]=-1    
\end{equation*}
for all $V\in\mathbb{R}^{12}$, with additional constraints on $\nabla s_{\beta}$ arising from the conservation of linear momentum, angular momentum, and kinetic energy, as well as the inequalities arising from the non-penetration condition.
\end{rem}
\subsection{Collision Invariants for Non-spherical Hard Particles}
Owing to the non-uniqueness of physical scattering families for the dynamics of non-spherical particles, the notion of collision invariant in this case depends on the choice of physical scattering family. Indeed, in light of this observation, we shall adopt the following definition of collision invariant for the case of three-dimensional compact, convex reference particles $\mathsf{P}_{\ast}\subset\mathbb{R}^{3}$ in the sequel:
\begin{defn}[$\mathscr{S}$-Collision Invariants in Three Dimensions]
Let $\mathsf{P}_{\ast}\subset\mathbb{R}^{3}$ be a compact, convex set with boundary surface of class $\mathscr{C}^{1}$. Suppose a physical scattering family $\mathscr{S}=\{\sigma_{\beta}\}_{\beta\in\mathbb{S}^{2}\times\mathrm{SO}(3)\times\mathrm{SO}(3)}$ is given. We say that a measurable map $\varphi:\mathbb{R}^{3}\times\mathbb{R}^{3}\times\mathrm{SO}(3)\rightarrow\mathbb{R}$ is an \textbf{$\mathscr{S}$-collision invariant} if and only if it satisfies the functional equation
\begin{equation}\label{3dcollinv}
\varphi(v_{\beta}', \omega_{\beta}', R)+\varphi(\overline{v}_{\beta}', \overline{\omega}_{\beta}', \overline{R})=\varphi(v, \omega, R)+\varphi(\overline{v}, \overline{\omega}, \overline{R})      
\end{equation}
for $\mathscr{L}^{12}$-a.e. $[v, \overline{v}, \omega, \overline{\omega}]\in\mathbb{R}^{12}$ and $\mathscr{H}\otimes\mu\otimes\mu$-a.e. $(n, R, \overline{R})\in\mathbb{S}^{2}\times\mathrm{SO}(3)\times\mathrm{SO}(3)$, where 
\begin{equation*}
V_{\beta}':=\sigma_{\beta}[V]    
\end{equation*}
for $\mathscr{L}^{12}$-a.e. $V\in\mathbb{R}^{12}$.
\end{defn}
As the statement of our main theorem, namely Theorem \ref{maintheorem} below, will make clear, for a fixed choice of physical scattering family $\mathscr{S}=\{\sigma_{\beta}\}_{\beta\in\mathbb{S}^{2}\times\mathrm{SO}(3)\times\mathrm{SO}(3)}$, our aim is to find all possible $\mathscr{S}$-collision invariants of a given regularity. At first glance, characterisation of all maps satisfying \eqref{3dcollinv} is a problem of the theory of functional equations: see, for example, the monograph of Kuczma \cite{kuczma2009introduction}. However, it is possible to transform the identity \eqref{3dcollinv} to reveal the role of group theory in this problem. Indeed, if we define a new $\varphi$-dependent auxiliary map $\Phi:\mathbb{R}^{12}\times \mathrm{SO}(3)\times\mathrm{SO}(3)\rightarrow\mathbb{R}$ pointwise by
\begin{equation*}
\Phi(V, R, \overline{R}):=\varphi(v, \omega, R)+\varphi(\overline{v}, \overline{\omega}, \overline{R})    
\end{equation*}
for all $V=[v, \overline{v}, \omega, \overline{\omega}]\in\mathbb{R}^{12}$ and $(R, \overline{R})\in\mathrm{SO}(3)\times\mathrm{SO}(3)$, then \eqref{3dcollinv} becomes the statement that $\Phi$ satisfies
\begin{equation}\label{newandnice}
\Phi(\sigma_{\beta}[V], R, \overline{R})=\Phi(V, R, \overline{R})   
\end{equation}
for $V\in\mathbb{R}^{12}$ and $\beta\in\mathbb{S}^{2}\times\mathrm{SO}(3)\times\mathrm{SO}(3)$. By iteration of the action of the scattering map $\sigma_{\beta}$ on $V$, the above identity \eqref{newandnice} implies that $\Phi$ satisfies
\begin{equation*}
\Phi\left(\prod_{i=1}^{N}\sigma_{\beta(i)}[V], R, \overline{R}\right)=\Phi(V, R, \overline{R})
\end{equation*}
for any $N\geq 2$ and any collection of collision parameters $\{\beta(i)\}_{i=1}^{N}\subset\mathbb{S}^{2}\times\mathrm{SO}(3)\times\mathrm{SO}(3)$ with members of the form $\beta(i)=(n(i), R, \overline{R})$ for $\{n(i)\}_{i=1}^{N}\subset\mathbb{S}^{2}$. Intuitively, one `freezes' the orientation variables $R$ and $\overline{R}$ and in turn considers all possible collision configurations between two congruent particles with the frozen orientations. In particular, it holds that
\begin{equation*}
\Phi(gV, R, \overline{R})=\Phi(V, R, \overline{R})    
\end{equation*}
for any $g\in G(R, \overline{R})$, where $G(R, \overline{R})$ denotes the matrix group generated by the physical scattering family $\mathscr{S}$ with the orientation collision parameters $(R, \overline{R})$ fixed. Moving forward, we shall term $G(R, \overline{R})$ the $(R, \overline{R})$-dependent \textbf{scattering group} associated to the physical scattering family $\mathscr{S}$. In essence, the problem of characterisation of $\mathscr{S}$-collision invariants is a problem of characterising scalar invariants of the group action of the scattering groups $G(R, \overline{R})$ on $\mathbb{R}^{12}$.
\subsubsection{The Canonical Physical Scattering Family}
Formulated in this way, the problem of characterisation of collision invariants was tackled in \cite{saint2018collision} only in the case of what we shall call herein the \textbf{canonical physical scattering family} $\mathscr{S}_{\ast}=\{\sigma_{\beta}^{\ast}\}_{\beta}$ defined member-wise by
\begin{equation*}
\sigma_{\beta}^{\ast}[V]:=\underbrace{M^{-1}(I_{12}-2\widehat{N}_{\beta}\otimes\widehat{N}_{\beta})M}_{S_{\beta}^{\ast}:=}V    
\end{equation*}
for all $\beta\in\mathbb{S}^{2}\times\mathrm{SO}(3)\times\mathrm{SO}(3)$ and $V\in\mathbb{R}^{12}$. We employ the term \emph{canonical} as it is the scattering which is employed without comment in many major studies of physical hard-particle dynamics. Indeed, we invite the reader to consult the well-known work of Donev, Torquato and Stillinger (\cite{donev2005neighbor}, Section 3.2) as an important example. Moreover, it may be considered canonical as it `reduces essentially' to the scattering map for non-rotational hard spheres when the reference particle $\mathsf{P}_{\ast}$ is chosen to be a sphere of any radius. Concretely, when the reference particle is chosen to be $\mathsf{P}_{\ast}=B(0, r)$ for any $r>0$, it can be shown that the generating matrix $S_{\beta}^{\ast}$ reduces to the matrix given in block form by
\begin{equation*}
S_{\beta}^{\ast}=M^{-1}\left(
\begin{array}{c | c}
I_{6}-2\widehat{\nu}_{n}\otimes\widehat{\nu}_{n} & 0_{6}\\ \hline
0_{6} & I_{6}
\end{array}
\right)M
\end{equation*}
for all $\beta=(n, R, \overline{R})\in\mathbb{S}^{2}\times\mathrm{SO}(3)\times\mathrm{SO}(3)$, where $\widehat{\nu}_{n}\in\mathbb{S}^{5}$ is the vector defined in \eqref{bigennspherenormal} above. In particular, using the theory of Ballard \cite{ballard2000dynamics}, the dynamics on the tangent bundle of the manifold with boundary 
\begin{equation*}
\left\{
X=[x, \overline{x}, R, \overline{R}]\in\mathbb{R}^{3}\times\mathbb{R}^{3}\times\mathrm{SO}(3)\times\mathrm{SO}(3)\,:\,\mathscr{L}^{3}((R\mathsf{P}_{\ast}+x)\cap(\overline{R}\mathsf{P}_{\ast}+\overline{x}))=0\right\}
\end{equation*}
constructed using the physical scattering family $\mathscr{S}_{\ast}$ is precisely canonical physical hard sphere dynamics which `acts as the identity on the angular velocity variables'.
\subsubsection{Reflections and the Work of Eaton and Perlman, and Viterbo}
The reader will note that the canonical physical scattering map $\sigma_{\beta}^{\ast}\in\mathrm{O}(12)$ above is generated by a matrix which is conjugate by the mass-inertia matrix $M$ to a reflection matrix. Owing to this reflection structure, the group $G(R, \overline{R})$ can be readily characterised using the work of Eaton and Perlman \cite{MR0463329}, or that of Viterbo in the appendix of \cite{saint2018collision}. Let us state the version of the result we shall employ that is taken from the appendix in \cite{saint2018collision}.
\begin{thm}[\cite{saint2018collision}, Appendix]
Suppose $q\geq 2$ is an integer, and that $\mathcal{X}$ is a connected topological space which is not reduced to a point. Let $\varrho:\mathcal{X}\rightarrow\mathbb{S}^{q-1}$ be a continuous map with the property that
\begin{equation*}
\mathrm{span}\{\varrho(x)\,:\,x\in\mathcal{X}\}=\mathbb{R}^{q}.    
\end{equation*}
It follows that the group generated by the associated set of reflections in $\mathrm{O}(q)$ given by
\begin{equation*}
\{I_{q}-2\varrho(x)\otimes\varrho(x)\,:\,x\in\mathcal{X}\}\subset\mathrm{O}(q)    
\end{equation*}
is the entire orthogonal group $O(q)$.
\end{thm}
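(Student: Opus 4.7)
The plan is to combine the Cartan--Dieudonn\'e structure theorem for $\mathrm{O}(q)$ with a Lie-theoretic connectedness argument that exploits the continuity of $\varrho$ and the connectedness of $\mathcal{X}$. Write $V := \varrho(\mathcal{X}) \subset \mathbb{S}^{q-1}$ and $r_v := I_q - 2v \otimes v$ for the associated reflection, and let $G$ denote the group generated by $\{r_v : v \in V\}$. Since every generator has determinant $-1$, the subgroup $G \cap \mathrm{SO}(q)$ has index exactly $2$ in $G$; combining it with any fixed generator $r_{v_0}$ shows that the theorem reduces to the equality $G \cap \mathrm{SO}(q) = \mathrm{SO}(q)$.

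First, I would construct an explicit connected family of rotations inside $G$: the continuous map $\Psi : \mathcal{X} \times \mathcal{X} \to \mathrm{SO}(q)$, $\Psi(x, y) := r_{\varrho(x)} r_{\varrho(y)}$, has connected domain, so its image $\mathcal{R} \subset G \cap \mathrm{SO}(q)$ is a connected subset containing the identity (take $x = y$). By Cartan's closed-subgroup theorem, the closure $\overline{G}$ is a Lie subgroup of $\mathrm{O}(q)$; its identity component $H := (\overline{G})^{0}$ is a closed connected Lie subgroup of $\mathrm{SO}(q)$, and $\mathcal{R} \subset H$ because $\mathcal{R}$ is a connected subset of $\overline{G}$ containing $I_q$. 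I would then argue $H = \mathrm{SO}(q)$: the Lie algebra $\mathfrak{h}$ of $H$ inherits infinitesimal rotations of the form $\varrho(x) \otimes \varrho(y) - \varrho(y) \otimes \varrho(x)$ from the tangent structure of $\mathcal{R}$ at $I_q$, and these skew-symmetric bivectors span $\mathfrak{so}(q)$ precisely because the spanning hypothesis supplies a basis $\varrho(x_1), \dots, \varrho(x_q)$ of $\mathbb{R}^q$. To avoid assuming path-connectedness of $\mathcal{X}$, one can instead rule out every proper closed connected subgroup of $\mathrm{SO}(q)$ by observing that any such would stabilise a nontrivial algebraic structure on $\mathbb{R}^q$ (a proper invariant subspace, an invariant complex or quaternionic structure, and so on), incompatible with $\mathcal{R} \subset H$ together with $\mathrm{span}(V) = \mathbb{R}^q$.

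The main obstacle, in my view, is upgrading the density statement $\overline{G} = \mathrm{O}(q)$ to the equality $G = \mathrm{O}(q)$ demanded by the theorem. My approach would be to show that $G$ actually contains an open neighbourhood of the identity in $\mathrm{O}(q)$: for sufficiently large $N$, the composed map $(x_1, y_1, \dots, x_N, y_N) \mapsto \Psi(x_1, y_1) \cdots \Psi(x_N, y_N)$ should be a submersion at a suitably generic base point, and hence its image contains a relatively open subset of $\mathrm{SO}(q)$ which already lies in $G$. Since any open subgroup of a connected topological group equals the whole group, this would force $G \cap \mathrm{SO}(q) = \mathrm{SO}(q)$, and hence $G = \mathrm{O}(q)$. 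Verifying the rank condition for the differential of this multi-parameter composition is the delicate step; it leans once more on the spanning hypothesis for $V$ and requires a careful choice of base point in $\mathcal{X}^{2N}$ together with elementary manipulations of the rank-one perturbations that arise when differentiating the maps $y \mapsto r_{\varrho(y)}$.
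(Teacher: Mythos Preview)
The paper does not supply its own proof of this theorem: it is quoted as a result from the appendix of \cite{saint2018collision} (and attributed in parallel to Eaton--Perlman \cite{MR0463329}). So there is no in-paper argument to compare against, and I can only comment on the internal soundness of your proposal.

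Your broad strategy---pass to the closure $\overline{G}$, identify the identity component $H=(\overline{G})^{0}$ as a closed connected Lie subgroup of $\mathrm{SO}(q)$, and use the connected image $\mathcal{R}=\{r_{\varrho(x)}r_{\varrho(y)}\}$ to force $H=\mathrm{SO}(q)$---has the right shape. But there is a genuine gap in how you extract Lie-algebra information. You write that $\mathfrak{h}$ ``inherits infinitesimal rotations of the form $\varrho(x)\otimes\varrho(y)-\varrho(y)\otimes\varrho(x)$ from the tangent structure of $\mathcal{R}$ at $I_{q}$'', and later that the product map $\mathcal{X}^{2N}\to\mathrm{SO}(q)$ should be a submersion at a generic point. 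Both steps presuppose that $\mathcal{X}$ carries a differentiable structure and that $\varrho$ is smooth; the hypothesis gives you only a connected \emph{topological} space and a continuous map. There is no tangent structure on $\mathcal{R}$ to speak of, and ``submersion'' has no meaning in this generality. Your parenthetical about avoiding path-connectedness shows you sense the difficulty, but the fallback (``rule out every proper closed connected subgroup of $\mathrm{SO}(q)$ via invariant algebraic structures'') is asserted rather than argued, and the classification of such subgroups for general $q$ is not a one-line affair.

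A repair that avoids differentiating in $\mathcal{X}$ altogether: once you know $r_{u}r_{v}\in H$ for all $u,v\in V$, observe that $r_{u}r_{v}$ is a rotation in the $2$-plane $\mathrm{span}(u,v)$ and hence lies on the one-parameter circle $\{\exp(t\,(u\otimes v-v\otimes u)):t\in\mathbb{R}\}\subset\mathrm{SO}(q)$. Whenever the rotation angle is an irrational multiple of $\pi$, the cyclic group $\langle r_{u}r_{v}\rangle$ is already dense in that circle, so the circle lies in $H$ and $u\otimes v-v\otimes u\in\mathfrak{h}$. Connectedness of $V$ together with the spanning hypothesis then lets you approximate the rational-angle cases by irrational ones and conclude $\mathfrak{h}=\mathfrak{so}(q)$, with no smoothness of $\varrho$ required. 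Finally, note that every application in the present paper uses only $\overline{G}=\mathrm{O}(q)$ (the collision invariants under study are continuous, so invariance under $G$ and under $\overline{G}$ coincide); the density-to-equality upgrade you worry about, while a legitimate question about the theorem as literally stated, is not actually needed downstream.
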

Indeed, under technical assumptions on the reference particle $\mathsf{P}_{\ast}$, using this result it can be shown that for any $(R, \overline{R})\in\mathrm{SO}(3)\times\mathrm{SO}(3)$, the scattering group $G(R, \overline{R})\subset\mathrm{O}(12)$ is homomorphic to the orthogonal group $\mathrm{O}(9)$. As a consequence of characterising $G(R, \overline{R})$, it can then be shown that any $\mathscr{S}_{\ast}$-collision invariant $\varphi$ is necessarily of the form
\begin{equation}\label{3dcollinvstat}
\varphi(v, \omega, R)=a(R)+b\cdot v+c\left(m|v|^{2}+R J R^{T}\omega\cdot\omega\right)    
\end{equation}
for all $v, \omega\in\mathbb{R}^{3}$ and all $R\in\mathrm{SO}(3)$, where $a:\mathrm{SO}(3)\rightarrow\mathbb{R}$ is a measurable map, $b\in\mathbb{R}^{3}$ is a constant vector, and $c\in\mathbb{R}$ is a constant. Indeed, \eqref{3dcollinvstat} is a quantification of the claim \eqref{wordymaintheorem} made in the introduction. However, it is \emph{not} the case that all physical scattering maps induced by matrices are conjugate to a reflection matrix. In those cases, one does not have the technology of Eaton and Perlman and Viterbo to characterise group $G(R, \overline{R})$ in general. The primary contribution of this paper is that we develop a method which does not require physical scattering maps to be generated by a matrix that is conjugate to a reflection matrix. 
\subsection{The Case of Non-spherical Hard Particles in 2D}
\begin{figure}
    \centering
    \includegraphics[scale=0.45]{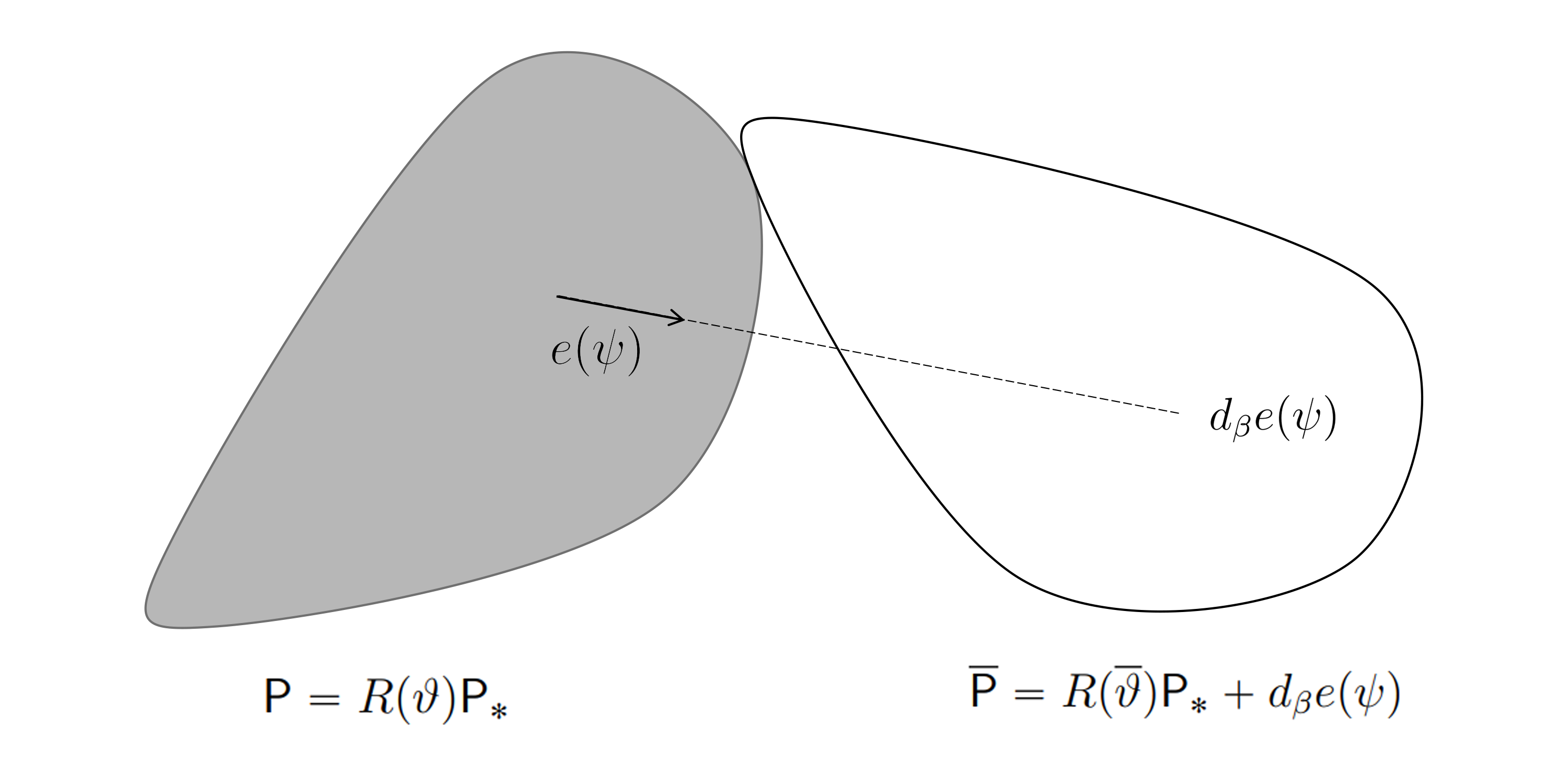}
    \caption{A collision configuration of two compact, convex subsets $\mathsf{P}$ and $\overline{\mathsf{P}}$ of $\mathbb{R}^{2}$, each of which is congruent to a given reference set $\mathsf{P}_{\ast}$. The elevation angle $\psi\in\mathbb{S}^{1}$ determines the direction vector $e(\psi):=(\cos\psi, \sin\psi)$ directed from the centre of the unbarred particle to that of the barred, $\vartheta,\overline{\vartheta}\in\mathbb{S}^{1}$ denote the orientations of the particles, whilst $d_{\beta}>0$ denotes the distance of closest approach \eqref{distca2d}.}
    \label{fig:2dcoll}
\end{figure}
Owing to the fact that the dimensionality of the problem for three-dimensional compact, convex particles serves to obscure the essential algebro-geometric argument we employ in this work, as stated in the introduction, we have elected to state and prove our main theorem (Theorem \ref{maintheorem}) for collision invariants in the notationally-simpler case of two-dimensional compact, convex hard reference particles $\mathsf{P}_{\ast}\subset\mathbb{R}^{2}$. Nevertheless, our methods extend without essential difficulty to the three-dimensional case.
\subsubsection{Set-up and Notation}
In what follows, mirroring section \ref{scattynonsphere} above in the three-dimensional case, we shall consider the collision between two hard particles $\mathsf{P}, \overline{\mathsf{P}}\subset\mathbb{R}^{2}$, both of which are congruent to a compact, convex set $\mathsf{P}_{\ast}\subset\mathbb{R}^{2}$ that we once again term the \emph{reference particle}, whose boundary $\partial\mathsf{P}_{\ast}$ admits the structure of a differentiable curve of class $\mathscr{C}^{1}$. As such, in the case that $\mathsf{P}\cap\overline{\mathsf{P}}\neq \varnothing$ yet $\mathscr{L}^{2}(\mathsf{P}\cap\overline{\mathsf{P}})=0$, it holds that
\begin{equation*}
\mathsf{P}=R(\vartheta)\mathsf{P}_{\ast}\quad \text{and}\quad \overline{\mathsf{P}}=R(\overline{\vartheta})\mathsf{P}_{\ast}+d_{\beta}e(\psi)    
\end{equation*}
for some $\beta=(\psi, \vartheta, \overline{\vartheta})\in\mathbb{T}^{3}$, where $d_{\beta}>0$ denotes the \emph{distance of closest approach} given by
\begin{equation}\label{distca2d}
d_{\beta}:=\min\left\{
d>0\,:\,\mathrm{card}(R(\vartheta)\mathsf{P}_{\ast}\cap(R(\overline{\vartheta})\mathsf{P}_{\ast}+de(\psi)))=0,
\right\},
\end{equation}
the rotation matrix $R(\theta)\in\mathrm{SO}(2)$ is defined for given $\theta\in\mathbb{S}^{1}$ by
\begin{equation*}
R(\theta):=\left(
\begin{array}{cc}
\cos\theta & -\sin\theta \\
\sin\theta & \cos\theta
\end{array}
\right),
\end{equation*}
and where $e(\psi)\in\mathbb{S}^{1}$ denotes the unit vector 
\begin{equation*}
e(\psi)=\left(
\begin{array}{c}
\cos\psi \\
\sin\psi
\end{array}
\right).
\end{equation*}
The collision parameter $\beta\in\mathbb{T}^{3}$ that characterises the collision between two congruent compact, convex particles is comprised of the \emph{elevation angle} $\psi\in\mathbb{S}^{1}$, as well as the \emph{particle orientations} $\vartheta, \overline{\vartheta}\in\mathbb{S}^{1}$ of $\mathsf{P}, \overline{\mathsf{P}}$ with respect to $\mathsf{P}_{\ast}$, respectively. The reader may consult Figure \ref{fig:2dcoll} for an illustration of a collision configuration and its corresponding collision parameters. We write $m>0$ and $J>0$ to denote the mass and scalar inertia of the reference particle $\mathsf{P}_{\ast}\subset\mathbb{R}^{2}$, namely
\begin{equation*}
m:=\int_{\mathsf{P}_{\ast}}\,dy\quad\text{and}\quad J:=\int_{\mathsf{P}_{\ast}}|y|^{2}\,dy.    
\end{equation*}
We now provide a definition of what we mean by a \emph{physical scattering map} in the case of the collision of two-dimensional compact, convex particles. In all that follows, following the convention that was adopted in three dimensions, we use the variable $V\in\mathbb{R}^{6}$ to denote the concatenated velocity variable given by
\begin{equation*}
V=\left[
\begin{array}{c}
v \\
\overline{v} \\
\omega \\
\overline{\omega}
\end{array}
\right],
\end{equation*}
for $v, \overline{v}\in\mathbb{R}^{2}$ and $\omega, \overline{\omega}\in\mathbb{R}$, as well as the variable $V_{\beta}'\in\mathbb{R}^{6}$ to denote the concatenated scattering velocity variable as follows:
\begin{equation*}
V_{\beta}'=\left[
\begin{array}{c}
v_{\beta}' \\
\overline{v}_{\beta}' \\
\omega_{\beta}' \\
\overline{\omega}_{\beta}'
\end{array}
\right].
\end{equation*}
We now define what we mean by a physical scattering map for the resolution of a collision between two compact, convex hard particles in two dimensions.
\begin{defn}[Physical Scattering Map in Two Dimensions]\label{physscatdeftwo}
Suppose a collision parameter $\beta\in\mathbb{T}^{3}$ is given, and suppose further that $\mathsf{P}_{\ast}\subset\mathbb{R}^{2}$ is a compact, convex set with boundary curve $\partial\mathsf{P}_{\ast}\subset\mathbb{R}^{2}$ of class $\mathscr{C}^{1}$. We say that a map $\sigma_{\beta}:\mathbb{R}^{6}\rightarrow\mathbb{R}^{6}$ is a \textbf{physical scattering map} if and only if it is a classical solution of the Jacobian equation given by
\begin{equation*}
\mathrm{det}\,D\sigma_{\beta}[V]=-1    
\end{equation*}
for all $V\in\mathbb{R}^{6}$, and in addition is subject to the algebraic constraints
\begin{equation*}
\sigma_{\beta}[V]\cdot\widehat{E}_{i}=V\cdot \widehat{E}_{i}    
\end{equation*}
for $i\in\{1, 2\}$, as well as
\begin{equation*}
\sigma_{\beta}[V]\cdot\alpha_{\beta}=V\cdot\alpha_{\beta}
\end{equation*}
and
\begin{equation*}
|M\sigma_{\beta}[V]|^{2}=|MV|^{2}    
\end{equation*}
for all $V\in\mathbb{R}^{6}$, where
\begin{equation*}
\widehat{E}_{1}:=\frac{1}{\sqrt{2}}\left(
\begin{array}{c}
1 \\
0 \\
1 \\
0 \\
0 \\
0
\end{array}
\right) \quad \text{and}\quad \widehat{E}_{2}:=\frac{1}{\sqrt{2}}\left(
\begin{array}{c}
0 \\
1 \\
0 \\
1 \\
0 \\
0
\end{array}
\right), 
\end{equation*}
and
\begin{equation*}
\alpha_{\beta}:=\left[
\begin{array}{c}
0 \\
-d_{\beta}e(\psi)^{\perp} \\
J \\
J \\
\end{array}
\right],
\end{equation*}
with $M\in\mathbb{R}^{6\times 6}$ the mass-inertia matrix given by
\begin{equation*}
M:=\mathrm{diag}(\sqrt{m}, \sqrt{m}, \sqrt{m}, \sqrt{m}, \sqrt{J}, \sqrt{J}).
\end{equation*}
Moreover, the map $\sigma_{\beta}$ is also subject to the half-space mapping property that $\sigma_{\beta}[V]\cdot \widehat{N}_{\beta}\geq 0$ whenever $V\cdot \widehat{N}_{\beta}\leq 0$, where $\widehat{N}_{\beta}\in\mathbb{S}^{5}$ is the (up to sign) unit normal to the 0-level-set hypersurface
\begin{equation*}
\left\{
Y\in\mathbb{R}^{6}\,:\,F(Y)=0
\right\}
\end{equation*}
at the point $[0, d_{(\psi(Y), y_{5}, y_{6})}e(\psi(Y)), y_{5}, y_{6}]\in\mathbb{R}^{6}$, where $F:\mathbb{R}^{6}\rightarrow\mathbb{R}$ is defined pointwise by
\begin{equation*}
F(Y):=(y_{1}-y_{3})^{2}+(y_{2}-y_{4})^{2}-d^{2}_{(\psi(Y), y_{5}, y_{6})}    
\end{equation*}
for all $Y=(y_{1}, y_{2}, y_{3}, y_{4}, y_{5}, y_{6})\in\mathbb{R}^{6}$, where $\psi(Y)$ is given by
\begin{equation*}
\psi(Y):=\mathrm{arctan}\left(\frac{y_{4}-y_{2}}{y_{3}-y_{1}}\right).
\end{equation*}
\end{defn}
As is the case in three dimensions, for a given collision parameter $\beta\in\mathbb{T}^{3}$, physical scattering maps are not unique. However, the difference in dimension, as we shall discover in the sequel, means that there can only be finitely-many solutions of the Jacobian equation for physical scattering in two dimensions. With this definition in place, we can now state what we mean by a collision invariant in the context of two-dimensional scattering.
\begin{defn}[$\mathscr{S}$-Collision Invariants in Two Dimensions] Let $\mathsf{P}_{\ast}\subset\mathbb{R}^{2}$ be a compact, convex set with boundary curve $\partial\mathsf{P}_{\ast}$ of class $\mathscr{C}^{1}$. Suppose a physical scattering family $\mathscr{S}$ is given. We say that a measurable map $\varphi:\mathbb{R}^{2}\times\mathbb{R}\times\mathbb{S}^{1}\rightarrow\mathbb{R}$ is an \textbf{$\mathscr{S}$-collision invariant} if and only if it satisfies the functional equation
\begin{equation*}
\varphi(v_{\beta}', \omega_{\beta}', \vartheta)+\varphi(\overline{v}_{\beta}', \overline{\omega}_{\beta}', \overline{\vartheta})=\varphi(v, \omega, \vartheta)+\varphi(\overline{v}, \overline{\omega}, \overline{\vartheta})
\end{equation*}
for $\mathscr{L}^{6}$-a.e. $[v, \overline{v}, \omega, \overline{\omega}]\in\mathbb{R}^{6}$ and $\nu$-a.e. $\beta\in\mathbb{T}^{3}$, where $\nu$ denotes the Haar measure on $\mathbb{T}^{3}$.
\end{defn}
\subsubsection{Full Enumeration of Physical Scattering Maps in 2D}
It was proved in \cite{wilkinson2020non} that, in the case of two-dimensional compact, convex hard particles $\mathsf{P}_{\ast}\subset\mathbb{R}^{2}$, for each fixed collision parameter $\beta\in\mathbb{T}^{3}$ there exist only \emph{two} physical scattering maps which are linear on $\mathbb{R}^{6}$. These maps are given by
\begin{equation*}
\sigma_{\beta}^{\ast}[V]:=M^{-1}(I-2\widehat{N}_{\beta}\otimes\widehat{N}_{\beta})MV    
\end{equation*}
to which we refer as the \textbf{canonical} physical scattering map, and also
\begin{equation*}
\sigma_{\beta}^{\times}[V]:=M^{-1}(2\widehat{E}_{1}\otimes\widehat{E}_{1}+2\widehat{E}_{2}\otimes \widehat{E}_{2}+2\widehat{E}_{\beta}\otimes\widehat{E}_{\beta}-I_{6})MV,    
\end{equation*}
to which we refer as the \textbf{non-canonical} physical scattering map, where $\widehat{E}_{1}, \widehat{E}_{2}\in\mathbb{S}^{5}$ are the unit vectors
\begin{equation*}
\widehat{E}_{1}:=\frac{1}{\sqrt{2}}\left(
\begin{array}{c}
1 \\
0 \\
1 \\
0 \\
0 \\
0
\end{array}
\right) \quad \text{and} \quad \widehat{E}_{2}:=\frac{1}{\sqrt{2}}\left(
\begin{array}{c}
0 \\
1 \\
0 \\
1 \\
0 \\
0
\end{array}
\right),
\end{equation*}
while $\widehat{E}_{\beta}\in\mathbb{S}^{5}$ is the collision parameter-dependent unit vector given by
\begin{equation*}
\widehat{E}_{\beta}:=\frac{1}{\sqrt{2md_{\beta}^{2}+8J}}\left(
\begin{array}{c}
\sqrt{m}d_{\beta}\sin\psi \\
-\sqrt{m}d_{\beta}\cos\psi \\
-\sqrt{m}d_{\beta}\sin\psi \\
\sqrt{m}d_{\beta}\cos\psi \\
2\sqrt{J} \\
2\sqrt{J}
\end{array}
\right).
\end{equation*}
In turn, we denote the associated \textbf{canonical physical scattering family} by $\mathscr{S}_{\ast}$, namely
\begin{equation*}
\mathscr{S}_{\ast}:=\{\sigma_{\beta}^{\ast}\}_{\beta\in\mathbb{T}^{3}},    
\end{equation*}
while we denote the associated \textbf{non-canonical physical scattering family} by $\mathscr{S}_{\times}$, i.e.
\begin{equation*}
\mathscr{S}_{\times}:=\{\sigma_{\beta}^{\times}\}_{\beta\in\mathbb{T}^{3}}.    
\end{equation*}
In general, the method one employs to characterise $\mathscr{S}$-collision invariants depends on the structure of the scattering maps in $\mathscr{S}$. In the case of non-canonical scattering, it is evident that $\sigma_{\beta}^{\times}$ as a matrix is not conjugate to a reflection matrix. As such, in order to employ the results of Viterbo in (\cite{saint2018collision}, Appendix), we search for methods appropriate to the structure of the members of the non-canonical physical scattering family $\mathscr{S}_{\times}$ itself. 
\subsubsection{Physical Interpretation of Non-uniqueness}
When written in their component forms, as opposed to their matrix forms, the physical scattering families readily admit a physical distinction from one another. Indeed, in the case of canonical scattering, it can be shown that for any $V\in\mathbb{R}^{6}$ the scattering variable $V_{\beta}'=\sigma_{\beta}^{\ast}[V]$ is written element-wise as
\begin{equation*}
\left\{
\begin{array}{l}
\displaystyle v_{\beta}':=v-\frac{1}{m\Lambda_{\beta}}(v+\omega p_{\beta}^{\perp}-\overline{v}-\overline{\omega}q_{\beta}^{\perp})\cdot n_{\beta} n_{\beta},\vspace{2mm}\\
\displaystyle  \overline{v}_{\beta}':=\overline{v}+\frac{1}{m\Lambda_{\beta}}(v+\omega p_{\beta}^{\perp}-\overline{v}-\overline{\omega}q_{\beta}^{\perp})\cdot n_{\beta} n_{\beta}, \vspace{2mm}\\
\displaystyle \omega_{\beta}':=\omega-\frac{1}{J\Lambda_{\beta}}(v+\omega p_{\beta}^{\perp}-\overline{v}-\overline{\omega}q_{\beta}^{\perp})\cdot n_{\beta}p_{\beta}^{\perp}\cdot n_{\beta}, \vspace{2mm}\\
\displaystyle \overline{\omega}_{\beta}':=\overline{\omega}+\frac{1}{J\Lambda_{\beta}}(v+\omega p_{\beta}^{\perp}-\overline{v}-\overline{\omega}q_{\beta}^{\perp})\cdot n_{\beta}q_{\beta}^{\perp}\cdot n_{\beta}
\end{array}
\right.
\end{equation*}
where $\Lambda_{\beta}>0$ is defined to be
\begin{equation*}
\Lambda_{\beta}:=\left|
\frac{2}{m}+\frac{1}{J}(p_{\beta}^{\perp}\cdot n_{\beta})^{2}+\frac{1}{J}(q_{\beta}^{\perp}\cdot n_{\beta})^{2}
\right|^{2},
\end{equation*}
and $p_{\beta}, q_{\beta}, n_{\beta}\in\mathbb{R}^{2}$ are the so-called \emph{collision vector}, \emph{conjugate collision vector}, and \emph{collision normal}, respectively, as defined in (\cite{saint2018collision}, Section 3.3). However, in the case of non-canonical scattering, $V_{\beta}'=\sigma_{\beta}^{\times}[V]$ is element-wise given by
\begin{equation*}
\left\{
\begin{array}{l}
\displaystyle v_{\beta}':=\overline{v}+\frac{1}{md_{\beta}^{2}+4J}\left(md_{\beta}v-2J\omega e(\psi)^{\perp}-m d_{\beta}\overline{v}-2J\overline{\omega}e(\psi)^{\perp}\right)\cdot e(\psi)^{\perp}d_{\beta}e(\psi)^{\perp}, \vspace{2mm}\\
\displaystyle  \overline{v}_{\beta}':=v-\frac{1}{md_{\beta}^{2}+4J}\left(md_{\beta}v-2J\omega e(\psi)^{\perp}-m d_{\beta}\overline{v}-2J\overline{\omega}e(\psi)^{\perp}\right)\cdot e(\psi)^{\perp}d_{\beta}e(\psi)^{\perp}, \vspace{2mm}\\
\displaystyle \omega_{\beta}':=-\omega-\frac{2}{md_{\beta}^{2}+4J}\left(md_{\beta}v-2J\omega e(\psi)^{\perp}-m d_{\beta}\overline{v}-2J\overline{\omega}e(\psi)^{\perp}\right)\cdot e(\psi)^{\perp}, \vspace{2mm}\\
\displaystyle \overline{\omega}_{\beta}':=-\overline{\omega}-\frac{2}{md_{\beta}^{2}+4J}\left(md_{\beta}v-2J\omega e(\psi)^{\perp}-m d_{\beta}\overline{v}-2J\overline{\omega}e(\psi)^{\perp}\right)\cdot e(\psi)^{\perp}.
\end{array}
\right.
\end{equation*}
As such, in the case of canonical scattering, the direction of linear impulse at collision is proportional to $n_{\beta}\in\mathbb{S}^{1}$, the normal at the point of contact at collision. However, in the case of non-canonical scattering, the direction of linear impulse is proportional to $e(\psi)^{\perp}\in\mathbb{S}^{1}$, the direction orthogonal to that connecting the centres of mass of the two hard particles at collision.
\begin{rem}
In \cite{wilkinson2020non}, the case of scattering maps which obey the Jacobian equation
\begin{equation*}
\mathrm{det}\,D\sigma_{\beta}[V]=1    
\end{equation*}
for all $V\in\mathbb{R}^{6}$ was also treated. Indeed, it was shown that even in the case of two-dimensional hard particles, there exist infinitely-many solutions of this equation which also obey the algebraic conservation laws of classical physics. We do not consider collision invariants for their physical scattering families in this article.
\end{rem}
\subsection{Rigorous Statement of Main Results}
In the sections leading up to the statement of our main result below, we have placed an emphasis on the fact that group theory plays a significant role in the characterisation of collision invariants, no matter the physical scattering family one studies. The general problem of characterisation of $\mathscr{S}$-collision invariants reduces, essentially, to the characterisation of the scattering (Lie) groups $G(\vartheta, \overline{\vartheta})$ given by
\begin{equation*}
G(\vartheta, \overline{\vartheta}):=\langle\{\sigma_{\beta}\}_{\psi\in\mathbb{S}^{1}} \rangle
\end{equation*}
for each pair of orientations $(\vartheta, \overline{\vartheta})\in\mathbb{T}^{2}$. We claim that, within the context in which we work, it is in general \emph{easier} to characterise the Lie algebra $\mathfrak{g}(\vartheta, \overline{\vartheta})$ of scattering Lie groups $G(\vartheta, \overline{\vartheta})$ than it is to characterise the scattering Lie groups themselves. We do this by `probing' the Lie algebra of $G(\vartheta, \overline{\vartheta})$ through considering only paths on $G(\vartheta, \overline{\vartheta})$ determined by $\{\sigma_{\beta}\}_{\psi\in\mathbb{S}^{1}}$, namely curves of matrices determined by the scattering family $\mathscr{S}$ by fixing the orientation pair $(\vartheta, \overline{\vartheta})$. Details of this `probing' in the more familiar case of non-rotational hard spheres are outlined in section \ref{hardsphereproof} below.

We now state the main result of this article, which is the mathematical formulation of statement \eqref{wordymaintheorem} above.
\begin{thm}[Characterisation of Collision Invariants for Non-canonical Physical Scattering $\mathscr{S}_{\times}$ in Two Dimensions]\label{maintheorem}
Suppose that $\mathsf{P}_{\ast}\subset\mathbb{R}^{2}$ is a compact, convex set whose boundary $\partial\mathsf{P}_{\ast}$ is of class $\mathscr{C}^{1}$. Suppose that $\varphi\in C^{1}(\mathbb{R}^{2}\times\mathbb{R}\times\mathbb{S}^{1})$ is an $\mathscr{S}_{\times}$-collision invariant, i.e. it satisfies the functional identity
\begin{equation*}
\varphi(v_{\beta}', \omega_{\beta}', \vartheta)+\varphi(\overline{v}_{\beta}', \overline{\omega}_{\beta}', \overline{\vartheta})=\varphi(v, \omega, \vartheta)+\varphi(\overline{v}, \overline{\omega}, \overline{\vartheta})
\end{equation*}
for all $v, \overline{v}\in\mathbb{R}^{2}$, $\omega, \overline{\omega}\in\mathbb{R}$, and all $\beta=(\psi, \vartheta, \overline{\vartheta})\in\mathbb{T}^{3}$, where the non-canonical scattering velocity variables $v_{\beta}', \overline{v}_{\beta}'\in\mathbb{R}^{2}$ and $\omega_{\beta}', \overline{\omega}_{\beta}'\in\mathbb{R}$ are given by $V_{\beta}'=\sigma_{\beta}^{\times}[V]$. It follows that $\varphi$ is necessarily of the form
\begin{equation*}
\varphi(v, \omega, \vartheta)=a(\vartheta)+b\cdot v+c\left(m|v|^{2}+J\omega^{2}\right)    
\end{equation*}
for some $a\in C^{1}(\mathbb{S}^{1})$, $b\in\mathbb{R}^{2}$, and $c\in\mathbb{R}$.
\end{thm}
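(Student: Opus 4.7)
My strategy is to convert the functional identity for $\varphi$ into a Lie-algebra invariance statement for the symmetrised map $\Phi$, to identify the scattering Lie algebra $\mathfrak{g}(\vartheta,\overline{\vartheta})$ associated to $\mathscr{S}_{\times}$, and finally to impose separability to read off the form of $\varphi$. I set
$$\Phi(V,\vartheta,\overline{\vartheta}):=\varphi(v,\omega,\vartheta)+\varphi(\overline{v},\overline{\omega},\overline{\vartheta}),$$
fix the orientations $(\vartheta,\overline{\vartheta})\in\mathbb{T}^{2}$ throughout, and abbreviate $\widehat{E}_{\psi}\equiv\widehat{E}_{(\psi,\vartheta,\overline{\vartheta})}$. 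The hypothesis then becomes $\Phi(S_{\beta}^{\times}V,\vartheta,\overline{\vartheta})=\Phi(V,\vartheta,\overline{\vartheta})$ for all $V\in\mathbb{R}^{6}$ and $\psi\in\mathbb{S}^{1}$, so $\Phi(\cdot,\vartheta,\overline{\vartheta})$ is invariant under the scattering Lie group $G(\vartheta,\overline{\vartheta}):=\langle\{S_{\beta}^{\times}\}_{\psi\in\mathbb{S}^{1}}\rangle$. The $\mathscr{C}^{1}$ hypothesis on $\varphi$ permits differentiation along smooth one-parameter curves in $G(\vartheta,\overline{\vartheta})$, yielding the infinitesimal invariance $\nabla_{V}\Phi(V,\vartheta,\overline{\vartheta})\cdot XV=0$ for every $X\in\mathfrak{g}(\vartheta,\overline{\vartheta})$.

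In the coordinates $U=MV$, each generator becomes $A_{\beta}:=2P_{\beta}-I_{6}\in\mathrm{O}(6)$, where $P_{\beta}$ is the orthogonal projection onto $\mathrm{span}(\widehat{E}_{1},\widehat{E}_{2},\widehat{E}_{\psi})$. Differentiating the smooth curve $s\mapsto A_{(\psi+s,\vartheta,\overline{\vartheta})}A_{(\psi,\vartheta,\overline{\vartheta})}$, which passes through $I_{6}$ at $s=0$, and using that $\widehat{E}_{\psi}\perp\widehat{E}_{1},\widehat{E}_{2}$ for every $\psi$ (whence $\widehat{E}_{\psi}\cdot\widehat{E}_{\psi}'=0$ and $P_{\psi}\widehat{E}_{\psi}'=0$), gives
$$X(\psi)=2\bigl(\widehat{E}_{\psi}'\otimes\widehat{E}_{\psi}-\widehat{E}_{\psi}\otimes\widehat{E}_{\psi}'\bigr)\in\mathfrak{g}(\vartheta,\overline{\vartheta}),$$
which is antisymmetric in $\mathfrak{o}(6)$. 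Inspection of the explicit formula for $\widehat{E}_{\beta}$ reveals that $\widehat{E}_{\psi}$ is orthogonal to the fixed vector $g:=\tfrac{1}{\sqrt{2}}(0,0,0,0,1,-1)^{T}$ for every $\psi$, so that $W:=\{\widehat{E}_{1},\widehat{E}_{2}\}^{\perp}\cong\mathbb{R}^{4}$ splits as $W_{0}\oplus\mathbb{R}g$ with $W_{0}\ni\widehat{E}_{\psi}$ three-dimensional, and $A_{\beta}|_{W}$ decomposes as a $\pi$-rotation about $\widehat{E}_{\psi}$ in $W_{0}$ together with the sign flip $g\mapsto -g$. Consequently $\mathfrak{g}(\vartheta,\overline{\vartheta})\subseteq\mathfrak{so}(W_{0})\cong\mathfrak{so}(3)$.

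The principal technical obstacle is to show this inclusion is an \emph{equality}. Since $\mathfrak{g}(\vartheta,\overline{\vartheta})$ is a finite-dimensional closed subspace of $\mathfrak{o}(6)$ containing the smooth family $\{X(\psi)\}_{\psi\in\mathbb{S}^{1}}$, it also contains its $\psi$-derivatives $X'(\psi)=2\widehat{E}_{\psi}''\wedge\widehat{E}_{\psi}$ and $X''(\psi)=2(\widehat{E}_{\psi}'''\wedge\widehat{E}_{\psi}+\widehat{E}_{\psi}''\wedge\widehat{E}_{\psi}')$. It therefore suffices to verify that $\{\widehat{E}_{\psi_{0}},\widehat{E}_{\psi_{0}}',\widehat{E}_{\psi_{0}}''\}$ is linearly independent in $W_{0}$ at some $\psi_{0}\in\mathbb{S}^{1}$, for this forces the three bivectors $\widehat{E}_{\psi_{0}}\wedge\widehat{E}_{\psi_{0}}'$, $\widehat{E}_{\psi_{0}}\wedge\widehat{E}_{\psi_{0}}''$, $\widehat{E}_{\psi_{0}}'\wedge\widehat{E}_{\psi_{0}}''$ to span the three-dimensional space $\Lambda^{2}W_{0}\cong\mathfrak{so}(W_{0})$. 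This non-degeneracy has to be checked uniformly in the shape of $\partial\mathsf{P}_{\ast}$, which enters only through the smooth function $d_{\beta}(\psi)$; the limiting case $d_{\beta}\equiv\mathrm{const}$ produces a latitude circle on $\mathbb{S}^{2}\subset W_{0}$ for which the three vectors remain independent provided $m,J>0$, and recovers the hard-sphere calculation of Section~\ref{hardsphereproof}.

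With $\mathfrak{g}(\vartheta,\overline{\vartheta})=\mathfrak{so}(W_{0})$ in hand, the orbits of $G(\vartheta,\overline{\vartheta})$ on $\mathbb{R}^{6}$ are joint level sets of the four polynomial invariants $\widehat{E}_{1}\cdot V$, $\widehat{E}_{2}\cdot V$, $|MV|^{2}$ and $(g\cdot MV)^{2}\propto(\omega-\overline{\omega})^{2}$, so $\Phi(\cdot,\vartheta,\overline{\vartheta})$ is necessarily a function of these four quantities alone. Imposing the separable structure $\Phi(V,\vartheta,\overline{\vartheta})=\varphi(v,\omega,\vartheta)+\varphi(\overline{v},\overline{\omega},\overline{\vartheta})$ and requiring that the mixed partial derivatives $\partial_{v_{i}}\partial_{\overline{v}_{j}}\Phi$ and $\partial_{\omega}\partial_{\overline{\omega}}\Phi$ all vanish, I eliminate any dependence on $(\omega-\overline{\omega})^{2}$ (which would introduce an $\omega\overline{\omega}$ cross-term), and force $\Phi$ to be affine in each of $\widehat{E}_{1}\cdot V$, $\widehat{E}_{2}\cdot V$, $|MV|^{2}$ with linear and quadratic coefficients constant in $(\vartheta,\overline{\vartheta})$; the residual additive $(\vartheta,\overline{\vartheta})$-dependence then splits as $a(\vartheta)+a(\overline{\vartheta})$ by the symmetry of the defining functional equation under exchange of the two particles, yielding the claimed form $\varphi(v,\omega,\vartheta)=a(\vartheta)+b\cdot v+c(m|v|^{2}+J\omega^{2})$ for some $a\in\mathscr{C}^{1}(\mathbb{S}^{1})$, $b\in\mathbb{R}^{2}$ and $c\in\mathbb{R}$.
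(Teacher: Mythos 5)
Your overall architecture coincides with the paper's: both reduce the functional equation to invariance of a $\mathscr{C}^{1}$ function under the scattering Lie group with the orientations $(\vartheta,\overline{\vartheta})$ frozen, both observe that the generating unit vectors sweep out only a three-dimensional subspace (your $W_{0}$, orthogonal to $\widehat{E}_{1},\widehat{E}_{2}$ and to the fixed vector $g$, is exactly the paper's subspace $\{(a,b,c,c)\}$ reached after its chain of explicit coordinate changes), both identify the relevant Lie algebra as $\mathfrak{so}(3)$ acting on that subspace, and both finish by separating variables in the resulting list of orbit invariants. Working directly with the conjugated generators $2P_{\beta}-I_{6}$ in $\mathbb{R}^{6}$ rather than passing through the paper's successive reductions is a legitimate and somewhat cleaner packaging of the same idea, and your computation of $X(\psi)=2\bigl(\widehat{E}_{\psi}'\otimes\widehat{E}_{\psi}-\widehat{E}_{\psi}\otimes\widehat{E}_{\psi}'\bigr)$ is correct.

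The genuine gap is in the step you yourself flag: proving that $\mathfrak{g}(\vartheta,\overline{\vartheta})$ is all of $\mathfrak{so}(W_{0})$ rather than a proper subalgebra. Your route goes through $X'(\psi)$ and $X''(\psi)$, hence through $\widehat{E}_{\psi}''$ and $\widehat{E}_{\psi}'''$, hence through second and third $\psi$-derivatives of the distance of closest approach $d_{\beta}$. Under the standing hypothesis that $\partial\mathsf{P}_{\ast}$ is merely of class $\mathscr{C}^{1}$ (and the paper is explicit that no further regularity or symmetry is imposed on the particle), these derivatives need not exist, so the objects your argument differentiates are not available. Moreover, even granting smoothness, the linear independence of $\{\widehat{E}_{\psi_{0}},\widehat{E}_{\psi_{0}}',\widehat{E}_{\psi_{0}}''\}$ at some $\psi_{0}$ is verified only in the constant-$d_{\beta}$ limit and otherwise deferred (``has to be checked uniformly in the shape''); but this is precisely where the geometry of a general convex $\mathsf{P}_{\ast}$ enters the proof, so it cannot be left as an aside. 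The paper closes exactly this gap using first derivatives only: it shows that the matrices $\partial_{\psi}\widehat{\gamma}_{\beta}\otimes\widehat{\gamma}_{\beta}-\widehat{\gamma}_{\beta}\otimes\partial_{\psi}\widehat{\gamma}_{\beta}$, taken at \emph{several} values of $\psi$, already span $\mathfrak{so}(3)$, the key input being that the family $\{\widehat{\gamma}_{\beta}\}_{\psi\in\mathbb{S}^{1}}$ cannot lie in a two-dimensional subspace --- for otherwise $d_{\beta}$ would have to equal an explicit $\mathrm{sech}$ profile in $\psi$, which is absurd. If you replace your higher-derivative computation at a single point by this multi-point, first-derivative span argument (or reproduce the $\mathrm{sech}$ contradiction), the remainder of your proof goes through; your final separation-of-variables step is asserted rather than carried out, but it is routine and parallels the paper's appeal to Truesdell--Muncaster after setting $\overline{v}=0$ and $\overline{\omega}=0$.
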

We remark that, for our main result, we can relax the hypotheses placed in Saint-Raymond and Wilkinson \cite{saint2018collision} on $\mathsf{P}_{\ast}\subset\mathbb{R}^{2}$ in the case $\mathscr{S}_{\ast}$-collision invariants. Indeed, our result for $\mathscr{S}_{\times}$-collision invariants holds for compact, convex reference sets $\mathsf{P}_{\ast}$, rather than simply \emph{strictly convex} reference sets with symmetry. 
\subsection{Further Notation}
In what follows, for any integer $N\geq 1$, $\mathscr{L}^{N}$ denotes the Lebesgue measure on $\mathbb{R}^{N}$, $\mathscr{H}$ denotes the Hausdorff measure on $\mathbb{S}^{2}$, $\mu$ denotes the Haar measure on $SO(3)$, and $\nu$ denotes the Haar measure on $\mathbb{T}^{3}$. 

For any integer $q\geq 2$, $\mathrm{SO}(q)$ and $\mathrm{O}(q)$ denote the special orthogonal group and orthogonal group which act on $\mathbb{R}^{q}$, respectively. Moreover, $\mathfrak{so}(q)$ denotes the Lie algebra of skew-symmetric $q\times q$ matrices with real entries. 

For any integer $N\geq 2$, $I_{N}$ denotes the identity matrix in $\mathbb{R}^{N\times N}$, whilst $O_{N}$ denotes the zero matrix. For any $A, B\in\mathbb{R}^{N\times N}$, $A:B$ denotes the Frobenius inner product of $A$ and $B$.
\section{A New Lie Algebraic-Proof in the Case of Hard Spheres}\label{hardsphereproof}
Before we tackle the more complicated case of convex non-spherical particles in this paper in section \ref{newscatteringproof} below, to illustrate our new method, let us revisit the well-studied case of spherical particles which has been tackled by Boltzmann \cite{boltzmann2012wissenschaftliche}, Gr\"{o}nwall \cite{MR1503514}, and Cercignani \cite{MR1049048}, among others, under various regularity or summability hypotheses on the collision invariant. In all that follows, we assume without loss of generality that the mass of the hard spheres is normalised to unity, and that their radii are both $\frac{1}{2}$.

We recall once again that a continuously-differentiable map $\phi\in C^{1}(\mathbb{R}^{3})$ is a collision invariant for the Boltzmann equation for spherical particles if and only if it satisfies the functional equation
\begin{equation}\label{funkyspheres}
\phi(v_{n}')+\phi(\overline{v}_{n}')=\phi(v)+\phi(\overline{v}) 
\end{equation}
for all $v, \overline{v}\in\mathbb{R}^{3}$ and all $n\in\mathbb{S}^{2}$, where $v_{n}'\in\mathbb{R}^{3}$ and $\overline{v}_{n}'\in\mathbb{R}^{3}$ denote the $n$-dependent velocity vectors 
\begin{equation*}
\left\{
\begin{array}{l}
v_{n}':=v-((v-\overline{v})\cdot n)n, \vspace{2mm}\\
\overline{v}_{n}':=\overline{v}+((v-\overline{v})\cdot n)n.
\end{array}
\right.
\end{equation*}
The basic idea underlying our new approach is that, subject to a sequence of transformations which we detail below in section \ref{lift} onwards, the problem of finding a map $\phi\in C^{1}(\mathbb{R}^{3})$ satisfying the functional equation \eqref{funkyspheres} is equivalent to finding a map $f\in C^{1}(\mathbb{R}^{3\times 3})$ satisfying
\begin{equation}\label{simps}
f(g)=f(I_{3})    
\end{equation}
for all $g\in G_{\ast}$, where $G_{\ast}\subseteq \mathrm{O}(3)$ is termed the \emph{reduced scattering group} associated to the physical scattering family $\mathscr{S}=\{\sigma_{n}\}_{n\in\mathbb{S}^{2}}\subset\mathrm{O}(6)$ for hard spheres. In other words, the problem of characterisation of collision invariants is essentially one of characterising functions which are constant on a group. 

Unlike in the article \cite{saint2018collision}, in which the collision invariant is assumed only to be measurable with respect to the Lebesgue sigma algebra on $\mathbb{R}^{3}$, we assume herein that $\phi\in C^{1}(\mathbb{R}^{3})$, which in turn leads to $f$ being a continuously-differentiable map. This regularity assumption permits us to use information from the Lie algebra associated to $G_{\ast}$ to infer structural information on $f$. Indeed, by differentiation at the identity $I\in G_{\ast}$, it follows from \eqref{simps} that
\begin{equation*}
A:Df(I_{3})=0    
\end{equation*}
for all $A\in\mathfrak{g}_{\ast}$, where $\mathfrak{g}_{\ast}$ denotes the Lie algebra associated to the reduced scattering group $G_{\ast}$. As it can be shown that $\mathfrak{g}_{\ast}=\mathfrak{so}(3)$, it follows that $Df(I_{3})\in\mathbb{R}^{3\times 3}$ is a symmetric matrix. As a consequence of the symmetry of $Df(I_{3})$, we claim it follows that 
\begin{equation*}
\phi(v)=\widetilde{\Phi}(v, |v|^{2})    
\end{equation*}
for some map $\widetilde{\Phi}:\mathbb{R}^{3}\times [0, \infty)\rightarrow\mathbb{R}$, from which it follows by a standard argument that 
\begin{equation*}
\phi(v)=a+b\cdot v+c|v|^{2}    
\end{equation*}
for some $a, c\in\mathbb{R}$ and $b\in\mathbb{R}^{3}$. To the knowledge of the author, the material in this section constitutes a novel proof of the classical result of Boltzmann \cite{boltzmann2012wissenschaftliche} in the case that the collision invariant is assumed to be of class $\mathscr{C}^{1}$. Let us now elaborate on the method outlined above.
\subsection{Lifting the Problem from $\mathbb{R}^{3}$ to $\mathbb{R}^{6}$}\label{lift}
As was first noted in \cite{saint2018collision}, it is possible to rewrite the functional identity \eqref{funkyspheres} so as to bring out the role of group theory in the characterisation of collision invariants. Indeed, let us assume there exists \emph{at least} one $\phi\in C^{1}(\mathbb{R}^{3})$ which satisfies \eqref{funkyspheres}. Now, we define a new $\phi$-dependent auxiliary function $\Phi_{\phi}:\mathbb{R}^{6}\rightarrow\mathbb{R}^{6}$ by
\begin{equation*}
\Phi_{\phi}(V):=\phi(v)+\phi(\overline{v})    
\end{equation*}
for all $V=[v, \overline{v}]\in\mathbb{R}^{6}$, where as above square brackets $[\cdot, \cdot]$ denote the natural concatenation operator on pairs of vectors in $\mathbb{R}^{3}$ which produces a vector in $\mathbb{R}^{6}$, i.e.
\begin{equation*}
[v, \overline{v}]:=(v_{1}, v_{2}, v_{3}, \overline{v}_{1}, \overline{v}_{2}, \overline{v}_{3})  
\end{equation*}
for $v=(v_{1}, v_{2}, v_{3})\in\mathbb{R}^{3}$ and $\overline{v}=(\overline{v}_{1}, \overline{v}_{2}, \overline{v}_{3})\in\mathbb{R}^{3}$. In turn, the concatenated vector $V_{n}':=[v_{n}', \overline{v}_{n}']$ may be written as
\begin{equation*}
V_{n}'=\sigma_{n}V \quad\text{in}\hspace{2mm}\mathbb{R}^{6},    
\end{equation*}
where $\sigma_{n}\in\mathrm{O}(6)$ denotes the reflection matrix given by
\begin{equation*}
\sigma_{n}:=I_{6}-2\widehat{\nu}_{n}\otimes \widehat{\nu}_{n}    
\end{equation*}
with $\widehat{\nu}_{n}\in\mathbb{S}^{5}$ denoting the unit vector \begin{equation}\label{bigennspherenormal}
\widehat{\nu}_{n}:=\frac{1}{\sqrt{2}}\left[ 
\begin{array}{c}
n \\ 
-n
\end{array}
\right].
\end{equation}
It can be proved easily that $\sigma_{n}:V\mapsto V_{n}'$ effects the unique map on $\mathbb{R}^{6}$, linear or otherwise, which conserves the total linear momentum, angular momentum, and kinetic energy of its argument, whilst also ensuring that the hard particles do not interpenetrate following a collision. In turn, we term the collection of matrices $\mathscr{S}:=\{\sigma_{n}\}_{n\in\mathbb{S}^{2}}$ the \emph{physical scattering family} associated to the dynamics of two hard spheres. 

It follows from the above that the new auxiliary function $\Phi_{\phi}$ satisfies the identity
\begin{equation}\label{newidone}
\Phi_{\phi}(\sigma_{n}V)=\Phi_{\phi}(V)    
\end{equation}
for all $V\in\mathbb{R}^{6}$ and all $n\in\mathbb{S}^{2}$. In turn, from identity \eqref{newidone}, we may infer by iteration that
\begin{equation*}
\Phi_{\phi}\left(\prod_{i=1}^{N}\sigma_{n(i)}V\right)=\Phi_{\phi}(V)    
\end{equation*}
for any integer $N\geq 1$ and any collection of unit vectors $\{n(i)\}_{i=1}^{N}\subset \mathbb{S}^{2}$. In turn, as $\Phi_{\phi}$ is of class $\mathscr{C}^{1}$ on $\mathbb{R}^{6}$ and therefore continuous thereon, it follows that the auxiliary function $\Phi_{\phi}$ satisfies the identity
\begin{equation}\label{bigphiid}
\Phi_{\phi}(g V)=\Phi_{\phi}(V)       
\end{equation}
for all $g\in G$, where $G\subset\mathrm{O}(6)$ denotes the matrix group generated by the family of physical scattering maps $\mathscr{S}:=\{\sigma_{n}\}_{n\in\mathbb{S}^{2}}$, namely
\begin{equation*}
G:=\langle \mathscr{S}\rangle.
\end{equation*}
Moving forward, we shall term the group $G$ the \textbf{scattering group} associated to the unique physical scattering family $\mathscr{S}$ for hard spheres.

By the work of Eaton and Perlman \cite{MR0463329}, owing to the fact that $G\subset\mathrm{O}(6)$ is generated by a smoothly-parametrised family of reflection matrices in $\mathrm{O}(6)$, the scattering group $G$ is a strict subset of $\mathrm{O}(6)$ that admits the structure of a Lie group. As such, by the above argument, the problem of characterisation of collision invariants $\phi$ is settled by the characterisation of functions on $\mathbb{R}^{6}$ which are invariant under the action of the Lie group $G$. However, this characterisation problem for $G$-invariant functions is made more straightforward by noting that it may be reduced to one for the characterisation of $O(3)$-invariant functions, as we detail now.
\subsection{Dimension Reduction}
An observation which makes the task of finding all maps $\Phi_{\phi}$ which satisfy \eqref{bigphiid} above much simpler is that this characterisation problem is essentially one concerning maps on the 2-sphere $\mathbb{S}^{2}$, \emph{not} on Euclidean 6-space $\mathbb{R}^{6}$. Indeed, given a momentum vector $p\in\mathbb{R}^{3}$ and a value for the kinetic energy $e^{2}>0$ belonging to the admissible set $\mathcal{A}$ defined by
\begin{equation*}
\mathcal{A}:=\left\{(e, p)\in (0, \infty)\times\mathbb{R}^{3}\,:\, e^{2}>\frac{1}{2}|p|^{2}\right\},    
\end{equation*}
the associated \emph{energy-momentum manifold} $\mathcal{M}(e, p)\subset\mathbb{R}^{6}$ is defined to be
\begin{equation*}
\mathcal{M}(p, e):=\left\{
V\in\mathbb{R}^{6}\,:\, |V|^{2}=e^{2}\hspace{2mm}\text{and}\hspace{2mm} \left( \begin{array}{c}
V_{1} + V_{4} \\
V_{2} + V_{5} \\
V_{3} + V_{6}
\end{array}
\right)=p
\right\}
\end{equation*}
It is readily verified that, for any $(e, p)\in\mathcal{A}$, the energy-momentum manifold $\mathcal{M}(e, p)$ is diffeomorphic to the 2-sphere $\mathbb{S}^{2}$. Explicitly, if we define the map $H:\mathcal{A}\times\mathbb{R}^{3}\setminus\{0\}\rightarrow\mathbb{R}^{6}\setminus\{0\}$ pointwise by
\begin{equation*}
H(e, p, y):=\frac{1}{2}\left[
\begin{array}{c}
p-\sqrt{2e^{2}-|p|^{2}}y \\
p+\sqrt{2e^{2}-|p|^{2}}y
\end{array}
\right]    
\end{equation*}
for $(e, p, y)\in\mathcal{A}\times\mathbb{R}^{3}\setminus\{0\}$, it can be checked that $H$ is smooth, that the restriction map $H|_{\mathcal{A}\times\mathbb{S}^{2}}:\mathcal{A}\times\mathbb{S}^{2}\rightarrow H|_{\mathcal{A}\times\mathbb{S}^{2}}(\mathcal{A}\times\mathbb{S}^{2})$ is a diffeomorphism, and that
\begin{equation*}
(H|_{\mathcal{A}\times\mathbb{S}^{2}})^{-1}(\mathcal{M}(e, p))=\{(e, p)\}\times\mathbb{S}^{2}.    
\end{equation*}
When we recast the collision invariant identity \eqref{bigphiid} in the coordinate system determined by the map $H|_{\mathcal{A}\times\mathbb{S}^{2}}$, we may reduce the dimension of the domain of the map to be characterised. To this aim, we define a new $\phi$-dependent map $\Psi_{\phi}:\mathcal{A}\times\mathbb{R}^{3}\setminus\{0\}\rightarrow\mathbb{R}$ by
\begin{equation*}
\Psi_{\phi}(e, p, y):=\Phi_{\phi}(H(e, p, y))    
\end{equation*}
for $(e, p, y)\in\mathcal{A}\times\mathbb{R}^{3}\setminus\{0\}$. A brief calculation reveals that
\begin{equation*}
\sigma_{n}H(e, p, y)=H(e, p, s_{n}y),    
\end{equation*}
for all $n, y\in\mathbb{S}^{2}$, where $s_{n}\in\mathrm{O}(3)$ is the $n$-dependent reflection matrix defined by
\begin{equation*}
s_{n}:=I_{3}-2n\otimes n.    
\end{equation*}
In turn, the identity \eqref{bigphiid} takes the form
\begin{equation*}
\Phi_{\phi}(\sigma_{n}H(e, p, m))=\Phi_{\phi}(H(e, p, m)) 
\end{equation*}
which, by definition of the auxiliary function $\Psi_{\phi}$, yields the identity
\begin{equation}\label{bigpsiid}
\Psi_{\phi}(e, p, s_{n}y)=\Psi_{\phi}(e, p, y)    
\end{equation}
for all $(e, p, y)\in\mathcal{A}\times\mathbb{S}^{2}$ and all $n\in\mathbb{S}^{2}$. By way of this identity, we note that the value of energy $e$ and momentum $p$ is essentially immaterial in the characterisation of collision invariants. As such, we shall continue to transform the identity \eqref{bigpsiid} so as to suppress the role of both energy and momentum. 

From a notational point of view, we suppress the role of arguments of a function by relegating them to subscript parameters thereof. Indeed, now defining the $\phi$- and $(e, p)$-dependent map $\Omega_{\phi, e, p}:\mathbb{R}^{3}\setminus\{0\}\rightarrow\mathbb{R}$ pointwise by
\begin{equation*}
\Omega_{\phi, e, p}(y):=\Psi_{\phi}(e, p, y)   
\end{equation*}
for $y\in\mathbb{R}^{3}\setminus\{0\}$, it follows from the identity \eqref{bigpsiid} that $\Omega_{\phi, e, p}$ satisfies
\begin{equation*}
\Omega_{\phi, e, p}(s_{n}y)=\Omega_{\phi, e, p}(y)    
\end{equation*} 
for all $n, y\in\mathbb{S}^{2}$. Mirroring the approach of the previous section, we infer from an iteration argument that
\begin{equation*}
\Omega_{\phi, e, p}\left(\prod_{i=1}^{N}s_{n(i)}y\right)=\Omega_{\phi, e, p}(y)    
\end{equation*}
for any integer $N\geq 1$ and any collection of unit vectors $\{n(i)\}_{i=1}^{N}\subset\mathbb{S}^{2}$. Thus, as $\Omega_{\phi, e, p}\in C^{1}(\mathbb{R}^{3}\setminus\{0\})$ owing to the fact that $\phi\in C^{1}(\mathbb{R}^{3})$, we infer that
\begin{equation}\label{bigomegaid}
\Omega_{\phi, e, p}(gy)=\Omega_{\phi, e, p}(y)    
\end{equation}
for all $y\in\mathbb{S}^{2}$ and for all elements $g\in G_{\ast}$, where $G_{\ast}\subset\mathrm{O}(3)$ denotes the \textbf{reduced scattering group} associated to the unique physical scattering family $\mathscr{S}$ for hard spheres given by
\begin{equation*}
G_{\ast}:=\langle \{s_{n}\,:\,n\in\mathbb{S}^{2}\}\rangle.  
\end{equation*}
The reader will note that, by virtue of mapping the energy-momentum manifold to the 2-sphere, we have reduced the dimensionality of the problem at hand, in the sense that we seek to characterise all $G_{\ast}$-invariant maps on $\mathbb{R}^{3}$, as opposed to all $G$-invariant maps on $\mathbb{R}^{6}$. Following the approach of \cite{saint2018collision}, the following lemma allows us to determine the reduced scattering group $G_{\ast}$ exactly.
\begin{lem}
The reduced scattering group $G_{\ast}$ is the orthogonal group $O(3)$.
\end{lem}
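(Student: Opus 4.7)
The plan is to obtain this lemma as a direct corollary of the theorem of Eaton--Perlman/Viterbo recalled earlier in the excerpt (the statement quoted from the appendix of \cite{saint2018collision}), which was designed precisely to handle groups generated by smoothly-parametrised families of hyperplane reflections. The approach therefore reduces to verifying the hypotheses of that theorem in the present setting, together with the (essentially trivial) reverse containment.

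First I would record the easy inclusion $G_{\ast}\subseteq\mathrm{O}(3)$: each generator $s_{n}=I_{3}-2n\otimes n$ is a reflection through the hyperplane $n^{\perp}$ and hence an element of $\mathrm{O}(3)$, so the group it generates lies in $\mathrm{O}(3)$. For the reverse inclusion $\mathrm{O}(3)\subseteq G_{\ast}$, I would apply the quoted theorem with parameters $q=3$, $\mathcal{X}=\mathbb{S}^{2}$, and $\varrho:\mathbb{S}^{2}\rightarrow\mathbb{S}^{2}$ the identity map. The three hypotheses are immediate: $\mathbb{S}^{2}$ is path-connected (hence connected) and is certainly not a single point; $\varrho=\mathrm{id}_{\mathbb{S}^{2}}$ is continuous; and $\mathrm{span}\{\varrho(n)\,:\,n\in\mathbb{S}^{2}\}=\mathbb{R}^{3}$ because $\mathbb{S}^{2}$ contains the three standard basis vectors $e_{1},e_{2},e_{3}$ of $\mathbb{R}^{3}$. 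The conclusion of the theorem then reads that the group generated by $\{I_{3}-2n\otimes n\,:\,n\in\mathbb{S}^{2}\}$ is all of $\mathrm{O}(3)$, which is precisely the defining description of $G_{\ast}$.

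There is no substantive obstacle in this argument since all the difficulty has been absorbed into the invoked theorem. If one instead wanted a self-contained proof avoiding that citation, the natural route would be via the Cartan--Dieudonn\'{e} theorem: in dimension $q$, every element of $\mathrm{O}(q)$ is a product of at most $q$ reflections through hyperplanes, and in $\mathbb{R}^{3}$ every such hyperplane has a unit normal $n\in\mathbb{S}^{2}$ (unique up to sign) with associated reflection $s_{n}$; since $G_{\ast}$ contains all of these reflections, it contains all products of at most three of them, hence all of $\mathrm{O}(3)$. Either route reaches the same conclusion, and the Cartan--Dieudonn\'{e} route has the virtue of explaining why the hypothesis $\mathrm{span}\{\varrho(x)\}=\mathbb{R}^{q}$ is the correct one: it guarantees that the $\varrho$-reflections already exhaust the full supply of hyperplane reflections needed to recover $\mathrm{O}(q)$ by Cartan--Dieudonn\'{e}.
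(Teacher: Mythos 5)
Your proposal is correct and follows essentially the same route as the paper, which likewise disposes of the lemma by citing Eaton--Perlman or the Viterbo appendix of \cite{saint2018collision} together with the observation that the reflections $\{s_{n}\}_{n\in\mathbb{S}^{2}}$ are parametrised by $\mathbb{S}^{2}$, whose linear span is $\mathbb{R}^{3}$. Your additional self-contained alternative via Cartan--Dieudonn\'{e} is a valid (and here especially clean) shortcut, since taking $\varrho=\mathrm{id}_{\mathbb{S}^{2}}$ means the generating set already contains \emph{every} hyperplane reflection of $\mathbb{R}^{3}$, but it is not needed for agreement with the paper's argument.
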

\begin{proof}
This follows from \cite{MR0463329} or (\cite{saint2018collision}, Appendix), due to the fact that the set of reflections $\{s_{n}\}_{n\in\mathbb{S}^{2}}$ is parametrised by the 2-sphere $\mathbb{S}^{2}$ whose real linear span is $\mathbb{R}^{3}$.
\end{proof}
As a direct consequence of this lemma, owing to the fact that $\mathrm{O}(3)$ acts transitively on $\mathbb{S}^{2}$, we infer that if $\Omega_{\phi, e, p}$ satisfies \eqref{bigomegaid}, then it is a constant map. In turn, there exists a $\phi$- and $(e, p)$-dependent constant $\widetilde{\Omega}_{\phi, e, p}\in\mathbb{R}$ such that
\begin{equation*}
\Omega_{\phi, e, p}(y)=\widetilde{\Omega}_{\phi, e, p}    
\end{equation*}
for all $y\in\mathbb{S}^{2}$. By translating this information back to the original dependent variable $\Phi_{\phi}$ in terms of which $\Omega_{\phi, e, p}$ is defined, it holds that
\begin{equation*}
\Phi_{\phi}(V)=\widetilde{\Phi}_{\phi}(v+\overline{v}, |V|^{2})    
\end{equation*}
for some function $\widetilde{\Phi}_{\phi}:\mathbb{R}^{3}\times [0, \infty)\rightarrow\mathbb{R}$ and for all $V\in\mathbb{R}^{6}$, whence
\begin{equation*}
\phi(v)=\widetilde{\Phi}_{\phi}(v, |v|^{2})    
\end{equation*}
for all $v=[v, \overline{v}]\in\mathbb{R}^{3}$. A standard argument (such as that contained in the book of Truesdell and Muncaster \cite{truesdell1980fundamentals}) finally yields that the collision invariant $\phi\in C^{1}(\mathbb{R}^{3})$ is necessarily of the form
\begin{equation*}
\phi(v)=a+b\cdot v+c|v|^{2}    
\end{equation*}
for some constants $a, c\in\mathbb{R}$ and some constant vector $b\in\mathbb{R}^{3}$.

The success of the above algebro-geometric method depends on one being able to determine the reduced scattering group $G_{\ast}$ explicitly. However, in the case of non-spherical strictly-convex particles, the determination of the associated reduced scattering group associated to a given physical scattering family is, in general, a non-trivial task. Indeed, one typically faces the task of determining the group
\begin{equation*}
\langle S \rangle\subset\mathrm{O}(q),    
\end{equation*}
where $S\subset\mathrm{O}(q)$ is a strict subset of the orthogonal group of some dimension $q\geq 2$. Thanks to the work of \cite{MR0463329} or (\cite{saint2018collision}, Appendix), this task is achievable when the set $S$ is a collection of reflection matrices whose generating vectors in $\mathbb{S}^{q-1}$ span Euclidean space $\mathbb{R}^{q}$. When $S$ does not admit this structure, determining $\langle S \rangle$ is a challenge.

In preparation for the proof of our main theorem, namely Theorem \ref{maintheorem}, let us suppose in the context of hard spheres (artificially, of course) that we \emph{cannot} determine the reduced scattering group $G_{\ast}$ explicitly, i.e. that we do not have access to the results of \cite{MR0463329} or (\cite{saint2018collision}, Appendix). Under the assumption that $\phi\in C^{1}(\mathbb{R}^{3})$, we shall now demonstrate how information on the Lie algebra $\mathfrak{g}_{\ast}$ associated to the reduced scattering Lie group $G_{\ast}$ makes it possible to obtain the necessary information on $\Omega_{\phi, e, p}$ required for the characterisation of collision invariants of class $\mathscr{C}^{1}$.
\subsection{From Functions of Velocity to Functions of Scattering Matrices}
It is now that we depart from the method employed in Saint-Raymond and Wilkinson \cite{saint2018collision}. It will be convenient to suppress the dependence of our auxiliary dependent variable $\Omega_{\phi, e, p}$ on its $\mathbb{S}^{2}$-argument, and consider a new dependent variable which is a function of elements of the reduced scattering group $G_{\ast}$ alone. Now, as we have shown that the map $\Omega_{\phi, e, p}$ satisfies
\begin{equation*}
\Omega_{\phi, e, p}(gy)=\Omega_{\phi, e, p}(y)    
\end{equation*}
for all $g\in G_{\ast}$ and $y\in\mathbb{S}^{2}$, by defining a new auxiliary $\phi$-, $(e, p)$- and $y$-dependent map $f_{\phi, e, p, y}:\mathbb{R}^{3\times 3}\rightarrow\mathbb{R}$ pointwise by
\begin{equation*}
f_{\phi, e, p, y}(A):=\Omega_{\phi, p, e}(Ay) 
\end{equation*}
for $A\in \mathbb{R}^{3\times 3}\setminus\{0\}$, we obtain from \eqref{bigomegaid} above that $f_{\phi, e, p, y}$ satisfies the identity
\begin{equation}\label{ultsphere}
f_{\phi, e, p, y}(g)=f_{\phi, e, p, y}(I_{3})    
\end{equation}
for all $g\in G_{\ast}$. As $\phi$ being of class $\mathscr{C}^{1}$ implies that $f_{\phi, e, p, y}\in C^{1}(G_{\ast})$, by taking derivatives across identity \eqref{ultsphere} with respect to $g$ at the identity $I_{3}\in G_{\ast}$, we find that
\begin{equation}\label{orthog}
A:Df_{\phi, e, p, y}(I_{3})=0    
\end{equation}
for all $A\in \mathfrak{g}_{\ast}$. The identity \eqref{orthog} is interpreted as an orthogonality relation in the inner product space $\mathbb{R}^{3\times 3}$ endowed with the Frobenius inner product, and yields information on the structure of $Df_{\phi, e, p, y}(I_{3})$.

We are able to obtain the information we require on $Df_{\phi, e, p, y}(I_{3})$ for the purposes of characterising $\phi$ satisfying \eqref{funkyspheres} \emph{without} determining $\mathfrak{g}_{\ast}$ fully, rather by determining only a `large' linear subspace thereof (although, in the case of hard spheres, we can indeed show that $\mathfrak{g}_{\ast}=\mathfrak{so}(3)$). Now, by definition of the reduced scattering matrices $s_{n}\in G_{\ast}$ for $n\in\mathbb{S}^{2}$, we have that $f_{\phi, e, p, y}$ satisfies
\begin{equation*}
f_{\phi, e, p, y}(s_{n(\theta_{1}, \theta_{2})})=f_{\phi, e, p, y}(I_{3})  
\end{equation*}
for all $\theta_{1}\in[0, 2\pi)$ and $\theta_{2}\in [0, \pi)$, where
\begin{equation*}
n(\theta_{1}, \theta_{2}):=\left(
\begin{array}{c}
\cos\theta_{1}\sin\theta_{2} \\
\sin\theta_{1}\sin\theta_{2} \\
\cos\theta_{2}
\end{array}
\right).
\end{equation*}
In turn, one may show that
\begin{equation*}
(\partial_{\theta_{j}}n(\theta_{1}, \theta_{2})\otimes n(\theta_{1}, \theta_{2})-n(\theta_{1}, \theta_{2})\otimes \partial_{\theta_{j}}n(\theta_{1}, \theta_{2})):Df_{\phi, e, p, y}(I_{3})=0   
\end{equation*}
for $j\in\{1, 2\}$. By choosing the angles $\theta_{1}$ and $\theta_{2}$ appropriately, we find that
\begin{equation}\label{preliesphere}
A_{i}:Df_{\phi, e, p, m}(I_{3})=0    
\end{equation}
for $i\in\{1, 2, 3\}$, where
\begin{equation*}
A_{1}:=e_{1}\otimes e_{2}-e_{2}\otimes e_{1},
\end{equation*}
as well as
\begin{equation*}
A_{2}:=e_{1}\otimes e_{3}-e_{3}\otimes e_{1}    
\end{equation*}
and 
\begin{equation*}
A_{3}:=e_{2}\otimes e_{3}-e_{3}\otimes e_{2}.    
\end{equation*}
As $\{A_{1}, A_{2}, A_{3}\}\subset\mathbb{R}^{3\times 3}$ constitutes a basis for the vector space $\mathfrak{so}(3)$, from identity \eqref{preliesphere} we may deduce by linearity that
\begin{equation*}
A:Df_{\phi, e, p, y}(I_{3})=0    
\end{equation*}
for all $A\in \mathfrak{so}(3)$. Now, it follows from this orthogonality statement that $Df_{\phi, p, e, y}(I_{3})\in\mathbb{R}^{3\times 3}$ is necessarily a \emph{symmetric matrix}. It is this information on the derivative of $f_{\phi, e, p, y}$ that allows us, in essence, to characterise collision invariants $\phi\in C^{1}(\mathbb{R}^{3})$.

Indeed, by definition of the map $f_{\phi, e, p, y}$, we find by direct computation that
\begin{equation*}
Df_{\phi, e, p, y}(I_{3})=\nabla\Omega_{\phi, e, p}(y)\otimes y    
\end{equation*}
for all $y\in\mathbb{R}^{3}\setminus\{0\}$. Since we have deduced that $Df_{\phi, e, p, y}(I_{3})$ is necessarily symmetric, it holds by necessity that $\nabla\Omega_{\phi, e, p}$ satisfies the identity
\begin{equation}\label{rankone}
\nabla\Omega_{\phi, e, p}(y)\otimes y =y\otimes \nabla\Omega_{\phi, e, p}(y)       
\end{equation}
for all $y\in\mathbb{R}^{3}\setminus\{0\}$. At this point, an elementary PDE argument using the theory of classical solutions of initial-value problems of transport equations allows us to conclude that $\Omega_{\phi, e, p}$ is a constant map on the 2-sphere. Indeed, we apply the following elementary lemma. 
\begin{lem}
Suppose that a map $\Omega\in C^{1}(\mathbb{R}^{3}\setminus\{0\})$ satisfies the identity
\begin{equation}\label{nosubscripts}
\nabla\Omega(y)\otimes y =y\otimes \nabla\Omega(y)        
\end{equation}
for all $y\in\mathbb{R}^{3}$. It follows that $\Omega$ is a radially-symmetric function on $\mathbb{R}^{3}$.
\end{lem}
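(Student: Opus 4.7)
The plan is to extract a pointwise collinearity relation from the matrix identity, translate it into a statement about tangential derivatives on origin-centred spheres, and conclude by the path-connectedness of each such sphere.

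First I would note the following elementary fact: for vectors $a, b \in \mathbb{R}^{3}$, the tensor equality $a \otimes b = b \otimes a$ (equivalently, $a_{i} b_{j} = a_{j} b_{i}$ for all $i, j \in \{1, 2, 3\}$) forces $a$ and $b$ to be collinear provided one of them is nonzero; indeed, picking $k$ with $b_{k} \neq 0$, the identity $a_{i} b_{k} = a_{k} b_{i}$ gives $a = (a_{k}/b_{k}) b$. Applying this pointwise in \eqref{nosubscripts} with $a = \nabla\Omega(y)$ and $b = y$, I would deduce the existence of a scalar function $\lambda : \mathbb{R}^{3} \setminus \{0\} \to \mathbb{R}$ such that
\[
\nabla\Omega(y) = \lambda(y)\, y \quad \text{for all } y \in \mathbb{R}^{3} \setminus \{0\}.
\]

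Second, I would exploit this relation by differentiating $\Omega$ along curves that lie on spheres centred at the origin. Fix $r > 0$ and let $\gamma : [0,1] \to \{y \in \mathbb{R}^{3} : |y| = r\}$ be any $C^{1}$ curve on the sphere of radius $r$. Differentiating $|\gamma(t)|^{2} = r^{2}$ yields $\gamma(t) \cdot \gamma'(t) = 0$ for all $t \in [0,1]$, so the chain rule and the collinearity relation give
\[
\frac{d}{dt}\Omega(\gamma(t)) = \nabla\Omega(\gamma(t)) \cdot \gamma'(t) = \lambda(\gamma(t))\, \gamma(t) \cdot \gamma'(t) = 0.
\]
Since the sphere of radius $r$ is path-connected, $\Omega$ must be constant on it. Letting $r$ vary, there exists $F : (0, \infty) \to \mathbb{R}$ with $\Omega(y) = F(|y|)$ for all $y \in \mathbb{R}^{3} \setminus \{0\}$, which is precisely the radial symmetry required.

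I do not foresee a real obstacle here: the only potentially delicate point is the well-definedness of the scalar field $\lambda$, but this is immediate because $y \neq 0$ throughout the domain of $\Omega$. No further regularity of $\lambda$ is required for the sphere-constancy conclusion, which only uses that the tangential component of $\nabla\Omega$ vanishes on every origin-centred sphere.
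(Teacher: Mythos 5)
Your proof is correct. The collinearity step is sound: since $y\neq 0$ on the domain, the componentwise identity $\partial_{i}\Omega(y)\,y_{j}=y_{i}\,\partial_{j}\Omega(y)$ immediately yields $\nabla\Omega(y)=\lambda(y)\,y$ with $\lambda(y)=\partial_{k}\Omega(y)/y_{k}$ for any $k$ with $y_{k}\neq 0$, and the value is independent of the choice of $k$ by the relation itself. The conclusion via constancy along $\mathscr{C}^{1}$ curves on origin-centred spheres (which are smoothly path-connected, e.g.\ by great-circle arcs) is also fine, and you are right that no regularity of $\lambda$ is needed. The paper does not actually supply an argument here: it simply cites Pinchover and Rubinstein, Theorem 2.10, in the spirit of the ``theory of classical solutions of initial-value problems of transport equations'' alluded to earlier in the text --- that is, one views $\nabla\Omega(y)\parallel y$ as saying that $\Omega$ solves the first-order PDE $\tau\cdot\nabla\Omega=0$ for every vector field $\tau$ tangent to the spheres, and invokes the method of characteristics. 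Your argument is the elementary, self-contained version of exactly that idea, integrating along explicit spherical curves rather than appealing to a general PDE theorem; it buys a shorter and more transparent proof at no cost in generality. One cosmetic remark: the identity can only be asserted for $y\in\mathbb{R}^{3}\setminus\{0\}$ (as you correctly assume), and likewise the radial representation $\Omega(y)=F(|y|)$ holds on $\mathbb{R}^{3}\setminus\{0\}$; the statement's ``for all $y\in\mathbb{R}^{3}$'' is a slip in the paper, not in your argument.
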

\begin{proof}
See Pinchover and Rubenstein (\cite{pinchover2005introduction}, Theorem 2.10).
\end{proof}
As a consequence of this lemma, there exists a $\phi$- and $(e, p )$-dependent constant $\widetilde{\Omega}_{\phi, e, p}\in\mathbb{R}$ such that
\begin{equation*}
\Omega_{\phi, e, p}(y)=\widetilde{\Omega}_{\phi, e, p}    
\end{equation*}
for all $y\in\mathbb{S}^{2}$. Thus, we have been able to show that $\Omega_{\phi, e, p}$ does not depend on its $\mathbb{S}^{2}$-argument in the absence of being able to determine the reduced scattering group $G_{\ast}$ exactly, which is the crucial step in leading to the characterisation result. To summarise, as a result of the above, we have proved the following statement which a new proof of the result originally obtained by Boltzmann.
\begin{prop}[Characterisation of $\mathscr{C}^{1}$ Collision Invariants]
If a map $\phi\in C^{1}(\mathbb{R}^{3})$ satisfies the identity
\begin{equation*}
\phi(v_{n}')+\phi(\overline{v}_{n}')=\phi(v)+\phi(\overline{v}) 
\end{equation*}
for all $n\in\mathbb{S}^{2}$ and all $v, \overline{v}\in\mathbb{R}^{3}$, then it is necessarily of the shape
\begin{equation*}
\phi(v)=a+b\cdot v+c|v|^{2}    
\end{equation*}
for all $v\in\mathbb{R}^{3}$, for some constants $a, c\in\mathbb{R}$ and some vector $b\in\mathbb{R}^{3}$.
\end{prop}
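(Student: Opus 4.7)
The plan is to follow exactly the program laid out in the preceding subsections, specialised to the $\mathscr{C}^{1}$ setting. First I would lift the functional equation from $\mathbb{R}^{3}$ to $\mathbb{R}^{6}$ by defining the symmetric auxiliary $\Phi_{\phi}(V) := \phi(v) + \phi(\overline{v})$, which satisfies $\Phi_{\phi}(\sigma_{n}V) = \Phi_{\phi}(V)$ for every $n \in \mathbb{S}^{2}$, and therefore $\Phi_{\phi}(gV) = \Phi_{\phi}(V)$ for every $g$ in the scattering group $G = \langle\{\sigma_{n}\}_{n \in \mathbb{S}^{2}}\rangle$, by iteration combined with the continuity inherited from $\phi \in C^{1}(\mathbb{R}^{3})$.

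Next I would foliate $\mathbb{R}^{6}$ by the energy-momentum manifolds $\mathcal{M}(e,p)$ via the diffeomorphism $H$ introduced above, and record that the scattering matrices $\sigma_{n}$ act on this foliation through the $\mathrm{O}(3)$-reflections $s_{n} = I_{3} - 2n \otimes n$. This reduces the problem to the study of a family of $\mathscr{C}^{1}$ maps $\Omega_{\phi,e,p} : \mathbb{S}^{2} \to \mathbb{R}$ satisfying $\Omega_{\phi,e,p}(gy) = \Omega_{\phi,e,p}(y)$ for every $g$ in the reduced scattering group $G_{\ast} = \langle\{s_{n}\}_{n \in \mathbb{S}^{2}}\rangle$ and every $y \in \mathbb{S}^{2}$.

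The decisive step is then to extract Lie algebra information rather than group information, so as to avoid appealing to the reflection-group theorem of Eaton--Perlman or Viterbo. For fixed $y \in \mathbb{S}^{2}$ I would set $f(A) := \Omega_{\phi,e,p}(Ay)$ on $\mathbb{R}^{3 \times 3} \setminus \{0\}$, so that $f(g) = f(I_{3})$ on $G_{\ast}$; differentiating along smooth paths $\theta \mapsto s_{n(\theta_{1},\theta_{2})}$ built from a standard spherical parametrisation of $\mathbb{S}^{2}$, I would recover tangent directions of the form $\partial_{\theta_{j}} n \otimes n - n \otimes \partial_{\theta_{j}} n$ that, after selecting suitable angles, span all of $\mathfrak{so}(3)$. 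Hence $A : Df(I_{3}) = 0$ for every $A \in \mathfrak{so}(3)$, which forces $Df(I_{3}) = \nabla \Omega_{\phi,e,p}(y) \otimes y$ to be symmetric as an element of $\mathbb{R}^{3 \times 3}$. The pointwise identity $\nabla \Omega_{\phi,e,p}(y) \otimes y = y \otimes \nabla \Omega_{\phi,e,p}(y)$ then follows, and the radial-symmetry lemma cited from Pinchover and Rubenstein implies $\Omega_{\phi,e,p}$ is constant on $\mathbb{S}^{2}$.

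Finally I would unwind the changes of variables. Constancy of $\Omega_{\phi,e,p}$ on each $\mathbb{S}^{2}$-slice yields the existence of a map $\widetilde{\Phi}_{\phi} : \mathbb{R}^{3} \times [0,\infty) \to \mathbb{R}$ with $\Phi_{\phi}(V) = \widetilde{\Phi}_{\phi}(v + \overline{v}, |V|^{2})$, and in particular $\phi(v) = \widetilde{\Phi}_{\phi}(v, |v|^{2})$ modulo an additive constant; the classical separation argument of Truesdell and Muncaster then isolates the linear and quadratic parts and produces the canonical form $\phi(v) = a + b \cdot v + c|v|^{2}$. The main technical obstacle I anticipate is the Lie algebra step: one has to verify that the explicit tangent perturbations coming from the spherical parametrisation indeed produce three linearly independent skew-symmetric matrices, so that the linear span of the tangent vectors at $I_{3}$ genuinely covers all of $\mathfrak{so}(3)$ and the symmetry of $Df(I_{3})$ follows without any recourse to the full structure of $G_{\ast}$.
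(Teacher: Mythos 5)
Your proposal is correct and follows essentially the same route as the paper's own argument: the lift to $\mathbb{R}^{6}$, the reduction to the energy--momentum spheres via $H$, the differentiation of $f(A)=\Omega_{\phi,e,p}(Ay)$ along the spherical parametrisation to span $\mathfrak{so}(3)$ and force symmetry of $\nabla\Omega_{\phi,e,p}(y)\otimes y$, the radial-symmetry lemma, and the final Truesdell--Muncaster separation. The one technical obstacle you flag (verifying that the tangent matrices $\partial_{\theta_{j}}n\otimes n-n\otimes\partial_{\theta_{j}}n$ span $\mathfrak{so}(3)$ for suitable angle choices) is handled in the paper exactly as you anticipate, by exhibiting the three elementary skew-symmetric basis matrices $e_{i}\otimes e_{j}-e_{j}\otimes e_{i}$.
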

% \begin{proof}
% Include some brief details how to get to the final line.
% \end{proof}
% \subsection{Lie Algebras and Lie Groups}
% Of course, this approach is redundant in the case of hard spheres. However, it is useful in the case of non-spherical particles. Loosely speaking, it is harder to characterise the Lie group $G$ than it is to characterise the Lie algebra $\mathfrak{g}$. However, once the Lie algebra is determined, it holds that $G=\mathrm{exp}(\mathfrak{g})$. For the price of assuming that the collision invariant is once continuously-differentiable on its domain, we can use information on the Lie algebra to characterise collision invariants.
We now look to employ a similar approach to the above for our proof of Theorem \ref{maintheorem}. However, the proof in the case of non-spherical particles is more involved due, in part, to the fact that the scattering groups in this case \emph{depend on the orientations} of the underlying hard particles, which is a feature not present in the analysis of classical hard spheres.
\section{Proof for Non-Canonical Scattering for General Convex Particles}\label{newscatteringproof}
In this section, we prove Theorem \ref{maintheorem} above, namely the characterisation of all continuously-differentiable maps $\varphi:\mathbb{R}^{2}\times\mathbb{R}\times\mathbb{S}^{1}\rightarrow\mathbb{R}$ satisfying the functional equation
\begin{equation}\label{funkyfunky}
\varphi(v_{\beta}', \omega_{\beta}', \vartheta)+\varphi(\overline{v}_{\beta}', \overline{\omega}_{\beta}, \overline{\vartheta})=\varphi(v, \omega, \vartheta)+\varphi(\overline{v}, \overline{\omega}, \overline{\vartheta})  
\end{equation}
for all $v, \overline{v}\in\mathbb{R}^{2}$ and all $\omega, \overline{\omega}\in\mathbb{R}$, where $V_{\beta}'=\sigma_{\beta}^{\times}[V]$ is given in terms of the non-canonical scattering map $\sigma_{\beta}^{\times}$ by
\begin{equation*}
\sigma_{\beta}^{\times}[V]=M^{-1}(2\widehat{E}_{1}\otimes\widehat{E}_{1}+2\widehat{E}_{2}\otimes\widehat{E}_{2}+2\widehat{E}_{\beta}\otimes\widehat{E}_{\beta}-I_{6})MV    
\end{equation*}
for all $V\in\mathbb{R}^{6}$. We prove that any $\mathscr{S}_{\times}$-collision invariant is necessarily of the form
\begin{equation*}
\varphi(v, \omega, \vartheta)=a(\vartheta)+b\cdot v+c(m|v|^{2}+J\omega^{2})    
\end{equation*}
for some $a\in C^{1}(\mathbb{S}^{1})$, and some constants $b\in\mathbb{R}^{2}$ and $c\in\mathbb{R}$. The approach we take in what follows mirrors that employed in section \ref{hardsphereproof} above, in that we transform the identity \eqref{funkyfunky} repeatedly in order to reveal the role of the scattering group of lowest dimension pertinent to the non-canonical scattering family $\mathscr{S}_{\times}$ under consideration.
\subsection{Dimension Reduction}
As we shall set out, the problem of characterising $\mathscr{S}_{\times}$-collision invariants is essentially one of characterising $\mathrm{O}(3)$-invariant functions on $\mathbb{R}^{3}$. To see this, we shall transform coordinates. However, in the case of $\mathscr{S}_{\times}$-collision invariants, we require a sequence of \emph{five} transformations.

Let us begin by assuming that there exists an $\mathscr{S}_{\times}$-collision invariant, namely a map $\varphi\in C^{1}(\mathbb{R}^{2}\times\mathbb{R}\times\mathbb{S}^{1})$ satisfying the identity
\begin{equation}\label{collinv}
\varphi(v_{\beta}', \omega_{\beta}', \vartheta)+\varphi(\overline{v}_{\beta}', \overline{\omega}_{\beta}', \overline{\vartheta})=\varphi(v, \omega, \vartheta)+\varphi(\overline{v}, \overline{\omega}, \overline{\vartheta})    
\end{equation}
for all $v, \overline{v}\in\mathbb{R}^{2}$, $\omega, \overline{\omega}\in\mathbb{R}$, and $\beta=(\psi, \vartheta, \overline{\vartheta})\in\mathbb{T}^{3}$, where the scattering variables $v_{\beta}', \overline{v}_{\beta}'\in\mathbb{R}^{2}$ and $\omega_{\beta}', \overline{\omega}_{\beta}'\in\mathbb{R}$ are given by $V_{\beta}'=\sigma_{\beta}^{\times}[V]$. It will be of use to set some notation before we proceed. We set a new $\varphi$-dependent auxiliary map $\Phi_{\varphi}\in C^{1}(\mathbb{R}^{6}\times\mathbb{T}^{2})$ to be
\begin{equation*}
\Phi_{\varphi}(V, \vartheta, \overline{\vartheta}):=\varphi(v, \omega, \vartheta)+\varphi(\overline{v}, \overline{\omega}, \overline{\vartheta})
\end{equation*}
for $V\in\mathbb{R}^{6}$ and $(\vartheta, \overline{\vartheta})\in\mathbb{T}^{2}$. Using this notation, it can be checked that the collision invariant identity \eqref{collinv} is redressed as
\begin{equation}\label{phiofuse}
\Phi_{\varphi}(\sigma_{\beta}V, \vartheta, \overline{\vartheta})=\Phi_{\varphi}(V, \vartheta, \overline{\vartheta})
\end{equation}
for all $V\in\mathbb{R}^{6}$ and all $\beta=(\psi, \vartheta, \overline{\vartheta})\in\mathbb{T}^{3}$. If the reference particle $\mathsf{P}_{\ast}$ is a compact, convex set with boundary curve of class $\mathscr{C}^{1}$, we define set of all admissible energy-momentum pairs $\mathcal{A}_{\mathsf{P}_{\ast}}\subset (0, \infty)\times\mathbb{R}^{2}$ by
\begin{equation*}
\mathcal{A}_{\mathsf{P}_{\ast}}:=\left\{
(e, p)\in (0, \infty)\times \mathbb{R}^{2}\,:\,e^{2}>\frac{|p|^{2}}{2m}
\right\}.
\end{equation*}
For a given admissible energy-momentum pair $(e, p)\in\mathcal{A}_{\mathsf{P}_{\ast}}$, the associated energy-momentum manifold $\mathcal{M}_{\mathsf{P}_{\ast}}(e, p)\subset\mathbb{R}^{6}$ is defined to be 
\begin{equation*}
\mathcal{M}_{\mathsf{P}_{\ast}}(e, p):=\left\{
V\in\mathbb{R}^{6}\,:\,|MV|^{2}=e^{2} \hspace{2mm}\text{and}\hspace{2mm}m\left(
\begin{array}{c}
V_{1}+V_{3} \\
V_{2}+V_{4}
\end{array}
\right)=p
\right\},
\end{equation*}
where $M\in\mathbb{R}^{6\times 6}$ is the energy-momentum matrix given by
\begin{equation*}
M:=\mathrm{diag}(\sqrt{m}, \sqrt{m}, \sqrt{m}, \sqrt{m}, \sqrt{J}, \sqrt{J}).    
\end{equation*}
We also define the map $H_{\mathsf{P}_{\ast}}:\mathcal{A}_{\mathsf{P}_{\ast}}\times \mathbb{R}^{4}\setminus\{0\}\rightarrow\mathbb{R}^{6}\setminus\{0\}$ pointwise by
\begin{equation*}
H_{\mathsf{P}_{\ast}}(e, p, y):=\left(
\begin{array}{c}
\frac{1}{2m}(p_{1}+\sqrt{2me^{2}-|p|^{2}}y_{1}) \vspace{2mm} \\
\frac{1}{2m}(p_{2}+\sqrt{2me^{2}-|p|^{2})}y_{2}) \vspace{2mm} \\
\frac{1}{2m}(p_{1}-\sqrt{2me^{2}-|p|^{2}}y_{1}) \vspace{2mm} \\
\frac{1}{2m}(p_{2}-\sqrt{2me^{2}-|p|^{2})}y_{2}) \vspace{2mm} \\
\frac{1}{\sqrt{2mJ}}\sqrt{2me^{2}-|p|^{2}}y_{3} \vspace{2mm} \\
\frac{1}{\sqrt{2mJ}}\sqrt{2me^{2}-|p|^{2}}y_{4}
\end{array}
\right)
\end{equation*}
for all $(e, p, y)\in \mathcal{A}_{\mathsf{P}_{\ast}}\times\mathbb{R}^{4}\setminus\{0\}$. It may be checked that $h_{\mathsf{P}_{\ast}}$ is continuously differentiable, that the restriction map $H_{\mathsf{P}_{\ast}}|_{\mathcal{A}_{\mathsf{P}_{\ast}}\times\mathbb{S}^{3}}:\mathcal{A}_{\mathsf{P}_{\ast}}\times\mathbb{S}^{3}\rightarrow H_{\mathsf{P}_{\ast}}|_{\mathcal{A}_{\mathsf{P}_{\ast}}\times\mathbb{S}^{3}}(\mathcal{A}_{\mathsf{P}_{\ast}}\times\mathbb{S}^{3})$ admits the structure of a $\mathscr{C}^{1}$-diffeomorphism, and in turn that
\begin{equation*}
(H_{\mathsf{P}_{\ast}}|_{\mathcal{A}_{\mathsf{P}_{\ast}\times\mathbb{S}^{3}}})^{-1}(\mathcal{M}_{\mathsf{P}_{\ast}}(e, p))=\{(e, p)\}\times\mathbb{S}^{3}.    
\end{equation*}
With these definitions in place, we approach the following proposition.
\begin{prop}\label{usemeprop}
Let $\mathsf{P}_{\ast}\subset\mathbb{R}^{2}$ be a compact, convex set with boundary curve of class $\mathscr{C}^{1}$. Suppose that a map $\varphi\in C^{1}(\mathbb{R}^{2}\times\mathbb{R}\times\mathbb{S}^{1})$ is an $\mathscr{S}_{\times}$-collision invariant. For any $(e, p)\in\mathcal{A}_{\mathsf{P}_{\ast}}$ and any $y\in\mathbb{S}^{3}$, it follows that the map $f_{\varphi, e, p, y}:\mathbb{R}^{4\times 4}\times\mathbb{T}^{2}\rightarrow\mathbb{R}$ defined pointwise by
\begin{equation*}
f_{\varphi, e, p, y}(A, \vartheta, \overline{\vartheta}):=\Phi_{\varphi}(H_{\mathsf{P}_{\ast}}(e, p, Ay), \vartheta, \overline{\vartheta})    
\end{equation*}
for $A\in\mathbb{R}^{4\times 4}$ is of class $\mathscr{C}^{1}$ and satisfies the identity
\begin{equation*}
f_{\varphi, e, p, y}(g, \vartheta, \overline{\vartheta})=f_{\varphi, e, p, y}(I_{4}, \vartheta, \overline{\vartheta})
\end{equation*}
for all $g\in G_{\ast}(\vartheta, \overline{\vartheta})$, where $G_{\ast}\subset\mathrm{O}(4)$ denotes the reduced scattering group given by
\begin{equation*}
G_{\ast}(\vartheta, \overline{\vartheta}):=\langle\{s_{\beta}\}_{\psi\in\mathbb{S}^{1}}\rangle
\end{equation*}
with $s_{\beta}:=I_{4}-2\widehat{k}_{\beta}\otimes\widehat{k}_{\beta}\in\mathrm{O}(4)$ the reflection matrix generated by the unit vector $\widehat{k}_{\beta}\in\mathbb{S}^{4}$ defined by
\begin{equation}\label{knorm}
\widehat{k}_{\beta}:=\frac{1}{\sqrt{md_{\beta}^{2}+4J}}\left(
\begin{array}{c}
-\sqrt{m}d_{\beta}\sin\psi \vspace{1mm} \\
\sqrt{m}d_{\beta}\cos\psi \vspace{1mm} \\
-\sqrt{2J} \vspace{1mm} \\
-\sqrt{2J}
\end{array}
\right) 
\end{equation}
for $\beta=(\psi, \vartheta, \overline{\vartheta})\in\mathbb{T}^{3}$.
\end{prop}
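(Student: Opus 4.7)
The plan is to transform the $\mathscr{S}_\times$-collision invariant identity \eqref{phiofuse} via the diffeomorphism $H_{\mathsf{P}_\ast}$ into an orthogonal-group-invariance statement on $\mathbb{R}^4$, mirroring the strategy of Section \ref{hardsphereproof}. The map $H_{\mathsf{P}_\ast}(e, p, \cdot)$ sends $\mathbb{S}^3$ diffeomorphically onto the energy-momentum manifold $\mathcal{M}_{\mathsf{P}_\ast}(e, p)$, and since $\sigma_\beta^\times$ preserves this manifold by conservation of energy and linear momentum, it descends to an action on $\mathbb{S}^3$; the main task is to identify that action explicitly as the reflection $s_\beta$ acting on $\mathbb{R}^4$.

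Regularity of $f_{\varphi, e, p, y}$ follows immediately from the $C^1$ regularity of $\varphi$ (hence of $\Phi_\varphi$), the $C^1$ regularity of $H_{\mathsf{P}_\ast}$ on $\mathcal{A}_{\mathsf{P}_\ast} \times (\mathbb{R}^4 \setminus\{0\})$, and the smoothness of $A \mapsto Ay$.

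The core step is the following linear-algebraic computation. Writing $r := \sqrt{2me^2 - |p|^2}$, one verifies directly from the formula for $H_{\mathsf{P}_\ast}$ that
\begin{equation*}
MH_{\mathsf{P}_\ast}(e, p, y) = \tfrac{1}{\sqrt{2m}}(p_1 \widehat{E}_1 + p_2 \widehat{E}_2) + K_{e, p}\, y
\end{equation*}
for a linear map $K_{e,p}: \mathbb{R}^4 \to \mathbb{R}^6$ satisfying $\mathrm{Im}(K_{e, p}) = \mathrm{span}\{\widehat{E}_1, \widehat{E}_2\}^\perp$ and $|K_{e, p} y|^2 = \tfrac{r^2}{2m}|y|^2$, so that $K_{e, p}$ is a scaled isometry onto its four-dimensional image. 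A direct calculation shows the triple $\{\widehat{E}_1, \widehat{E}_2, \widehat{E}_\beta\}$ is orthonormal in $\mathbb{R}^6$, whence the operator $M\sigma_\beta^\times M^{-1} = 2(\widehat{E}_1 \widehat{E}_1^T + \widehat{E}_2 \widehat{E}_2^T + \widehat{E}_\beta \widehat{E}_\beta^T) - I_6$ fixes the $(\widehat{E}_1, \widehat{E}_2)$-component of $MH_{\mathsf{P}_\ast}$ (encoding linear-momentum conservation) and on $\mathrm{Im}(K_{e, p})$ fixes the line through $\widehat{E}_\beta$ while negating the three-dimensional orthogonal complement. Pulling this action back through $K_{e, p}$ to $\mathbb{R}^4$, the characterising unit vector is identified by the relation $K_{e, p}(\widehat{k}_\beta) \propto \widehat{E}_\beta$; a short calculation yields exactly the explicit form \eqref{knorm} (up to a global sign, which is immaterial for the reflection $s_\beta = I_4 - 2\widehat{k}_\beta\otimes\widehat{k}_\beta$). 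Consequently,
\begin{equation*}
\sigma_\beta^\times[H_{\mathsf{P}_\ast}(e, p, y)] = H_{\mathsf{P}_\ast}(e, p, s_\beta y).
\end{equation*}

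Applying \eqref{phiofuse} at $V = H_{\mathsf{P}_\ast}(e, p, y)$ now yields $f_{\varphi, e, p, y}(s_\beta, \vartheta, \overline{\vartheta}) = f_{\varphi, e, p, y}(I_4, \vartheta, \overline{\vartheta})$ for every generator $s_\beta$, $\psi \in \mathbb{S}^1$. Iteration of this identity (together with the transformation rule above, which respects composition since $s_\beta \in \mathrm{O}(4)$) extends the invariance to arbitrary finite products, hence to every $g \in G_\ast(\vartheta, \overline{\vartheta})$. The main obstacle is the explicit linear-algebraic identification in the previous paragraph, and in particular the verification that the pullback of $\sigma_\beta^\times$ under the scaled isometry $K_{e, p}$ is precisely the hyperplane reflection $s_\beta$ and not some other element of $\mathrm{O}(4)$; once this is established, the iteration step is routine.
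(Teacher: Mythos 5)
Your proposal follows essentially the same route as the paper's proof of Proposition \ref{usemeprop}: conjugate the collision-invariant identity \eqref{phiofuse} through the chart $H_{\mathsf{P}_{\ast}}$, identify the induced action on $\mathbb{S}^{3}$ as a $\psi$-parametrised family of orthogonal involutions determined by $\widehat{k}_{\beta}$, and iterate with the orientation pair frozen; in fact you supply more of the key linear-algebraic computation than the paper, which simply asserts the intertwining relation. One internal inconsistency is worth flagging. You correctly observe that $M\sigma_{\beta}^{\times}M^{-1}$ restricted to $\mathrm{Im}(K_{e,p})=\mathrm{span}\{\widehat{E}_{1},\widehat{E}_{2}\}^{\perp}$ fixes the line through $\widehat{E}_{\beta}$ and negates its three-dimensional orthogonal complement; but that map pulls back under the scaled isometry $K_{e,p}$ to $2\widehat{k}_{\beta}\otimes\widehat{k}_{\beta}-I_{4}=-s_{\beta}$, whose $+1$-eigenspace is one-dimensional, and not to the hyperplane reflection $s_{\beta}=I_{4}-2\widehat{k}_{\beta}\otimes\widehat{k}_{\beta}$, whose $+1$-eigenspace is three-dimensional. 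The genuinely immaterial ambiguity is the sign of the vector $\widehat{k}_{\beta}$, not the overall sign of the matrix. A direct check with the componentwise form of $\sigma_{\beta}^{\times}$ (take $p=0$, so $v=\tfrac{r}{2m}(y_{1},y_{2})$, $\overline{v}=-\tfrac{r}{2m}(y_{1},y_{2})$, $\omega=\tfrac{r}{\sqrt{2mJ}}y_{3}$, $\overline{\omega}=\tfrac{r}{\sqrt{2mJ}}y_{4}$) gives $y'=-y+2(\widehat{k}_{\beta}\cdot y)\widehat{k}_{\beta}$, confirming that the intertwining relation should read $\sigma_{\beta}^{\times}[H_{\mathsf{P}_{\ast}}(e,p,y)]=H_{\mathsf{P}_{\ast}}(e,p,(2\widehat{k}_{\beta}\otimes\widehat{k}_{\beta}-I_{4})y)$. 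This slip is shared by the paper's own statement of the proposition and is harmless downstream: the generators are still involutions in $\mathrm{O}(4)$ determined by the line $\mathbb{R}\widehat{k}_{\beta}$, products of pairs of generators are unchanged since $(-s_{\beta_{1}})(-s_{\beta_{2}})=s_{\beta_{1}}s_{\beta_{2}}$, and the subsequent differentiation at the identity in Section \ref{furreddy} uses only such products. Apart from this sign, your argument is correct and complete.
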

\begin{proof}
We follow an approach similar to the case of hard spheres outlined in section \ref{hardsphereproof} above. Transforming the identity \eqref{phiofuse} in the auxiliary map $\Phi_{\varphi}$ to coordinates on the energy-momentum manifold, we find that the new auxiliary map $\Psi_{\varphi}:\mathcal{A}_{\mathsf{P}_{\ast}}\times\mathbb{R}^{4}\setminus\{0\}\times\mathbb{T}^{2}\rightarrow\mathbb{R}$ defined pointwise by
\begin{equation*}
\Psi_{\varphi}(e, p, y, \vartheta, \overline{\vartheta}):=\Phi_{\varphi}(H_{\mathsf{P}_{\ast}}(e, p, y), \vartheta, \overline{\vartheta})    
\end{equation*}
for $(e, p)\in\mathscr{A}_{\mathsf{P}_{\ast}}$, $y\in\mathbb{R}^{4}\setminus\{0\}$ and $(\vartheta, \overline{\vartheta})\in\mathbb{T}^{2}$ satisfies the identity
\begin{equation}\label{nodep}
\Psi_{\varphi}(e, p, s_{\beta}y, \vartheta, \overline{\vartheta})=\Psi_{\varphi}(e, p, y, \vartheta, \overline{\vartheta})   
\end{equation}
owing to the fact that
\begin{equation*}
\sigma_{\beta}^{\times}H_{\mathsf{P}_{\ast}}(e, p, y, \vartheta, \overline{\vartheta})=H_{\mathsf{P}_{\ast}}(e, p, s_{\beta}y, \vartheta, \overline{\vartheta})    
\end{equation*}
for all $\beta\in\mathbb{T}^{3}$, where $s_{\beta}\in\mathrm{O}(4)$ denotes the reflection matrix defined by
\begin{equation*}
s_{\beta}:=I-2\widehat{k}_{\beta}\otimes \widehat{k}_{\beta},
\end{equation*}
and $\widehat{k}_{\beta}\in\mathbb{S}^{3}$ denotes the unit vector \eqref{knorm} above. Now, by iteration, identity \eqref{nodep} reveals that if $\varphi$ is an $\mathscr{S}_{\times}$-collision invariant, then the auxiliary map $\Psi_{\varphi}$ satisfies 
\begin{equation}\label{redscat}
\Psi_{\varphi}(e, p, gy, \vartheta, \overline{\vartheta})=\Psi_{\varphi}(e, p, y, \vartheta, \overline{\vartheta})    
\end{equation}
for all $g\in G_{\ast}(\vartheta, \overline{\vartheta})$, where $G_{\ast}(\vartheta, \overline{\vartheta})$ denotes the potentially $(\vartheta, \overline{\vartheta})$-dependent \emph{reduced scattering group} given by
\begin{equation*}
G_{\ast}(\vartheta, \overline{\vartheta}):=\langle \{s_{\beta}\}_{\psi\in\mathbb{S}^{1}}\rangle.    
\end{equation*}
We note that the reduced scattering group $G_{\ast}(\vartheta, \overline{\vartheta})\subseteq\mathrm{O}(4)$ may depend on the orientation variables $(\vartheta, \overline{\vartheta})\in\mathbb{T}^{2}$ as the dependence of the auxiliary map $\Psi_{\varphi}$ on its $(\vartheta, \overline{\vartheta})$ argument, inherited from the dependence of the collision invariant $\varphi$ on its orientation variable, means that iteration from identity \eqref{nodep} to identity \eqref{redscat} can only take place with respect to the argument $\psi$. It is evident that identity \eqref{redscat} reveals that the values of both energy $e>0$ and momentum $p\in\mathbb{R}^{2}$ is immaterial to the characterisation of the collision invariant $\varphi$. 

Next, by defining a new $\varphi$-, $e$- and $p$-parametrised map $\Omega_{\varphi, e, p}:\mathbb{R}^{4}\setminus\{0\}\times\mathbb{T}^{2}\rightarrow\mathbb{R}$ pointwise by
\begin{equation*}
\Omega_{\varphi, e, p}(y, \vartheta, \overline{\vartheta}):=\Psi_{\varphi}(e, p, y, \vartheta, \overline{\vartheta}) 
\end{equation*}
for all $y\in\mathbb{R}^{4}\setminus\{0\}$ and all $(\vartheta, \overline{\vartheta})\in\mathbb{T}^{2}$, the collision invariant identity \eqref{nodep} now takes the form
\begin{equation}\label{omfy}
\Omega_{\varphi, e, p}(s_{\beta}y, \vartheta, \overline{\vartheta})=\Omega_{\varphi, e, p}(y, \vartheta, \overline{\vartheta})
\end{equation}
for all $y\in\mathbb{S}^{3}$ and all $\beta\in\mathbb{T}^{3}$ which, by way of iteration with respect to the collision parameter $\psi\in\mathbb{S}^{1}$, yields the identity
\begin{equation*}
    \Omega_{\varphi, e, p}(gy, \vartheta, \overline{\vartheta})=\Omega_{\varphi, e, p}(y, \vartheta, \overline{\vartheta})
\end{equation*}
for all $g\in G_{\ast}(\vartheta, \overline{\vartheta})$ and $y\in\mathbb{S}^{3}$. Finally, we define a $\varphi$-, $e$-, $p$- and $y$-parametrised map $f_{\varphi, e, p, y}:\mathbb{R}^{4\times 4}\times\mathbb{T}^{2}\rightarrow\mathbb{R}$ pointwise by
\begin{equation*}
f_{\varphi, e, p, y}(A, \vartheta, \overline{\vartheta}):=\Omega_{\varphi, e, p}(Ay, \vartheta, \overline{\vartheta})    
\end{equation*}
for all $A\in\mathbb{R}^{4\times 4}$, which from \eqref{omfy} reveals that $f_{\varphi, e, p, y}$ satisfies the identity
\begin{equation*}
f_{\varphi, e, p, y}(g, \vartheta, \overline{\vartheta})=f_{\varphi, e, p, y}(I_{4}, \vartheta, \overline{\vartheta})    
\end{equation*}
for all $g\in G_{\ast}(\vartheta, \overline{\vartheta})$, which is the result claimed in the statement of the proposition.
\end{proof}
Let us now remark on a notable difference between the case of collision invariants for non-canonical scattering $\mathscr{S}_{\times}$ and the case of hard spheres, as well as the case of canonical scattering $\mathscr{S}_{\ast}$ for non-spherical particles. The result of Eaton and Perlman \cite{MR0463329} and Viterbo (\cite{saint2018collision}, Appendix) does \emph{not} apply at this stage, since the linear span of the range of the map $\beta\mapsto \widehat{k}_{\beta}$ is the strict linear subspace of $\mathbb{R}^{4}$ given by
\begin{equation*}
\left\{
\left(
\begin{array}{c}
a \\ b \\ c \\ c
\end{array}
\right)\in\mathbb{R}^{4}\,:\,a, b, c\in\mathbb{R}
\right\}.
\end{equation*}
In order to make progress in the case of the non-canonical physical scattering family $\mathscr{S}_{\times}$, we must reduce the dimension of the problem further still.
\subsection{A Further Reduction of Dimension in the Case of $\mathscr{S}_{\times}$}\label{furreddy}
It is prudent to reduce the dimensionality of our problem yet further in the case of the non-canonical physical scattering family $\mathscr{S}_{\times}$. To see this, we proceed by investigating the properties of $Df_{\varphi, e, p, m}$ at the identity matrix. Indeed, by Proposition \ref{usemeprop} above, it holds that
\begin{equation*}
f_{\varphi, e, p, y}(s_{\beta_{1}}s_{\beta_{2}}, \vartheta, \overline{\vartheta})=f_{\varphi, e, p, y}(I_{4}, \vartheta, \overline{\vartheta})
\end{equation*}
for all $\beta_{1}, \beta_{2}\in\mathbb{T}^{3}$ of the form
\begin{equation*}
\beta_{i}=(\psi_{i}, \vartheta, \overline{\vartheta}),
\end{equation*}
where $\psi_{i}\in\mathbb{S}^{1}$ for $i\in\{1, 2\}$, which leads by differentiation of $f_{\varphi, e, p, y}(\cdot, \vartheta, \overline{\vartheta})$ at the identity to
\begin{equation}\label{needmoreinfo}
\left(\frac{\partial}{\partial\psi}\widehat{k}_{\beta}\otimes \widehat{k}_{\beta}-\widehat{k}_{\beta}\otimes \frac{\partial}{\partial \psi}\widehat{k}_{\beta}\right):Df_{\varphi, e, p, y}(I_{4}, \vartheta, \overline{\vartheta})   
\end{equation}
for all $\psi\in\mathbb{S}^{1}$. A calculation reveals that the set of matrices $K(\vartheta, \overline{\vartheta})\subset\mathbb{R}^{4\times 4}$ defined by
\begin{equation}\label{spanmatrix}
K(\vartheta, \overline{\vartheta}):=\mathrm{span}\left\{
\frac{\partial}{\partial\psi}\widehat{k}_{\beta}\otimes \widehat{k}_{\beta}-\widehat{k}_{\beta}\otimes \frac{\partial}{\partial \psi}\widehat{k}_{\beta}\in\mathbb{R}^{4\times 4}\,:\,\psi\in\mathbb{S}^{1} \right\}
\end{equation}
is simply
\begin{equation*}
K(\vartheta, \overline{\vartheta})=\left\{\left( \begin{array}{cccc} 
0 & -a & -b & -b \\
a & 0 & -c & -c \\ 
b & c & 0 & 0 \\ 
b & c & 0 & 0
\end{array}\right)\in\mathbb{R}^{4\times 4}\,:\,a, b, c\in\mathbb{R}\right\}.
\end{equation*}
From identity \eqref{needmoreinfo}, we find that
\begin{equation*}
A:Df_{\varphi, e, p, y}(I_{4}, \vartheta, \overline{\vartheta})=0    
\end{equation*}
for all $A\in K(\vartheta, \overline{\vartheta})$. However, as $K(\vartheta, \overline{\vartheta})\neq \mathfrak{so}(4)$, we \emph{cannot} infer from this statement that $Df_{\varphi, e, p, m}(I_{4}, \vartheta, \overline{\vartheta})$ is a symmetric matrix. Nevertheless, guided by the observation that $K(\vartheta, \overline{\vartheta})$ is isomorphic to $\mathfrak{so}(3)$, a further change of coordinates allows us to place this problem in a framework which is, in essence, identical to that of the case of hard spheres.

We introduce yet one more auxiliary map $\Gamma_{\varphi, e, p}:\mathbb{R}^{3}\setminus\{0\}\times\mathbb{T}^{2}\rightarrow\mathbb{R}$, parametrised by $\varphi$, $e$, and $p$, which is defined implicitly by the relation
\begin{equation*}
\Gamma_{\varphi, e, p}(I_{\ast}y, \vartheta, \overline{\vartheta}):=\Omega_{\varphi, e, p}(y, \vartheta, \overline{\vartheta}),
\end{equation*}
for all $y\in\mathbb{R}^{4}\setminus\{0\}$, where $I_{\ast}\in\mathbb{R}^{3\times 4}$ denotes the matrix
\begin{equation*}
I_{\ast}:=\left(
\begin{array}{cccc}
1 & 0 & 0 & 0 \\
0 & 1 & 0 & 0 \\
0 & 0 & 1 & 1
\end{array}
\right).
\end{equation*}
At this point, a calculation reveals that
\begin{equation}\label{smallcalc1}
\begin{array}{ll}
& \displaystyle \Omega_{\varphi, e, p}(s_{\beta}y, \vartheta, \overline{\vartheta}) \vspace{2mm} \\
= & \displaystyle \Gamma_{\varphi, e, p}(I_{\ast}s_{\beta}y, \vartheta, \overline{\vartheta})\vspace{2mm} \\
= & \displaystyle \Gamma_{\varphi, e, p}(\Delta^{-1}r_{\beta}\Delta I_{\ast}y, \vartheta, \overline{\vartheta}),
\end{array}
\end{equation}
where $\Delta\in\mathbb{R}^{3\times 3}$ denotes the matrix
\begin{equation*}
\Delta:=\left(
\begin{array}{ccc}
1 & 0 & 0 \\
0 & 1 & 0 \\
0 & 0 & \frac{1}{\sqrt{2}}
\end{array}
\right),
\end{equation*}
and $r_{\beta}\in\mathrm{O}(3)$ denotes the reflection matrix
\begin{equation*}
r_{\beta}:=I_{3}-2\widehat{\gamma}_{\beta}\otimes\widehat{\gamma}_{\beta}
\end{equation*}
with associated unit vector $\widehat{\gamma}_{\beta}\in\mathbb{S}^{2}$ given by
\begin{equation*}
\widehat{\gamma}_{\beta}:=\frac{1}{\sqrt{md_{\beta}^{2}+4J}}\left(
\begin{array}{c}
-\sqrt{m}d_{\beta}\sin\psi \vspace{1mm} \\
\sqrt{m}d_{\beta}\cos\psi \vspace{1mm} \\
-2\sqrt{J}
\end{array}
\right).
\end{equation*}
Using the result of \eqref{smallcalc1}, together with the identity \eqref{omfy}, we find that the auxiliary function $\Gamma_{\varphi, e, p}$ satisfies 
\begin{equation*}
\Gamma_{\varphi, e, p}(\Delta^{-1}r_{\beta}\Delta z, \vartheta, \overline{\vartheta})=\Gamma_{\varphi, e, p}(z, \vartheta, \overline{\vartheta})    
\end{equation*}
for all $z\in\mathbb{R}^{3}\setminus\{0\}$. Owing to the natural conjugation structure in the above identity, we find by iteration that
\begin{equation*}
\Gamma_{\varphi, e, p}(\Delta^{-1}g\Delta z, \vartheta, \overline{\vartheta})=\Gamma_{\varphi, e, p}(z, \vartheta, \overline{\vartheta})  
\end{equation*}
for all $g\in G_{0}(\vartheta, \overline{\vartheta})$, where $G_{0}(\vartheta, \overline{\vartheta})\subseteq\mathrm{O}(3)$ denotes the group
\begin{equation*}
G_{0}(\vartheta, \overline{\vartheta}):=\langle\{r_{\beta}\}_{\psi\in\mathbb{S}^{1}} \rangle.    
\end{equation*}
Finally, defining $\Lambda_{\varphi, e, p, y}:\mathbb{R}^{3\times 3}\times\mathbb{T}^{2}\rightarrow\mathbb{R}$ pointwise by
\begin{equation*}
\Lambda_{\varphi, e, p, y}(A, \vartheta, \overline{\vartheta}):=\Gamma_{\varphi, e, p}(\Delta^{-1}A\Delta I_{\ast}y, \vartheta, \overline{\vartheta}),    
\end{equation*}
we discover that 
\begin{equation*}
\Lambda_{\varphi, e, p, y}(g, \vartheta, \overline{\vartheta})=\Lambda_{\varphi, e, p, y}(I_{3}, \vartheta, \overline{\vartheta})   
\end{equation*}
for all $g\in G_{0}(\vartheta, \overline{\vartheta})$. Thus, we have discovered that the `minimal' scattering group in this problem is a matrix subgroup of $\mathrm{O}(3)$. To summarise the discussion of this section, we state the following proposition.
\begin{prop}\label{helpme}
Let $\mathsf{P}_{\ast}\subset\mathbb{R}^{2}$ be a compact, convex set with boundary curve of class $\mathscr{C}^{1}$. Suppose that a map $\varphi\in C^{1}(\mathbb{R}^{2}\times\mathbb{R}\times\mathbb{S}^{1})$ is an $\mathscr{S}_{\times}$-collision invariant. For any $(e, p)\in\mathcal{A}_{\mathsf{P}_{\ast}}$ and any $y\in\mathbb{S}^{3}$, it follows that the map $\Lambda_{\varphi, e, p, y}:\mathbb{R}^{3\times 3}\setminus\{0\}\times\mathbb{T}^{2}\rightarrow \mathbb{R}$ defined pointwise by
\begin{equation*}
\Lambda_{\varphi, e, p, y}(A, \vartheta, \overline{\vartheta}):=\Gamma_{\varphi, e, p}(\Delta^{-1}A\Delta I_{\ast}y, \vartheta, \overline{\vartheta})    
\end{equation*}
is of class $\mathscr{C}^{1}$ and satisfies the identity
\begin{equation*}
\Lambda_{\varphi, e, p, y}(g, \vartheta, \overline{\vartheta})=\Lambda_{\varphi, e, p, y}(I_{3}, \vartheta, \overline{\vartheta})    
\end{equation*}
for all $g\in G_{0}(\vartheta, \overline{\vartheta})$.
\end{prop}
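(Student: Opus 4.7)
The plan is to formalise the chain of substitutions already sketched in Section \ref{furreddy}, beginning from the conclusion of Proposition \ref{usemeprop}, namely that the auxiliary map $\Omega_{\varphi,e,p}$ satisfies
\begin{equation*}
\Omega_{\varphi,e,p}(s_\beta y, \vartheta, \overline{\vartheta}) = \Omega_{\varphi,e,p}(y, \vartheta, \overline{\vartheta})
\end{equation*}
for all $\beta = (\psi, \vartheta, \overline{\vartheta}) \in \mathbb{T}^3$ and $y \in \mathbb{R}^{4}\setminus\{0\}$. The first step is to record the crucial algebraic feature of the normal vector $\widehat{k}_\beta \in \mathbb{S}^3$: its third and fourth entries coincide, both equal to $-\sqrt{2J}/\sqrt{md_\beta^{2}+4J}$. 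Consequently, $\widehat{k}_\beta \cdot y$ depends on $y$ only through the three quantities $y_1, y_2, y_3+y_4$, i.e.\ only through $I_\ast y$, and the reflection $s_\beta$ preserves the subspace $\{y_3 = y_4\}\subset\mathbb{R}^{4}$.

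The second step is to establish the conjugation identity
\begin{equation*}
I_\ast s_\beta = \Delta^{-1} r_\beta \Delta I_\ast,
\end{equation*}
which is a short finite-dimensional linear algebra computation. Matching the normalisations here requires care: the entry $-2\sqrt{J}$ appearing in $\widehat{\gamma}_\beta$ is exactly $\sqrt{2}$ times the entry $-\sqrt{2J}$ appearing in $\widehat{k}_\beta$, and the factor $1/\sqrt{2}$ in $\Delta$ absorbs this discrepancy, yielding the claimed reflection $r_\beta = I_3 - 2\widehat{\gamma}_\beta \otimes \widehat{\gamma}_\beta$. Once this identity is in hand, I would define $\Gamma_{\varphi,e,p}$ on $\mathbb{R}^{3}\setminus\{0\}\times\mathbb{T}^2$ unambiguously by fixing the canonical section $y_3=y_4$ of $I_\ast$, namely
\begin{equation*}
\Gamma_{\varphi,e,p}(z,\vartheta,\overline{\vartheta}) := \Omega_{\varphi,e,p}\bigl((z_1, z_2, z_3/2, z_3/2), \vartheta, \overline{\vartheta}\bigr).
\end{equation*}
Because $s_\beta$ maps $\{y_3=y_4\}$ to itself, substituting into the invariance of $\Omega$ and applying the conjugation identity gives $\Gamma_{\varphi,e,p}(\Delta^{-1} r_\beta \Delta z,\vartheta,\overline{\vartheta}) = \Gamma_{\varphi,e,p}(z,\vartheta,\overline{\vartheta})$.

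The third step is to iterate this identity in $\psi$ with the orientations $(\vartheta,\overline{\vartheta})$ held fixed. Because the map $M \mapsto \Delta^{-1}M\Delta$ is a group homomorphism, a straightforward induction on word length in the generators $\{r_\beta\}_{\psi\in\mathbb{S}^{1}}$ upgrades the single-generator invariance to the full group invariance
\begin{equation*}
\Gamma_{\varphi,e,p}(\Delta^{-1} g \Delta z, \vartheta, \overline{\vartheta}) = \Gamma_{\varphi,e,p}(z, \vartheta, \overline{\vartheta})\quad\text{for all }g\in G_0(\vartheta,\overline{\vartheta}).
\end{equation*}
Substituting $z = I_\ast y$, recalling the definition of $\Lambda_{\varphi,e,p,y}$, and taking $g = I_3$ on the right-hand side delivers the desired identity $\Lambda_{\varphi,e,p,y}(g,\vartheta,\overline{\vartheta}) = \Lambda_{\varphi,e,p,y}(I_3,\vartheta,\overline{\vartheta})$. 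The $\mathscr{C}^{1}$ regularity of $\Lambda_{\varphi,e,p,y}$ in all its arguments is automatic, since $\varphi\in C^{1}$ by hypothesis and all intervening maps ($H_{\mathsf{P}_\ast}$, the matrix multiplication in $A$, and the linear projection $I_\ast$) are smooth.

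The principal obstacle is the algebraic verification of the conjugation identity $I_\ast s_\beta = \Delta^{-1} r_\beta \Delta I_\ast$: while mechanical, it relies on two independent matchings — the special structure of $\widehat{k}_\beta$ (equality of its last two entries) and the precise rescaling implemented by $\Delta$ — and it is this step that makes the whole dimension-reduction mechanism work and that is responsible for the passage from $\mathrm{O}(4)$ to $\mathrm{O}(3)$. A secondary point that needs care is the well-definedness of $\Gamma_{\varphi,e,p}$: since $\Omega_{\varphi,e,p}$ is not \emph{a priori} constant on the fibres of $I_\ast$ (those fibres being transverse to the invariant subspace $\{y_3=y_4\}$), one must explicitly choose a section, as done above.
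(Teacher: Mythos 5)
Your proposal follows essentially the same route as the paper: verify the intertwining identity $I_{\ast}s_{\beta}=\Delta^{-1}r_{\beta}\Delta I_{\ast}$ (which rests exactly on the two matchings you identify --- the equality of the last two entries of $\widehat{k}_{\beta}$ and the $1/\sqrt{2}$ in $\Delta$ reconciling $-\sqrt{2J}$ with $-2\sqrt{J}$), pass from $\Omega_{\varphi,e,p}$ to $\Gamma_{\varphi,e,p}$, and iterate over words in the generators $\{r_{\beta}\}_{\psi\in\mathbb{S}^{1}}$ using that conjugation by $\Delta$ is a homomorphism. The one place you genuinely add something is the well-definedness of $\Gamma_{\varphi,e,p}$: the paper defines it ``implicitly'' by $\Gamma_{\varphi,e,p}(I_{\ast}y,\cdot)=\Omega_{\varphi,e,p}(y,\cdot)$, which presupposes that $\Omega_{\varphi,e,p}$ is constant on the fibres of $I_{\ast}$ --- something not known a priori, since $\Omega_{\varphi,e,p}$ depends on $y_{3}$ and $y_{4}$ separately through the two angular velocities. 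Your explicit choice of the section $y_{3}=y_{4}$, combined with the observation that each $s_{\beta}$ preserves that subspace (so the invariance of $\Omega_{\varphi,e,p}$ can be read off there and then transported to all of $\mathbb{R}^{3}\setminus\{0\}$ via the surjectivity of $I_{\ast}$ restricted to the section), is a correct and necessary repair of that step, and everything downstream of it in the paper's argument goes through unchanged.
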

At this point, we are ready to exploit information on the map $\Lambda_{\varphi, e, p, y}$ to infer information on the original auxiliary map $\Phi_{\varphi}$. We do this in the following section.
\subsection{A Symmetry Condition on $D\Lambda_{\varphi, e, p, y}(I_{3}, \vartheta, \overline{\vartheta})$}
Let us begin with the following lemma.
\begin{lem}\label{symmat}
Let $\mathsf{P}_{\ast}\subset\mathbb{R}^{2}$ be a compact, convex set with boundary curve of class $\mathscr{C}^{1}$. Suppose that $\varphi\in C^{1}(\mathbb{R}^{2}\times\mathbb{R}\times\mathbb{S}^{1})$ is an $\mathscr{S}_{\times}$-collision invariant. For any $(e, p)\in\mathcal{A}_{\mathsf{P}_{\ast}}$, $(\vartheta, \overline{\vartheta})\in\mathbb{T}^{2}$, and any $y\in\mathbb{S}^{3}$, it holds that $D\Lambda_{\varphi, e, p, y}(I_{3}, \vartheta, \overline{\vartheta})\in\mathbb{R}^{3\times 3}$ is a symmetric matrix. 
\end{lem}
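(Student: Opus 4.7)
The plan is to extract tangent vectors to $G_0(\vartheta, \overline{\vartheta})$ at the identity $I_3$ directly from the constancy identity established in Proposition \ref{helpme}, by exploiting the fact that each reflection $r_\beta$ squares to the identity. I would fix $(e, p) \in \mathcal{A}_{\mathsf{P}_\ast}$, $y \in \mathbb{S}^3$, and $(\vartheta, \overline{\vartheta}) \in \mathbb{T}^2$, and abbreviate $r(\psi) := r_{(\psi, \vartheta, \overline{\vartheta})}$ and $\widehat{\gamma}_\beta(\psi) := \widehat{\gamma}_{(\psi, \vartheta, \overline{\vartheta})}$. Since $r(\psi)^2 = I_3$ for every $\psi \in \mathbb{S}^1$, the smooth two-parameter family $(\psi_1, \psi_2) \mapsto r(\psi_1) r(\psi_2)$ lies in $G_0(\vartheta, \overline{\vartheta})$ and passes through $I_3$ along the diagonal $\psi_1 = \psi_2$. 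Proposition \ref{helpme} implies that $\Lambda_{\varphi, e, p, y}$ is constant along this path in its first argument; differentiating at $\psi_1 = \psi_2 =: \psi$ and applying the chain rule (valid since $\varphi \in C^1$ guarantees $\Lambda_{\varphi, e, p, y} \in C^1$) yields the Frobenius-orthogonality relation
\begin{equation*}
r'(\psi) r(\psi) : D\Lambda_{\varphi, e, p, y}(I_3, \vartheta, \overline{\vartheta}) = 0 \quad \text{for every } \psi \in \mathbb{S}^1.
\end{equation*}

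Next I would compute $r'(\psi)r(\psi)$ explicitly. A direct calculation using $|\widehat{\gamma}_\beta(\psi)|^2 = 1$ (so that $\widehat{\gamma}_\beta(\psi) \cdot \widehat{\gamma}_\beta'(\psi) = 0$) together with the identity $(a \otimes b)(c \otimes d) = (b \cdot c)\, a \otimes d$ should produce
\begin{equation*}
r'(\psi) r(\psi) = 2 \bigl(\widehat{\gamma}_\beta'(\psi) \otimes \widehat{\gamma}_\beta(\psi) - \widehat{\gamma}_\beta(\psi) \otimes \widehat{\gamma}_\beta'(\psi)\bigr),
\end{equation*}
which is manifestly skew-symmetric. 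In view of the Frobenius orthogonal decomposition $\mathbb{R}^{3\times 3} = \mathrm{Sym}(3) \oplus \mathfrak{so}(3)$, the desired symmetry of $D\Lambda_{\varphi, e, p, y}(I_3, \vartheta, \overline{\vartheta})$ will then follow as soon as the family of skew matrices
\begin{equation*}
\mathcal{K}(\vartheta, \overline{\vartheta}) := \bigl\{\widehat{\gamma}_\beta'(\psi) \otimes \widehat{\gamma}_\beta(\psi) - \widehat{\gamma}_\beta(\psi) \otimes \widehat{\gamma}_\beta'(\psi) : \psi \in \mathbb{S}^1 \bigr\}
\end{equation*}
linearly spans all of $\mathfrak{so}(3)$.

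The main obstacle I anticipate is precisely this spanning verification, which is the three-dimensional analogue of the computation of the set $K(\vartheta, \overline{\vartheta})$ carried out in Section \ref{furreddy}. Under the identification of $\mathfrak{so}(3)$ with $\mathbb{R}^3$ via the Hodge star, the skew matrix above corresponds to the vector $\widehat{\gamma}_\beta(\psi) \times \widehat{\gamma}_\beta'(\psi)$, so I would reduce the spanning claim to showing that $\{\widehat{\gamma}_\beta(\psi) \times \widehat{\gamma}_\beta'(\psi) : \psi \in \mathbb{S}^1\}$ spans $\mathbb{R}^3$. Using the explicit formula
\begin{equation*}
\widehat{\gamma}_\beta(\psi) = \frac{1}{\sqrt{m d_\beta^2 + 4J}} \bigl(-\sqrt{m}\, d_\beta \sin\psi,\ \sqrt{m}\, d_\beta \cos\psi,\ -2\sqrt{J}\bigr)^T
\end{equation*}
and keeping careful track of the $\psi$-dependence of $d_\beta$ (whose $\mathscr{C}^1$ regularity follows from the $\mathscr{C}^1$ regularity of $\partial\mathsf{P}_\ast$), I would compute $\widehat{\gamma}_\beta \times \widehat{\gamma}_\beta'$ in closed form and evaluate it at three judiciously chosen values of $\psi$ (for instance $0,\ \pi/2,\ \pi$) to produce three linearly independent vectors. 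The delicate point is handling potential degeneracies of $d_\beta$: in the worst case, when $d_\beta$ is constant in $\psi$ (as for a disk reference particle), a quick computation shows the third component of $\widehat{\gamma}_\beta \times \widehat{\gamma}_\beta'$ is a nonzero constant while the first two rotate in the plane, so the spanning still holds; the non-constant case can be handled by a perturbation argument off this baseline. Once spanning is established, linearity extends the orthogonality relation to $A : D\Lambda_{\varphi, e, p, y}(I_3, \vartheta, \overline{\vartheta}) = 0$ for every $A \in \mathfrak{so}(3)$, which is equivalent to the symmetry assertion of the lemma.
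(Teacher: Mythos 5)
Your proposal follows essentially the same route as the paper's proof: differentiate the invariance of $\Lambda_{\varphi, e, p, y}$ along the two-parameter family $r_{\beta_{1}}r_{\beta_{2}}$ at the identity to obtain the Frobenius-orthogonality of $D\Lambda_{\varphi, e, p, y}(I_{3}, \vartheta, \overline{\vartheta})$ to the skew matrices $\partial_{\psi}\widehat{\gamma}_{\beta}\otimes\widehat{\gamma}_{\beta}-\widehat{\gamma}_{\beta}\otimes\partial_{\psi}\widehat{\gamma}_{\beta}$, verify that these span $\mathfrak{so}(3)$, and conclude symmetry. The one soft spot is your ``perturbation off the disk'' justification of the spanning when $d_{\beta}$ is non-constant (a general convex body is not a perturbation of a disk); the clean fix is to suppose some $\xi\neq 0$ is orthogonal to every $\widehat{\gamma}_{\beta}\times\partial_{\psi}\widehat{\gamma}_{\beta}$ and derive a contradiction from the strict signs of the third component of $\widehat{\gamma}_{\beta}$ and of the third component $\rho^{2}>0$ of the cross product, which is in the spirit of the contradiction argument the paper sketches in the remark following the lemma.
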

\begin{proof}
From Proposition \ref{helpme} above, we have that
\begin{equation*}
\Lambda_{\varphi, e, p, y}(r_{\beta_{1}}r_{\beta_{2}}, \vartheta, \overline{\vartheta})=\Lambda_{\varphi, e, p, y}(I_{3}, \vartheta, \overline{\vartheta})
\end{equation*}
for all $\beta_{1}, \beta_{2}\in\mathbb{T}^{3}$ of the form $\beta_{i}=(\psi_{i}, \vartheta, \overline{\vartheta})$, where $\psi_{i}\in\mathbb{S}^{1}$ for $i\in\{1, 2\}$. In turn, in a manner similar to the case considered in section \ref{furreddy} above, we find that
\begin{equation}\label{lee}
\left(\frac{\partial}{\partial\psi}\widehat{\gamma}_{\beta}\otimes\widehat{\gamma}_{\beta}-\widehat{\gamma}_{\beta}\otimes\frac{\partial}{\partial\psi}\widehat{\gamma}_{\beta}\right):D\Lambda_{\varphi, e, p, y}(I_{3}, \vartheta, \overline{\vartheta})=0    
\end{equation}
for all $\psi\in\mathbb{S}^{1}$. By suitable choices of $\psi\in\mathbb{S}^{1}$, it is straightforward to show that
\begin{equation*}
\mathrm{span}\left\{
\frac{\partial}{\partial\psi}\widehat{\gamma}_{\beta}\otimes\widehat{\gamma}_{\beta}-\widehat{\gamma}_{\beta}\otimes\frac{\partial}{\partial\psi}\widehat{\gamma}_{\beta}\,:\,\psi\in\mathbb{S}^{1}
\right\}=\mathfrak{so}(3),
\end{equation*}
whence from \eqref{lee} above we conclude that $D\Lambda_{\varphi, e, p, y}(I_{3}, \vartheta, \overline{\vartheta})$ satisfies
\begin{equation*}
A: D\Lambda_{\varphi, e, p, y}(I_{3}, \vartheta, \overline{\vartheta})=0   
\end{equation*}
for all $A\in\mathfrak{so}(3)$. In turn, $D\Lambda_{\varphi, e, p, y}(I_{3}, \vartheta, \overline{\vartheta})\in\mathbb{R}^{3\times 3}$ is a symmetric matrix as claimed.
\end{proof}
\begin{rem}
It is also possible to derive the conclusion of the above lemma using the result of \cite{MR0463329} and (\cite{saint2018collision}, Appendix). In particular, it is straightforward to show that
\begin{equation*}
\mathrm{span}\{
\widehat{\gamma}_{\beta}\,:\,\psi\in\mathbb{S}^{1}
\}=\mathbb{R}^{3},
\end{equation*}
from which we conclude that $\mathfrak{g}_{0}=\mathfrak{so}(3)$. Indeed, aiming for a contradiction, suppose that it holds that
\begin{equation}\label{ortho}
\widehat{\gamma}_{\beta}\cdot \xi=0    
\end{equation}
for all $\psi\in\mathbb{S}^{1}$ for some $\xi\in\mathbb{R}^{3}\setminus\{0\}$. Identity \eqref{ortho} is then equivalent to the statement that the distance of closest approach $d_{\beta}$ associated to the particle $\mathsf{P}_{\ast}$ is given pointwise by
\begin{equation*}
d_{\beta}=2\xi_{3}\sqrt{\frac{J}{m(\xi_{1}^{2}+\xi_{2}^{2})}}\mathrm{sech}(\psi+\alpha)    
\end{equation*}
for all $\psi\in\mathbb{S}^{1}$, where $\alpha:=\arctan(\xi_{2}/\xi_{1})$, which is absurd.
\end{rem}
As a useful consequence of the above lemma, we are able to obtain the following structural information on the auxiliary map $\Omega_{\varphi, e, p}$.
\begin{cor}
Let $\mathsf{P}_{\ast}\subset\mathbb{R}^{2}$ be a compact, convex set with boundary curve of class $\mathscr{C}^{1}$. Suppose that $\varphi\in C^{1}(\mathbb{R}^{2}\times\mathbb{R}\times\mathbb{S}^{1})$ is an $\mathscr{S}_{\times}$-collision invariant. For any $(e, p)\in\mathcal{A}_{\mathsf{P}_{\ast}}$, it holds that $\Omega_{\varphi, e, p}:\mathbb{R}^{4}\setminus\{0\}\times\mathbb{T}^{2}$ admits the representation
\begin{equation*}
\Omega_{\varphi, e, p}(y, \vartheta, \overline{\vartheta})=\widetilde{\Omega}_{\varphi, e, p}(\sqrt{1+2y_{3}y_{4}}, \vartheta, \overline{\vartheta})   
\end{equation*}
for all $y\in\mathbb{S}^{3}$ for some $\varphi$- and $(e, p)$-parametrised function $\widetilde{\Omega}_{\varphi, e, p}:(0, \infty)\times\mathbb{T}^{2}\rightarrow\mathbb{R}$.
\end{cor}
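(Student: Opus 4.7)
The plan is to mirror the strategy from the hard sphere case of section \ref{hardsphereproof}: use the symmetry of $D\Lambda_{\varphi, e, p, y}(I_{3}, \vartheta, \overline{\vartheta})$ established in Lemma \ref{symmat} to extract a pointwise first-order condition on $\nabla\Gamma_{\varphi, e, p}$, integrate this condition via a radial-function argument, and then translate the result back to $\Omega_{\varphi, e, p}$ restricted to $\mathbb{S}^{3}$. A routine chain rule calculation applied to the definition $\Lambda_{\varphi, e, p, y}(A, \vartheta, \overline{\vartheta}) = \Gamma_{\varphi, e, p}(\Delta^{-1} A \Delta I_{\ast} y, \vartheta, \overline{\vartheta})$ gives
\begin{equation*}
D\Lambda_{\varphi, e, p, y}(I_{3}, \vartheta, \overline{\vartheta}) = \bigl(\Delta^{-1} \nabla_{z} \Gamma_{\varphi, e, p}(I_{\ast} y, \vartheta, \overline{\vartheta})\bigr) \otimes \bigl(\Delta I_{\ast} y\bigr),
\end{equation*}
which is a rank-one matrix, and which by Lemma \ref{symmat} must be symmetric.

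Since a rank-one matrix $u \otimes v$ is symmetric if and only if $u$ and $v$ are linearly dependent, there exists a scalar function $\mu = \mu(y, \vartheta, \overline{\vartheta})$ with $\Delta^{-1} \nabla_{z} \Gamma_{\varphi, e, p}(I_{\ast} y, \vartheta, \overline{\vartheta}) = \mu \Delta I_{\ast} y$. As $y$ varies in $\mathbb{R}^{4} \setminus \{0\}$ the image $I_{\ast} y$ covers all of $\mathbb{R}^{3} \setminus \{0\}$, so after applying $\Delta$ to both sides this becomes the pointwise condition
\begin{equation*}
\nabla_{z} \Gamma_{\varphi, e, p}(z, \vartheta, \overline{\vartheta}) = \mu(z, \vartheta, \overline{\vartheta}) \Delta^{2} z \quad \text{for all } z \in \mathbb{R}^{3} \setminus \{0\}.
\end{equation*}
To integrate this condition, I will perform the change of variables $w = \Delta z$ and set $\tilde{\Gamma}(w, \vartheta, \overline{\vartheta}) := \Gamma_{\varphi, e, p}(\Delta^{-1} w, \vartheta, \overline{\vartheta})$, so that the chain rule converts the above into $\nabla_{w} \tilde{\Gamma}(w, \vartheta, \overline{\vartheta}) = \mu w$. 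The radial-function lemma deployed in section \ref{hardsphereproof} then produces a scalar function $\widetilde{\Omega}_{\varphi, e, p}$ with $\tilde{\Gamma}(w, \vartheta, \overline{\vartheta}) = \widetilde{\Omega}_{\varphi, e, p}(|w|, \vartheta, \overline{\vartheta})$, and undoing the change of variables yields $\Gamma_{\varphi, e, p}(z, \vartheta, \overline{\vartheta}) = \widetilde{\Omega}_{\varphi, e, p}(|\Delta z|, \vartheta, \overline{\vartheta})$.

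Composing with $I_{\ast}$ and using $\Omega_{\varphi, e, p}(y, \vartheta, \overline{\vartheta}) = \Gamma_{\varphi, e, p}(I_{\ast} y, \vartheta, \overline{\vartheta})$, the proof is completed by a short algebraic identification of the scalar argument $|\Delta I_{\ast} y|$ with $\sqrt{1 + 2 y_{3} y_{4}}$ on $\mathbb{S}^{3}$ by means of the constraint $|y|^{2} = 1$. The principal technical point is the integration step: because $\Delta$ is not orthogonal, the pointwise relation $\nabla_{z} \Gamma_{\varphi, e, p}(z) \propto \Delta^{2} z$ is not immediately a radial-form statement, and the $\Delta$-conjugation that reduces it to the classical radial-function characterisation already used for hard spheres is essential. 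The remaining steps, the derivative computation and the final algebraic identification on $\mathbb{S}^{3}$, are routine.
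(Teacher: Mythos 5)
Your strategy is the same as the paper's (use the symmetry of $D\Lambda_{\varphi,e,p,y}(I_{3},\vartheta,\overline{\vartheta})$ from Lemma \ref{symmat}, note that a symmetric rank-one matrix forces proportionality of its two factors, integrate the resulting first-order condition to a radial function, and pull back through $I_{\ast}$), and your chain rule
\begin{equation*}
D\Lambda_{\varphi, e, p, y}(I_{3}, \vartheta, \overline{\vartheta}) = \bigl(\Delta^{-1} \nabla_{z} \Gamma_{\varphi, e, p}(I_{\ast} y, \vartheta, \overline{\vartheta})\bigr) \otimes \bigl(\Delta I_{\ast} y\bigr)
\end{equation*}
is the correct differential of $A\mapsto\Gamma_{\varphi,e,p}(\Delta^{-1}A\Delta I_{\ast}y,\vartheta,\overline{\vartheta})$ at $A=I_{3}$. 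The genuine failure is your final step: the ``short algebraic identification'' of $|\Delta I_{\ast}y|$ with $\sqrt{1+2y_{3}y_{4}}$ on $\mathbb{S}^{3}$ is false. Since $\Delta I_{\ast}y=(y_{1},y_{2},\tfrac{1}{\sqrt{2}}(y_{3}+y_{4}))$, one has $|\Delta I_{\ast}y|^{2}=y_{1}^{2}+y_{2}^{2}+\tfrac{1}{2}(y_{3}+y_{4})^{2}=1-\tfrac{1}{2}(y_{3}-y_{4})^{2}$ on $\mathbb{S}^{3}$, whereas $1+2y_{3}y_{4}=y_{1}^{2}+y_{2}^{2}+(y_{3}+y_{4})^{2}$ there; at $y=\tfrac{1}{\sqrt{2}}(0,0,1,1)$ the first quantity equals $1$ and the second equals $2$. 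Carried out to the end, your argument therefore proves $\Omega_{\varphi,e,p}(y,\vartheta,\overline{\vartheta})=\widetilde{\Omega}_{\varphi,e,p}\bigl(\sqrt{1-\tfrac{1}{2}(y_{3}-y_{4})^{2}},\vartheta,\overline{\vartheta}\bigr)$, which is not the statement of the corollary, and the last line cannot be repaired without abandoning your (correct) derivative computation.

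The discrepancy is worth understanding rather than papering over. The paper's own proof writes $D\Lambda_{\varphi,e,p,y}(I_{3},\vartheta,\overline{\vartheta})=\nabla\Gamma_{\varphi,e,p}(I_{\ast}y,\vartheta,\overline{\vartheta})\otimes I_{\ast}y$, i.e.\ it drops the $\Delta$-conjugation, and that omission is exactly what produces the quantity $y_{1}^{2}+y_{2}^{2}+(y_{3}+y_{4})^{2}=1+2y_{3}y_{4}$ appearing in the statement. Your version is the one consistent with the underlying group theory: every generator $s_{\beta}=I_{4}-2\widehat{k}_{\beta}\otimes\widehat{k}_{\beta}$ has $\widehat{k}_{\beta}$ lying in the subspace $W=\mathrm{span}\{e_{1},e_{2},e_{3}+e_{4}\}$, so $G_{\ast}(\vartheta,\overline{\vartheta})$ acts orthogonally on $W$ and trivially on $W^{\perp}$, and the natural invariant is $|P_{W}y|^{2}=y_{1}^{2}+y_{2}^{2}+\tfrac{1}{2}(y_{3}+y_{4})^{2}$, where $P_{W}$ denotes the orthogonal projection onto $W$ --- precisely your $|\Delta I_{\ast}y|^{2}$ --- while $1+2y_{3}y_{4}$ is not preserved even by a single reflection $s_{\beta}$ (which fixes $|y|^{2}$ and $y_{3}-y_{4}$ but not $y_{3}+y_{4}$). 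So the gap in your write-up is real --- the concluding identification fails and with it the proof of the stated formula --- but it exposes a tension between the corollary's displayed expression and the correct differential of $\Lambda_{\varphi,e,p,y}$; replacing the final line by $\sqrt{1-\tfrac{1}{2}(y_{3}-y_{4})^{2}}$ yields a sound conclusion, but a different one from that asserted, and the subsequent use of the corollary in the proof of Theorem \ref{maintheorem} would then need to be re-examined.
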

\begin{proof}
By definition of the auxiliary map $\Lambda_{\varphi, e, p, y}$, it holds that
\begin{equation*}
D\Lambda_{\varphi, e, p, y}(I_{3}, \vartheta, \overline{\vartheta})=\nabla\Gamma_{\varphi, e, p}(I_{\ast}y, \vartheta, \overline{\vartheta})\otimes I_{\ast}y    
\end{equation*}
for all $y\in\mathbb{R}^{4}\setminus\{0\}$. Converting this statement to information about the auxiliary map $\Omega_{\varphi, e, p}$, we in turn find that
\begin{equation*}
D\Lambda_{\varphi, e, p, y}(I_{3}, \vartheta, \overline{\vartheta})=I_{\ast}^{(i)}\nabla\Omega_{\varphi, e, p}(y, \vartheta, \overline{\vartheta})\otimes I_{\ast}y,    
\end{equation*}
where, for each $i\in\{1, 2\}$, the matrix $I_{\ast}^{(i)}\in\mathbb{R}^{3\times 4}$ is given by
\begin{equation*}
I_{\ast}^{(1)}:=\left(
\begin{array}{cccc}
1 & 0 & 0 & 0 \\
0 & 1 & 0 & 0 \\
0 & 0 & 1 & 0
\end{array}
\right) \quad \text{and}\quad I_{\ast}^{(2)}:=\left(
\begin{array}{cccc}
1 & 0 & 0 & 0 \\
0 & 1 & 0 & 0 \\
0 & 0 & 0 & 1
\end{array}
\right).
\end{equation*}
By the result of lemma \ref{symmat} above, we conclude that the components of $\nabla\Omega_{\varphi, e, p}\in\mathbb{R}^{4}$ satisfy the identities
\begin{equation*}
y_{2}\frac{\partial\Omega_{\varphi, e, p}}{\partial y_{1}}(y, \vartheta, \overline{\vartheta})=y_{1}\frac{\partial\Omega_{\varphi, e, p}}{\partial y_{2}}(y, \vartheta, \overline{\vartheta}),
\end{equation*}
\begin{equation*}
y_{1}\frac{\partial\Omega_{\varphi, e, p}}{\partial y_{3}}(y, \vartheta, \overline{\vartheta})=(y_{3}+y_{4})\frac{\partial\Omega_{\varphi, e, p}}{\partial y_{1}}(y, \vartheta, \overline{\vartheta}),
\end{equation*}
\begin{equation*}
y_{2}\frac{\partial\Omega_{\varphi, e, p}}{\partial y_{3}}(y, \vartheta, \overline{\vartheta})=(y_{3}+y_{4})\frac{\partial\Omega_{\varphi, e, p}}{\partial y_{2}}(y, \vartheta, \overline{\vartheta}),
\end{equation*}
\begin{equation*}
y_{1}\frac{\partial\Omega_{\varphi, e, p}}{\partial y_{4}}(y, \vartheta, \overline{\vartheta})=(y_{3}+y_{4})\frac{\partial\Omega_{\varphi, e, p}}{\partial y_{1}}(y, \vartheta, \overline{\vartheta})
\end{equation*}
and
\begin{equation*}
y_{2}\frac{\partial\Omega_{\varphi, e, p}}{\partial y_{4}}(y, \vartheta, \overline{\vartheta})=(y_{3}+y_{4})\frac{\partial\Omega_{\varphi, e, p}}{\partial y_{2}}(y, \vartheta, \overline{\vartheta})
\end{equation*}
for all $y\in\mathbb{R}^{4}\setminus\{0\}$. By employing an argument similar to that employed in the proof of Proposition \ref{usemeprop} above, we infer that there exists a function $\widetilde{\Omega}_{\varphi, e, p}:(0, \infty)\times\mathbb{T}^{2}\rightarrow\mathbb{R}$ such that
\begin{equation*}
\Omega_{\varphi, e, p}(y, \vartheta, \overline{\vartheta})=\widetilde{\Omega}_{\varphi, e, p}((y_{1}^{2}+y_{2}^{2}+(y_{3}+y_{4})^{2})^{1/2}, \vartheta, \overline{\vartheta})    
\end{equation*}
for all $y\in\mathbb{R}^{4}\setminus\{0\}$. In particular, for all $y\in\mathbb{S}^{3}$, owing to the fact that
\begin{equation*}
  y_{1}^{2}+y_{2}^{2}+(y_{3}+y_{4})^{2}=1+2y_{3}y_{4},
\end{equation*}
the statement of the corollary follows.
\end{proof}
In contrast to the case of canonical scattering, the essential group at the heart of the collision invariant problem for non-canonical scattering is $O(3)$, rather than $O(4)$. Due to this loss of symmetry, one might imagine that any $\mathscr{S}_{\times}$-collision invariant $\varphi$ is less symmetric than its $\mathscr{S}_{\ast}$ counterpart. Remarkably, although the `minimal' scattering group is smaller, any $\mathscr{S}_{\times}$-collision invariant is also a $\mathscr{S}_{\ast}$-collision invariant. Indeed, we may now prove our main theorem, namely Theorem \ref{maintheorem}.
\begin{thm}
Let $\mathsf{P}_{\ast}\subset\mathbb{R}^{2}$ be a compact, convex set with boundary curve of class $\mathscr{C}^{1}$. Suppose that $\varphi\in C^{1}(\mathbb{R}^{2}\times\mathbb{R}\times\mathbb{S}^{1})$ is an $\mathscr{S}_{\times}$-collision invariant. It holds that $\varphi$ is necessarily of the form
\begin{equation*}
\varphi(v, \omega, \vartheta)=a(\vartheta)+b\cdot v+c(m|v|^{2}+J\omega^{2})    
\end{equation*}
for some $a\in C^{1}(\mathbb{S}^{1})$, $b\in\mathbb{R}^{2}$, and $c\in\mathbb{R}$.
\end{thm}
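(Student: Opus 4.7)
The plan is to translate the statement of the preceding Corollary back through the chain of auxiliary maps into a direct identity for $\varphi$, and then to deduce the claimed form by a Pexider--Cauchy argument.

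\textbf{From $y$-coordinates to velocities.} Using the explicit formula for $H_{\mathsf{P}_{\ast}}$, the components of $y\in\mathbb{S}^{3}$ are
\[
y_{1}=\frac{m(v_{1}-\overline{v}_{1})}{\sqrt{2me^{2}-|p|^{2}}},\quad y_{2}=\frac{m(v_{2}-\overline{v}_{2})}{\sqrt{2me^{2}-|p|^{2}}},\quad y_{3}=\frac{\sqrt{2mJ}\,\omega}{\sqrt{2me^{2}-|p|^{2}}},\quad y_{4}=\frac{\sqrt{2mJ}\,\overline{\omega}}{\sqrt{2me^{2}-|p|^{2}}},
\]
and a short computation gives $1+2y_{3}y_{4}=y_{1}^{2}+y_{2}^{2}+(y_{3}+y_{4})^{2}=1+\frac{4mJ\,\omega\overline{\omega}}{2me^{2}-|p|^{2}}$. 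Since $p=m(v+\overline{v})$ and $e^{2}$ are already among the parameters, the Corollary yields that $\Phi_{\varphi}(V,\vartheta,\overline{\vartheta})$ depends on $V$ only through $v+\overline{v}$, the total kinetic energy $e^{2}=m(|v|^{2}+|\overline{v}|^{2})+J(\omega^{2}+\overline{\omega}^{2})$, and the product $\omega\overline{\omega}$; i.e. there exists $F\in C^{1}$ with
\[
\varphi(v,\omega,\vartheta)+\varphi(\overline{v},\overline{\omega},\overline{\vartheta})=F\bigl(v+\overline{v},\,m|v|^{2}+m|\overline{v}|^{2}+J\omega^{2}+J\overline{\omega}^{2},\,\omega\overline{\omega},\,\vartheta,\overline{\vartheta}\bigr).
\]

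\textbf{Isolating $\varphi$.} Setting $\overline{v}=0$, $\overline{\omega}=0$, and $\overline{\vartheta}=0$ in the above yields
\[
\varphi(v,\omega,\vartheta)=G\bigl(v,\,m|v|^{2}+J\omega^{2},\,\vartheta\bigr)
\]
for the $C^{1}$ function $G(v,s,\vartheta):=F(v,s,0,\vartheta,0)-\varphi(0,0,0)$. Already this shows that $\varphi$ depends on $\omega$ only through $\omega^{2}$, ruling out any linear-in-$\omega$ contribution.

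\textbf{Pexider--Cauchy.} Substituting back, with $s:=m|v|^{2}+J\omega^{2}$ and $\overline{s}:=m|\overline{v}|^{2}+J\overline{\omega}^{2}$, the functional equation becomes
\[
G(v,s,\vartheta)+G(\overline{v},\overline{s},\overline{\vartheta})=F(v+\overline{v},\,s+\overline{s},\,\omega\overline{\omega},\,\vartheta,\overline{\vartheta}).
\]
The left-hand side has no dependence on $\omega\overline{\omega}$, while for fixed $s,\overline{s}$ the product $\omega\overline{\omega}$ can be varied independently (e.g. by flipping signs of $\omega,\overline{\omega}$); hence $F$ is independent of its third slot. Specialising $(\overline{v},\overline{s})=(0,0)$ and then $(v,s)=(0,0)$ in the resulting identity forces $G(v,s,\vartheta)=H(v,s)+a(\vartheta)$ with $a(\vartheta):=G(0,0,\vartheta)$, and reveals that $H\in C^{1}$ satisfies the Cauchy equation
\[
H(v,s)+H(\overline{v},\overline{s})=H(v+\overline{v},\,s+\overline{s}).
\]
Differentiating this in $s$ and then in $\overline{s}$ forces $\partial_{s}H$ to be a constant $c\in\mathbb{R}$, so $H(v,s)=cs+k(v)$ with $k:\mathbb{R}^{2}\to\mathbb{R}$ satisfying $k(v)+k(\overline{v})=k(v+\overline{v})$; continuity yields $k(v)=b\cdot v$ for some $b\in\mathbb{R}^{2}$. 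Reassembling gives $\varphi(v,\omega,\vartheta)=a(\vartheta)+b\cdot v+c(m|v|^{2}+J\omega^{2})$, with $a\in C^{1}(\mathbb{S}^{1})$ since $a(\vartheta)=\varphi(0,0,\vartheta)$.

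\textbf{Anticipated obstacle.} The crux is the first step: one must carefully unwind the five successive auxiliary maps $\Phi_{\varphi}\rightsquigarrow\Psi_{\varphi}\rightsquigarrow\Omega_{\varphi,e,p}\rightsquigarrow\Gamma_{\varphi,e,p}\rightsquigarrow\Lambda_{\varphi,e,p,y}$ introduced in Propositions~\ref{usemeprop} and \ref{helpme} and verify that the spherical invariant $\sqrt{1+2y_{3}y_{4}}$ translates precisely into the mechanical quantity $\omega\overline{\omega}$ modulo the energy--momentum parameters. Once this identification is in hand, the remainder is a standard Cauchy argument; the only minor technical caveat is that the natural domain of $H$ is the half-space $\{(v,s):s\geq m|v|^{2}\}$ rather than all of $\mathbb{R}^{2}\times[0,\infty)$, but the differentiation approach used above remains valid on this open set.
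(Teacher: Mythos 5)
Your overall architecture is the same as the paper's: translate the Corollary's statement that $\Omega_{\varphi,e,p}$ depends on $y\in\mathbb{S}^{3}$ only through $1+2y_{3}y_{4}$ back into the mechanical variables, observe that this quantity corresponds to $\omega\overline{\omega}$ modulo the energy--momentum parameters, specialise the barred variables to zero to extract $\varphi(v,\omega,\vartheta)=G(v,m|v|^{2}+J\omega^{2},\vartheta)$, and finish with a Cauchy-type argument (which the paper outsources to Truesdell--Muncaster, and which you usefully make explicit). Your coordinate computation for $y$ in terms of $(v,\overline{v},\omega,\overline{\omega},e,p)$ is correct. However, there is a genuine gap at the step ``hence $F$ is independent of its third slot.'' Flipping the sign of $\omega$ (or $\overline{\omega}$) only shows that $F(w,t,\rho,\vartheta,\overline{\vartheta})=F(w,t,-\rho,\vartheta,\overline{\vartheta})$ at accessible values of $\rho$: in two dimensions $\omega$ is a scalar, so fixing $v$ and $s=m|v|^{2}+J\omega^{2}$ fixes $\omega$ up to sign, and hence $\omega\overline{\omega}$ can take only the two values $\pm\sqrt{(s-m|v|^{2})(\overline{s}-m|\overline{v}|^{2})}/J$. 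Evenness is strictly weaker than independence, and it is not enough for your subsequent argument: for example, $G(0,s,\vartheta)=s^{2}$ gives $G(0,J\omega^{2},\vartheta)+G(0,J\overline{\omega}^{2},\overline{\vartheta})=t^{2}-2J^{2}(\omega\overline{\omega})^{2}$ with $t=J(\omega^{2}+\overline{\omega}^{2})$, which is of the form $F(0,t,\omega\overline{\omega})$ with $F$ even in, but genuinely dependent on, its third slot. Since $|\omega\overline{\omega}|$ is a function of the \emph{individual} data $(v,s,\overline{v},\overline{s})$ and not of the sums $(v+\overline{v},s+\overline{s})$, you cannot pass from evenness to the unrestricted Pexider equation $H(v,s)+H(\overline{v},\overline{s})=H(v+\overline{v},s+\overline{s})$ that the rest of your argument solves. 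Note also that the residual $\omega\overline{\omega}$-dependence cannot be removed by further appeal to the group invariance, because the reduced scattering group preserves precisely this quantity on each energy--momentum fibre; it must be removed using the additive separability of the left-hand side.

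The fix is essentially what the paper does implicitly: freeze $\overline{\omega}=0$ (so the third slot is pinned at the single value $0$ rather than varied), which yields the \emph{restricted} Pexider identity
\begin{equation*}
G(v,s,\vartheta)+G(\overline{v},m|\overline{v}|^{2},\overline{\vartheta})=\Theta\bigl(v+\overline{v},\,s+m|\overline{v}|^{2},\,\vartheta,\overline{\vartheta}\bigr)
\end{equation*}
valid for all $\overline{v}\in\mathbb{R}^{2}$ and all $s\geq m|v|^{2}$, where the second summand ranges only over the paraboloid $\{(\overline{v},m|\overline{v}|^{2})\}$ rather than the full half-space. One must then check that the increments available on this paraboloid (together with the symmetric identity obtained by freezing $\omega=0$ instead) are rich enough to force $G(\cdot,\cdot,\vartheta)-G(0,0,\vartheta)$ to be affine; this is the content of the Gr\"{o}nwall/Truesdell--Muncaster summational-invariant argument cited in the paper, and it is not the unrestricted Cauchy equation you wrote down. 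Your differentiation scheme can be adapted to this restricted setting, but as written the reduction to it is unjustified. The remainder of your argument (the normalisation, the extraction of $a(\vartheta)=\varphi(0,0,\vartheta)$, and the linearity of $k$) is fine.
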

\begin{proof}
We may assume, without loss of generality, that the $\mathscr{S}_{\times}$-collision invariant $\varphi$ has the property 
\begin{equation*}
\varphi(0, 0, \vartheta)=0    
\end{equation*}
for all $\vartheta\in\mathbb{S}^{1}$. As we know from that the auxiliary map $\Omega_{\varphi, e, p}$ is of the shape
\begin{equation*}
\Omega_{\varphi, e, p}(y)=\widetilde{\Omega}_{\varphi, e, p}((1+2y_{3}y_{4})^{1/2}, \vartheta, \overline{\vartheta})    
\end{equation*}
for all $y\in\mathbb{S}^{3}$, from the definition of $\Omega_{\varphi, e, p}$ in terms of the original auxiliary map $\Phi_{\varphi}$, it can be verified readily that
\begin{equation*}
\Phi_{\varphi}(V)=\widetilde{\Omega}_{\varphi, |MV|, p(V)}\left(\left(1+\frac{2mJV_{5}V_{6}}{2m|V|^{2}-|p(V)|^{2}}\right)^{1/2}, \vartheta, \overline{\vartheta}\right)    
\end{equation*}
for all $V\in\mathbb{R}^{6}$, where
\begin{equation*}
p(V):=m\left(
\begin{array}{c}
V_{1}+V_{3} \\
V_{2}+V_{4}
\end{array}
\right).
\end{equation*}
In turn, we infer the existence of a map $\widetilde{\Phi}_{\varphi}: (0, \infty)\times \mathbb{R}^{2}\times\mathbb{R}\times\mathbb{T}^{2}\rightarrow\mathbb{R}$ such that
\begin{equation*}
\Phi_{\varphi}(V)=\widetilde{\Phi}_{\varphi}\left(|MV|^{2}, p(V), \left(1+\frac{2mJV_{5}V_{6}}{2m|MV|^{2}-|p(V)|^{2}}\right)^{1/2}, \vartheta, \overline{\vartheta}\right)       
\end{equation*}
for all $V\in\mathbb{R}^{6}$. Converting this to information in terms of the $\mathscr{S}_{\times}$-collision invariant $\varphi$, we obtain that $\varphi$ satisfies the identity
\begin{equation}\label{almostdone}
\varphi(v, \omega, \vartheta)+\varphi(\overline{v}, \overline{\omega}, \overline{\vartheta})=\widetilde{\Phi}_{\varphi}\left(|MV|^{2}, p(V), \left(1+\frac{2mJ\omega\overline{\omega}}{2m|MV|^{2}-|p(V)|^{2}}\right)^{1/2}, \vartheta, \overline{\vartheta}\right)
\end{equation}
for all $V=[v, \overline{v}, \omega, \overline{\omega}]\in\mathbb{R}^{6}$ and $(\vartheta, \overline{\vartheta})\in\mathbb{T}^{2}$. By setting $\overline{v}=0$ and $\overline{\omega}=0$, we find that
\begin{equation}\label{reallyalmostdone}
\varphi(v, \omega, \vartheta)=\widetilde{\Phi}_{\varphi}(m|v|^{2}+J\omega^{2}, mv, 1, \vartheta, \overline{\vartheta})    
\end{equation}
for all $v\in\mathbb{R}^{2}$, $\omega\in\mathbb{R}$ and all $(\vartheta, \overline{\vartheta})\in\mathbb{T}^{2}$, whence $\widetilde{\Phi}_{\varphi}$ does \emph{not} depend on its $\overline{\vartheta}$-argument. Similarly, by setting $v=0$ and $\omega=0$ in \eqref{almostdone}, we also find that $\widetilde{\Phi}_{\varphi}$ does not depend on its $\vartheta$-argument. Thus, from \eqref{reallyalmostdone}, it holds that there exists a map $\Phi_{\varphi}^{\ast}:[0, \infty)\times \mathbb{R}^{2}\rightarrow\mathbb{R}$ such that
\begin{equation*}
\varphi(v, \omega, \vartheta)=\Phi_{\varphi}^{\ast}(m|v|^{2}+J\omega^{2}, mv)    
\end{equation*}
for all $v\in\mathbb{R}^{2}$ and $\omega\in\mathbb{R}$. By a routine argument, following the work of Truesdell and Muncaster (\cite{truesdell1980fundamentals}, Chapter VI), we find that the $\mathscr{S}_{\times}$-collision invariant $\varphi$ is necessarily of the form
\begin{equation*}
\varphi(v, \omega, \vartheta)=b\cdot v+c(m|v|^{2}+J\omega^{2})    
\end{equation*}
for all $v\in\mathbb{R}^{2}$, $\omega\in\mathbb{R}$ and $\vartheta\in\mathbb{S}^{2}$ for some constant $b\in\mathbb{R}^{2}$ and constant $c\in\mathbb{R}$. The proof of the main theorem follows.
\end{proof}
\section{Concluding Remarks}\label{conclusion}
In this paper, we have shown that, in the case of two-dimensional convex particles with boundary that is suitably regular, $\mathscr{S}_{\times}$-collision invariants $\varphi$ are necessarily of the form
\begin{equation}\label{unitwo}
\varphi(v, \omega, \vartheta)=a(\vartheta)+b\cdot v+c(m|v|^{2}+J\omega^{2})    
\end{equation}
for all $v\in\mathbb{R}^{2}$, $\omega\in\mathbb{R}$, and $\vartheta\in\mathbb{S}^{1}$. As a consequence of Theorem \ref{maintheorem}, together with the known result of \cite{saint2018collision}, it follows that all collision invariants for physical scattering families (as defined by Definition \ref{physscatdeftwo} in this work) in the two-dimensional case are of the form \eqref{unitwo} above. However, the problem for three-dimensional convex particles is more complicated. As reported in \cite{wilkinson2020non}, there are uncountably-many physical scattering families in the case of three-dimensional convex particles. As such, characterisation of all possible $\mathscr{S}$-collision invariants is rather more involved, and therefore was not tackled in the present article. 

Let us now comment on the physical significance of characterising collision invariants for all possible physical scattering families.
\subsection{Universality}
As a result of the multiplicity of physical scattering families in the three-dimensional case, one might ask, rather naturally, if all collision invariants are of the same shape, no matter the underlying choice of the hard particle scattering one makes. As the general theory of the classical Boltzmann equation makes clear, collision invariants reveal the \emph{mesoscopic} physical properties of a dilute gas (and, in the context of hydrodynamic limits \cite{saint2009hydrodynamic} the \emph{macroscopic} properties thereof). As such, if all collision invariants for convex particle scattering happen to be the same, irrespective of the underlying physical scattering family, one might conclude that the `observable' properties of a dilute gas do not depend on the precise details of the underlying scattering processes, only on the fact that scattering conserves total momentum and energy of the particle system, together with the condition that the particles do not interpenetrate.

In this direction, suppose that a reference hard particle $\mathsf{P}_{\ast}\subset\mathbb{R}^{3}$ is given and fixed. We shall say that a map $\varphi$ of the form
\begin{equation}\label{universal}
\varphi(v, \omega, R):=a(R)+b\cdot v+c(m|v|^{2}+RJR^{T}\omega\cdot \omega)    
\end{equation}
is a \textbf{universal collision invariant} if and only if every $\mathscr{S}$-collision invariant is of the form \eqref{universal}, no matter the choice of physical scattering family $\mathscr{S}$ one fixes to define the scattering process. It is not \emph{a priori} obvious that the structure of a collision invariant should only depend, essentially, on the linear momentum and the kinetic energy of a gas particle. The case of arbitrary physical scattering families in three dimensions comprised only of \emph{linear} maps will be treated in future work.

As regards universality, the problem of existence of \emph{nonlinear} classical solutions $\sigma_{\beta}$ of the Jacobian equation
\begin{equation*}
\mathrm{det}\,D\sigma_{\beta}=-1    
\end{equation*}
on $\mathbb{R}^{12}$ subject to the algebraic constraints from the conservation laws \eqref{nonspherecolm}--\eqref{nonspherecoke}, together with the semi-algebraic constraint
\begin{equation*}
\sigma_{\beta}[V]\cdot N_{\beta}\geq 0    
\end{equation*}
whenever $V\cdot N_{\beta}\leq 0$ is also of physical and mathematical interest. We subsequently denote families $\{\sigma_{\beta}\}_{\beta}$ of nonlinear physical scattering maps by $\mathscr{N}$. Were they to exist, the study of non-trivial $\mathscr{N}$-collision invariants would require techniques which are very different to those employed in this work in the case of non-canonical scattering. Indeed, as physical scattering families $\mathscr{S}$ of linear maps may be studied using techniques from Lie theory of finite-dimensional matrix groups, the analysis of physical scattering families $\mathscr{N}$ of nonlinear maps would require more general techniques from the theory of differentiable groups.
\subsection*{Acknowledgements}
The author would like to thank Heiko Gimperlein for interesting and illuminating discussions related to the material in this work.

\medskip

\bibliographystyle{siam}
\bibliography{bib}

\end{document}